\pdfoutput=1

\documentclass[a4paper, UKenglish, cleveref, autoref, thm-restate]{lipics-v2021}
\nolinenumbers

\newif\ifDraft
\newif\ifArxiv    
\Arxivtrue
\Draftfalse    

\ifDraft
  \newcommand{\nodari}[1]{{\color{red}\#\#{(N:)\footnotesize{ #1 }}\#\#}}
  \newcommand{\peyman}[1]{{\color{blue}\#\#{(P:)\footnotesize{ #1 }}\#\#}}
  \newcommand{\gerth}[1]{{\color{violet}\#\#{(G:)\footnotesize{ #1 }}\#\#}}
\else
  \newcommand{\nodari}[1]{}
  \newcommand{\peyman}[1]{}
  \newcommand{\gerth}[1]{}
\fi

\usepackage[ruled]{algorithm}
\usepackage[noend]{algpseudocode}
\usepackage{bm} 
\usepackage{mathtools}
\usepackage{cite} 
    
\DeclareMathOperator*{\argmax}{argmax}

\DeclareMathOperator{\sign}{\Call{Sign}{}}

\newtheorem{inv}[theorem]{Invariant}

\newcommand{\OhOf}[1]{\ensuremath{O\!\left(#1\right)}}
\newcommand{\OhOfExp}[1]{\ensuremath{O_E\!\left(#1\right)}}

\newcommand{\ThetaOf}[1]{\ensuremath{\Theta\!\left(#1\right)}}
\newcommand{\OmegaOf}[1]{\ensuremath{\Omega\!\left(#1\right)}}
\newcommand{\dOmega}[1]{\ensuremath{\tilde{\Omega}\!\left(#1\right)}}
\newcommand{\dOh}[1]{\ensuremath{\tilde{O}\!\left(#1\right)}}

\newcommand{\U}{\ensuremath{\mathcal{U}}}
\newcommand{\Q}{\ensuremath{\mathcal{Q}}}

\algrenewcommand\algorithmiccomment[1]{\hfill$\triangleright$ \emph{#1}}

\newcommand{\Ell}{\mathcal{L}}
\newcommand{\I}{\mathcal{I}}
\newcommand{\T}[2][H]{T_{#1}\!\left(#2\right)}
\newcommand{\A}{\mathcal{A}}
\newcommand{\R}{\mathbb{R}}
\newcommand{\Sort}{\textup{\texttt{Sort}}}
\newcommand{\Scan}{\textup{\texttt{Scan}}}
\newcommand{\Split}{\textup{\texttt{Distr}}}
\newcommand{\sort}[1]{\ensuremath{\Sort\!\left(#1\right)}}
\newcommand{\scan}[1]{\ensuremath{\Scan\!\left(#1\right)}}

\newcommand{\Ho}{H^\oplus}
\newcommand{\Hbar}{H'}
\newcommand{\bR}{\mathbf{R}}
\newcommand{\tH}{{H_{+}}}
\newcommand{\imax}{{\hat{\iota}}}
\newcommand{\Trand}{{\bar{T}}}

\ifArxiv
  \hideLIPIcs
  \relatedversion{ICALP 2026 conference paper~\cite{AfshaniBrodalSitchinava2026}}
\else
  \relatedversiondetails{Full version}{https://arxiv.org/abs/2605.09464}
\fi

\title{The Impossibility of Simultaneous Time and I/O Optimality for The Planar Maxima and Convex Hull Problems}

\titlerunning{The Planar Maxima and Convex Hull Problems}

\author
  {Peyman Afshani}
  {Aarhus University, Denmark}
  {peyman@cs.au.dk}
  {https://orcid.org/0000-0001-6102-0759}
  {Supported by DFF (Danmarks Frie Forskningsfond) of Danish Council for Independent Research under grant ID 10.46540/3103-00334B.} 
  
\author
  {Gerth Stølting Brodal}
  {Aarhus University, Denmark}
  {gerth@cs.au.dk}
  {https://orcid.org/0000-0001-9054-915X}
  {Supported by Independent Research Fund Denmark grant~9131-00113B.}

\author
  {Nodari Sitchinava}
  {University of Hawai'i at Mānoa, USA}
  {nodari@hawaii.edu}
  {https://orcid.org/0000-0001-8876-4846}
  {Supported by NSF grant 2432018.}

\authorrunning{P. Afshani, G. S. Brodal, N. Sitchinava}

\Copyright{Peyman Afshani, Gerth Stølting Brodal, Nodari Sitchinava}

\keywords{External Memory model, cache-oblivious algorithms, lower bounds}

\category{Track A: Algorithms, Complexity and Games}

\acknowledgements{The authors want to thank Ronitt Rubinfeld for suggesting this topic.}

\ccsdesc[500]{Theory of computation~Computational geometry}

\EventEditors{Sayan Bhattacharya, Danupon Nanongkai, Michael Benedikt, and Gabriele Puppis}
\EventNoEds{4}
\EventLongTitle{53rd International Colloquium on Automata, Languages, and Programming (ICALP 2026)}
\EventShortTitle{ICALP 2026}
\EventAcronym{ICALP}
\EventYear{2026}
\EventDate{July 7--10, 2026}
\EventLocation{Royal Holloway, University of London, Egham, United Kingdom}
\EventLogo{}
\SeriesVolume{374}
\ArticleNo{115}

\begin{document}

\maketitle

\begin{abstract}
    We prove that no deterministic output-sensitive algorithm for the planar convex hull and maxima problems
can obtain both optimal  time and I/O complexity, where the
optimality is defined with respect to both the input and output sizes.
This explains why the best previous algorithms achieved an optimal I/O bound at the cost of sub-optimal running time (Goodrich et al. [FOCS, 1993]).
To the best of our knowledge, the impossibility of simultaneous optimality was only shown previously for the permutation problem by Brodal and Fagerberg [STOC, 2003].
Our results imply that no optimal deterministic output-sensitive cache-oblivious algorithm exists for either problem. 
In addition, we present simple deterministic algorithms that match our lower bounds 
and that provide a trade-off between time and I/Os. 
On the other hand, a simple modification of our deterministic algorithm results in a randomized algorithm that simultaneously achieves optimal (worst-case) time and optimal expected I/O bounds.

\end{abstract}

\crefname{equation}{Eq.}{Eq.}

\section{Introduction}

We study the planar convex hull problem in the \emph{external memory (EM)}~\cite{EM-model,vitter-io-book} and \emph{cache-oblivious (CO)}~\cite{co-model} settings. 
The surveys on the EM model often mention that 2D convex hull
can easily and optimally be solved in this model via sorting and 
scanning, including the output-sensitive variant~\cite{EM-CH,vitter2001external}.
We report a very surprising discovery 
that the output-sensitive planar convex hull problem cannot be solved optimally both
with respect to the external memory cost and the internal memory computational cost (i.e., running time), 
simultaneously, i.e., by the same algorithm!
Consequently, this means that there is no optimal deterministic cache-oblivious algorithm for the output-sensitive planar convex hull problem. 
We also note that the existing I/O-efficient algorithms for the problem have
substantially sub-optimal (internal memory) computational cost.
We improve this sub-optimality to a very small factor (of an inverse Ackermann-like function) by giving a
very simple algorithm,  but with a non-trivial analysis. 
In addition, we prove lower bounds showing that such sub-optimality is necessary. 
Our lower bound applies even to the simpler problem of computing the maxima of a set of points in 2D~\cite{klp75}, 
a fundamental problem in its own right. See \cref{fig:examples} for examples of the two problems.

To the best of our knowledge, this is only the second time that a ``separation'' between I/O and computational cost has been found, meaning, a problem for which
optimal I/O cost and the optimal computational cost cannot be obtained simultaneously by the same deterministic algorithm but they can be achieved separately by different algorithms.
Such a separation result immediately implies that an optimal cache-oblivious comparison-based
algorithm cannot exist for such problems. This is because an optimal cache-oblivious algorithm performs optimal number of I/Os for all parameters $M$ and $B$, including for $M = \OhOf{1}$ and $B = 1$. However, for these parameters, each I/O can result in at most $\OhOf{M} = \OhOf{1}$ useful (non-redundant) comparisons, i.e., comparisons that reveal new information. Since an optimal algorithm would not perform redundant computation,  the lower bound on suboptimal computation cost (number of comparisons) implies suboptimal I/O complexity for parameters $M = O(1)$ and $B = 1$. 
There are very few instances of such impossibility result for cache-oblivious algorithms~\cite{co-limits,ahz-co-range-reporting,arge-thorup-ram-io-sorting}.
Below, we first present the definitions, as well as the relevant models of computation before presenting a more detailed statement of our results.

\begin{figure}[t]
  \centering
  \includegraphics[height=3cm]{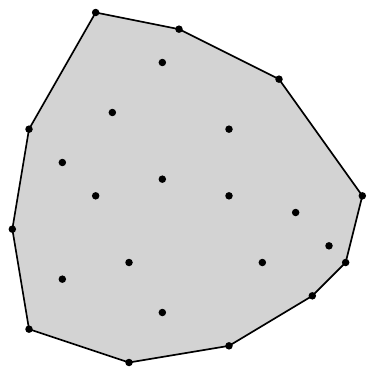}
  \hspace{8em}
  \includegraphics[height=3cm]{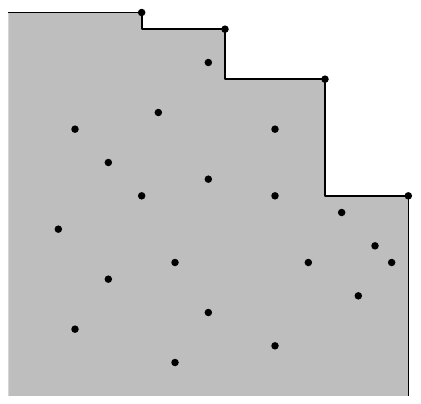}
  \caption{Examples of a convex hull (left) and the maxima (right) of a planar point set.
  We let $N$ and $H$ denote the number of input and output points, respectively.} 
  \label{fig:examples}
\end{figure}

\subparagraph*{Models of computation.} 

The \emph{external memory (EM)} model (a.k.a. the \emph{I/O} model) was  introduced by Aggarwal and Vitter~\cite{EM-model,vitter-io-book} to model the cost of accessing memory in a machine
with two levels of memory: a fast but limited
\emph{internal} memory of size $M$ words and a slow but conceptually unlimited \emph{external} memory. To perform computation, data must reside in the internal memory. Both memories are 
divided into \emph{blocks} of $B$ words and the transfer of data between
the two memories are performed via reading or writing blocks. Each such
transfer defines an \emph{input/output (I/O)} operation and the
\emph{I/O complexity} of an algorithm  is the number of I/Os performed during its execution.
This captures the \emph{cost of memory accesses} and ignores the
computational steps performed in the internal memory (the \emph{running time} in the standard algorithm analysis). 
For convenience we define $m = M/B$ and $n = N/B$.  

However, modern systems consist of more than two levels of memory hierarchy, 
e.g., disk, DRAM, multiple levels of cache, and registers. 
One way to deal with deeper memory hierarchies is via  \emph{cache-oblivious (CO) algorithms}~\cite{co-model,co-model-journal}
which are designed in the classical RAM model (i.e., without using $M$ or $B$ parameters),
but they are analyzed in the \emph{ideal cache} model, a variant of the EM model. 
In particular, during the  analysis, it is assumed that there is a \emph{cache} of $m$ blocks  
and whenever the algorithm accesses an element, a block of size $B$ containing that element is loaded into 
the cache via \emph{an ideal offline paging algorithm}, which makes the optimal choice when deciding which elements to evict
from the cache. 
The ideal paging algorithm can be replaced by 
the \emph{Least Recently Used (LRU)} algorithm as it offers a constant factor bicriteria approximation under a reasonable resource augmentation assumption~\cite{co-model-journal}.
Since the values of $M$ and $B$ are not known to a CO algorithm, if a CO algorithm achieves I/O optimality with respect to an arbitrary choice 
of parameters $M$ and~$B$, it is also optimal for every level of the memory hierarchy.

For many problems, optimal CO and EM algorithms are known.  However,
there are very few actual separation results between EM and CO computations
and we mention them here.  
In the EM model, the problem of sorting a set of $N$ comparable items has the tight
bound of $\ThetaOf{n\log_{m} n}$~\cite{EM-model, vitter-io-book}.
However, Brodal and Fagerberg~\cite{co-limits} showed that no CO comparison-based sorting algorithm can achieve the same
optimal bound unless 
$M = \OmegaOf{B^{1+\varepsilon}}$ for some constant $\varepsilon > 0$. 
This assumption is often known as the \emph{tall cache} assumption.
It is also known that CO data structures for range reporting require asymptotically more
space than their EM counterparts~\cite{ahz-co-range-reporting-conference,ahz-co-range-reporting}. 
Finally, Arge and Thorup~\cite{arge-thorup-ram-io-sorting} specifically studied possible
trade-offs between I/O complexity and time in the word-RAM model, 
highlighting that attaining optimality in both metrics at the same time
can be challenging.

\subparagraph*{Prior results on convex hull computation.}

Given a set $P$ of $N$ points in the plane, the \emph{planar convex hull} problem asks to compute the smallest convex polygon containing $P$.
It can be solved in $\OhOf{N\log N}$ time using many techniques~\cite{andrew-CH,bentley-shamos-CH,graham-CH,preparata-CH,preparata-hong-CH}. 
Graham's scan~\cite{graham-CH} is perhaps the most classical solution which involves sorting and a linear scan using a stack.
The $\OmegaOf{N \log N}$-time lower bound has been proven in various
models~\cite{yao-CH-lb,veb-CH-lb,preparata-hong-CH} and holds even if the
points on the convex hull can be returned in an arbitrary
order~\cite{preparata-hong-CH}. When the size $H$ of the convex hull is small
(e.g., for uniformly random points in a fixed polygon~\cite{dwyer1988convex}),
\emph{output-sensitive} algorithms improve the running time to $\OhOf{N \log H}$~\cite{kirkpatrick-seidel-CH,chan-CH}. Kirkpatrick and
Seidel~\cite{kirkpatrick-seidel-CH} proved a matching $\OmegaOf{N \log
H}$ lower bound in the algebraic decision tree model. 

Obtaining an EM or CO algorithm for the planar convex hull  
is straightforward: sort the points using $\OhOf{n\log_{m}n}$ I/Os via one of the optimal EM or CO sorting algorithms~\cite{EM-model,vitter-io-book,co-model-journal}, followed by the stack-based Graham Scan algorithm~\cite{graham-CH} that requires $\OhOf{n}$ I/Os.

An EM algorithm with the optimal output-sensitive I/O bound for convex hull was  presented by
Goodrich et al.~\cite{EM-CH} and it achieves
$\OhOf{n \log_{m} \frac{H}{B}}$ I/Os but its running time (which was not mentioned) 
is $\OhOf{N\log H + N\log m}$, which is sub-optimal.
Arge and Miltersen~\cite{EM-CH-lb} showed a matching $\OmegaOf{n \log_{m} \frac{H}{B}}$ I/O lower bound 
for the output-sensitive convex hull algorithms.
But surprisingly, there has been no output-sensitive CO algorithm that matches this bound and the best known
algorithm only achieves $\OhOf{n (\log_{m} H + \log\log m)}$ I/Os~\cite{afshani-farzan-CH}.%
\footnote{Note that the tall cache assumption, 
on which the algorithm~\cite{afshani-farzan-CH} relies, 
implies $\log_{m} H = \ThetaOf{\log_{m} \frac{H}{B}}$.}
Interestingly, the extra $\log\log m$ term comes from 
trying to ``guess'' the output size $H$ and, thus, 
the algorithm can be made optimal if $H$ is known.


\begin{table}[t]
  \caption{Previous results and our contributions. 
  $H$ is the size of the convex hull or the number of maxima points, $s$ is an integer parameter, 
  $A_s(\cdot)$ is an Ackermann-like function, $\alpha_s(\cdot)$ is its inverse, $\lambda_s(\cdot)$ is 
      the $s$-th function in its inverse hierarchy (see~\cref{sec:ack} for details), and $\OhOfExp{\cdot}$ denotes an expected complexity bound.
        CO results hold under the tall cache assumption and all results apply to both 2D maxima and 2D convex hull problems.}
  \label{tab:results}

  \renewcommand{\arraystretch}{1.1}
  \tabcolsep=0.5em
  \centering
  \begin{tabular}{ccc@{\hspace{-1.5em}}c} 
    \textbf{Model} & \textbf{Time} & \textbf{I/O Complexity} & \textbf{Notes} \\ \hline
    RAM & $\OhOf{N \log N}$ & - & Classic CH~\cite{andrew-CH,bentley-shamos-CH,graham-CH,preparata-CH,preparata-hong-CH} \\
    RAM & $\ThetaOf{N \log H}$ & - & \cite{kirkpatrick-seidel-CH,chan-CH} \\
    EM  & $\OhOf{N\log N}$ & $\OhOf{n\log_mn}$& Classic I/O \cite{EM-model, co-model-journal, graham-CH} \\
    EM  & $\OhOf{N(\log H + \log m)}$ & $\ThetaOf{n\log_m \frac{H}{B}}$ & \cite{EM-CH,EM-CH-lb} \\
    CO  & $\OhOf{N\log H}$ & $\OhOf{n(\log_m H + \log\log m)}$ & \cite{afshani-farzan-CH} \\
    CO  & $\OhOf{N\log H}$ & $\OhOfExp{n\log_m H }$ & new, randomized \\
    CO  & $\OhOf{N(\log H + \log s)}$ & $\OhOf{n(\log_m sH + \alpha_s(\min\{H, m\}))}$ & new \\
    CO  & $\OhOf{N(\log H + \lambda_s(N))}$ & $\OhOf{n(\log_m H + s)}$ & new \\
    EM/CO & $\OhOf{N\cdot A_s(H)}$ & $\OmegaOf{n\bigl(\alpha_1\bigl(\min\bigl\{H,\sqrt{N/M}\bigr\}\bigr)-s\bigr)}$ & new \\ \hline
  \end{tabular}
\end{table}

\subparagraph*{Our contributions.}

\cref{tab:results} presents a summary of our and previous results.
We develop deterministic CO algorithms for the planar maxima (\cref{sec:maxima-UB}) and convex hull (\cref{sec:cg}) problems 
that can trade-off optimality between time and I/O complexity via a parameter~$s$. 
We prove that no \emph{deterministic} algorithm can obtain both optimal time of $\ThetaOf{N \log H}$
and optimal I/O complexity of
$\ThetaOf{n \log_{m} \frac{H}{B}}$ simultaneously for these two problems (\cref{sec:lowerbound,sec:ch-LB}). Our lower bounds are proven in the comparison-based model for the maxima problem and in a model, where comparisons are generalized to geometric predicates on a constant number of points (formally defined in \cref{sec:ch-LB}), for the convex hull problem.
The lower bound applies to any deterministic algorithm that spends up to $ N \cdot A_s(H)$ time, where $A_s(N)$ is an Ackermann-like function
(formally defined in \cref{sec:ack}) and $s$ is a positive integer parameter.
For example, it applies to all known output-sensitive algorithms which take  $O\left(  N \log H \right) $ time or $O(NH)$ time,
and even to algorithms that take $ O\left( N 2^H \right) $ time, and it gives
an $\OmegaOf{n\cdot \alpha_1\bigl(\min\bigl\{H,\sqrt{N/M}\bigr\}\bigr)}$ I/O lower bound, by setting $s$ 
to a fixed constant value that depends on the constant hidden in the $O(\cdot)$ notation. 
Finally, we show a randomized algorithm that can obtain both optimal worst-case running time and optimal \emph{expected} I/O complexity (\cref{sec:rand}).

Our results imply that there are no optimal deterministic output-sensitive CO algorithms 
or algorithms that have both optimal I/O complexity and optimal running time for these two problems.
The latter was shown only for one other problem before~\cite{co-limits}.
As far as we know, we show the first upper and lower bounds involving inverse
Ackermann-like functions in the area of EM algorithm. 
Finally, our geometric embedding in \cref{sec:ch-LB} gives a technique that
enables us to handle arbitrary geometric predicates involving a constant number of points in 
the context of the classical adversarial argument. 
We find this interesting and believe it could be of independent interest.
For instance, we can get an alternative lower bound of $\OmegaOf{N \log H}$ for the 
output-sensitive convex hull problem.

\subparagraph*{Summary of our approach.}


Our planar maxima algorithm (\cref{sec:maxima-UB}) is a very simple cache-oblivious recursive algorithm that
aggressively prunes the points and  
uses the number of maxima points discovered in recursive
subproblems so far to speed up the subsequent recursions.  
This ultimately leads to a non-trivial recurrence 
(\cref{eq:main-detailed} on page~\pageref{eq:main-detailed}).
\ifArxiv
In \cref{sec:app-main-thm}
\else
In the full version~\cite{full-version} 
\fi
we show that it
solves to an inverse Ackermann-like function. 
A similar approach works for the convex hull problem by replacing the pruning step.
This is done by generalizing the ``bridge finding'' algorithm of 
Kirkpatrick and Seidel~\cite{kirkpatrick-seidel-CH} to an algorithm that can find multiple bridges at the same 
time (\cref{sec:cg}).
However, by randomly switching the order of the recursive calls, we can obtain an algorithm with both
optimal running time and optimal expected I/O complexity (\cref{sec:rand}). 

From a lower bound point of view, the problem is more complicated  and requires more involved techniques.
For the maxima problem, we start in the classical comparison-based adversarial setting where the algorithm can only compare the coordinates
of the points and the adversary decides on the results of the comparisons.
However, we manage to apply this idea in the setting where the input resides in the external memory and only the coordinates of the
points loaded into the internal memory of size $M$ can be compared. 
During this argument, we show that the adversary can give extra information to the algorithm at specific intervals
(\emph{epochs}, defined on page~\pageref{par:epoch})
such that the behavior of the algorithm is simplified: 
among the available ``subproblems'' (captured by the concept of {\em top nodes} in the adversary's binary tree in \cref{sec:lowerbound}) the 
algorithm can choose one subproblem to do one step of the recursion (which creates more subproblems).
The main challenge is to show the lower bound regardless of the choices of the algorithm.
To do that, we develop a potential function analysis and show that the best strategy for an algorithm is to recurse on the 
subproblems with the smallest potential (\cref{sec:mainlb}). 
This analysis leads to the following conclusion: when limited to an amortized budget of $\zeta$ I/Os per element
and a budget of $N\cdot A_s(H)$ total comparisons, 
the algorithm can only ``discover'' $A_{s+2\zeta-1}(1)$ maxima points (\cref{lem:mainlb}). 
This in turn proves the lower bound (\cref{thm:maximalb}).

For the planar convex hull lower bound, we generalize this strategy to work for any geometric predicate:
the algorithm can go beyond comparisons and choose a polynomial~$F$ and a set of $\sigma$ points $p_1, \ldots, p_\sigma$ 
(for some fixed constant~$\sigma$)
and then ask for the sign of the evaluation of $F$ 
on the coordinates of points $p_1, \ldots, p_\sigma$.
To prove a lower bound in this case, we show that the adversarial argument for the maxima problem can be
embedded geometrically in the plane. 
This requires some algebraic geometry ideas: we embed the points close to the $Y=X^\Delta$ curve,  for some fixed constant~$\Delta$,
and designate ``squares'' that represent the potential location of a group of points. 
The squares are conceptually arranged in a tree $T$ of large fan out such that 
for any node $v \in T$ with a square~$Q(v)$, the squares of children of $v$ 
are doubly-exponentially smaller than $Q(v)$ and they are placed equally-spaced inside
$Q(v)$ and centered on the curve $Y=X^\Delta$.
By making the sizes of the square shrink at an appropriate doubly exponential rate, we show that
the geometric predicates (captured by the polynomial $F$) do not provide too much information to the
algorithm (\cref{lem:chresolve}). 
Thus, the lower bound for the maxima problem can be extended to the convex hull problem as well (\cref{thm:chlb}).

\section{Preliminaries}
\label{sec:prelim}

Let $P$ be a set of $N$ comparable elements.
Scanning $P$ while performing $\OhOf{1}$ work per element can be accomplished  in $\ThetaOf{N}$ time and
$\ThetaOf{n}$ I/Os.
Comparison-based sorting of $P$ can be performed cache-obliviously in $\ThetaOf{n\log_{m}n}$ I/Os and 
$\ThetaOf{N\log N}$ comparisons, assuming the cache is tall~\cite{co-model-journal}, i.e., $M\ge B^{1+\varepsilon}$ for some constant $\varepsilon>0$.
We define $\scan{N} = n$ and $\sort{N} = n\log_m n$.

Another fundamental problem is \emph{distribution}, where given a
value $k$, $1 \le k \le N$, the goal is to partition~$P$ into $k$ subsets, $P_1, \dots, P_k$, of
roughly equal size, where each $P_i$ has either $\lfloor \frac{N}{k}\rfloor$ or 
$\lceil \frac{N}{k}\rceil$ elements and all the elements in $P_i$ are larger than or equal to all the
elements in $P_{i-1}$. Each of the resulting subsets~$P_i$ should be stored in contiguous memory.
In the classical comparison-based RAM model, distribution can be solved deterministically in $\ThetaOf{N \log k}$ time,
e.g., by recursive applications of a $\ThetaOf{N}$-time median finding algorithm~\cite{selection}. 
Distribution can be solved cache-obliviously in $\ThetaOf{n\cdot \max\{1, \log_m k\}}$ I/Os and $\ThetaOf{N \log k}$ comparisons~\cite{farzan-MS-thesis,co-model-journal},
assuming a tall cache.
We define $\Split(N,k)=n\cdot \max\{1, \log_m k\}$.

Observe that $\Split(N,k)=n\log_m k$ when $k\ge m$
and $\Split(N,k)=\Scan(N)$ when $k\le m$.
We exploit the following property of the $\Split$ function, which follows from the concavity of logarithms:
If $k=\sum_{i=1}^t k_i$ for some values $k_1, \dots, k_t > m$ and $t \ge m$, then
\begin{align}
  \sum_{i=1}^t \Split\left(N,k_i\right) \le \Split\left(t N,\frac{k}{t}\right) = \Split\left(t N,k\right)-\Split(t N,t).\label{eq:split}
\end{align}
  
\subsection{Ackermann-like Functions and Their Inverses}
\label{sec:ack}

To aid the exposition of the analysis, different papers present different definitions of the recursive functions that are referred to as the \emph{Ackermann} functions~\cite{ackermann-orig,peter-ackermann,robinson-ackermann,sundblad1971ackermann,pettie-ackermann06,pettie-ackermann15,chazelle-ackermann,buck-ackermann,tarjan-ackermann,chazelle-rosenberg-ackermann,fredman-saks-ackermann}. While they differ from the original definition of Ackermann~\cite{ackermann-orig} and are not necessarily equivalent even asymptotically speaking, what they have in common is that they are extremely fast growing and their inverses are extremely slow growing. To aid our exposition, we will use the following functions. 

Let $A_0, A_1, A_2, \dots$ be an infinite sequence of functions, where $A_0(N) = N$, $A_1(N)=2^N$ for any integer $N \ge 1$, and 
$A_{i+1}(N) = A_i^{(N+1)}(N)$, where the notation
$f^{(k)}$  for a function $f$ represents applying $f$ to itself $k$ times,
e.g., $f^{(3)}(N)=f(f(f(N)))$. 

The inverses of $A$ are defined as two distinct functions $\lambda_i(x)$ and $\alpha_N(x)$, where 
$\lambda_i(x)$ is the smallest value $N$ such that $A_i(N) \ge x$ and
$\alpha_N(x)$ is the smallest value $i$ such that $A_i(N) \ge x$.
For example, $\lambda_1(x) = \Theta(\log x)$, $\lambda_2(x) = \Theta(\log^*(x))$ and
in general, $\lambda_i(x)$ is roughly the number of times we need to apply
function $\lambda_{i-1}$ to $x$ to get to a constant; 
therefore, $\lambda_i$ can be thought of as the $i$-th function in the inverse hierarchy.
In contrast, $\alpha_N(x)$ is a much slower growing function:  
 $\alpha_1(x)$ grows slower than any of the
 functions $\lambda_i(x)$ for any fixed $i > 1$. 

While there are many definitions of the Ackermann function, one of the more commonly accepted definitions is 
(the curried version of) the function by P\'eter~\cite{peter-ackermann} and Robinson~\cite{robinson-ackermann}, defined as  
$A_i(N) = A_i^{(N+1)}(1)$, with $A_0(N) = N+1$~\cite{sundblad1971ackermann}. On the other hand, our definition is more similar to the definition of Cormen et al.~\cite{clrs}, albeit with a different base case: they define $A_0(N) = N+1$ and $A_i(N) =  A_i^{(N+1)}(N)$ for all $i \ge 1$, which  implies that $A_1(N) = 2 N+1$ and $A_2(N) = 2^{N+1}(N+1)-1$~\cite[Chapter 19.4]{clrs}. Our definition's base cases $A_0(N) = N$ and $A_1(N) = 2^N$ on the other hand do not have the additional offsets, which are not essential to the demonstration of how fast the Ackermann functions grow, and will make it easier to reason about our upper and lower bounds.  Observe that asymptotically our function is slightly slower growing than that of Cormen et al. but faster growing that that of P\'eter and Robinson. Since our function is not quite the Ackermann function, we will refer to it as \emph{Ackermann-like} function (Cormen et al. also refrain from calling their function Ackermann).

\section{Upper Bound for Planar Maxima}
\label{sec:maxima-UB}

\begin{algorithm*}[t]
  \begin{algorithmic}[1]
    \Procedure{Maxima}{$P, h$} 
    \State \textbf{if} {$|P| \le 1$}  \Comment{Base case} 
      \State \quad Output $P$ and \Return $|P|$
    \State $(P_{2h}, \dots, P_{1}) = \Call{Dist}{P,2h}$\Comment{{Distribute $P$ into $2h$ buckets w.r.t. $X$-coor.}}\label{code:distribute} 
    \State $\Call{Prune}{P_1, \dots, P_{2h}}$ \Comment{For each $P_i$ prune points dominated by points in $\bigcup_{j=1}^{i-1} P_j$}\label{code:prune}
    \State $\Hbar=0$
    \State \textbf{for} {each $P_j \in (P_1, \dots, P_{2h})$}
        \State \quad $H_i = \Call{Maxima}{P_j, h+\Hbar}$
        \State  \quad $\Hbar = \Hbar + H_i$
    \State \Return $\Hbar$
    \EndProcedure
  \end{algorithmic}
  \caption{Algorithm for computing the maxima of a planar point set $P$ with seed $h\ge 1$.} 
  \label{alg:upper}
\end{algorithm*}

Let $P$ be the input set of $N$ points in 2D, listed in an arbitrary order.
The algorithm,  which is presented in \cref{alg:upper}, 
is initially invoked with an integral ``seed''  parameter $h \ge 1$ (not necessarily a constant). 
The choice of the seed
provides a trade-off between the time and I/O complexity of the algorithm. 
At the subsequent recursive invocations, the parameter $h$ will be equal to the initial seed, plus the number of maxima points discovered so far. 

The seed $h$ solves the challenge of not knowing $H$ a priori. 
If we new $H$, we could distribute $P$ into $H$ buckets and it would be easy to show that 
the algorithm would achieve simultaneous optimality.
However, distributing into too many buckets, e.g., $k=H^{\omega(1)}$ buckets, 
results in sub-optimal time of $\omega(N \log H)$.
But if we choose too few buckets, we will have too many recursive levels,
resulting in a sub-optimal I/O bound. 
The total number of discovered output points throughout the computation provides us with a lower bound on~$H$, meaning,
it is always safe to increase the number of buckets as we discover more points and the initial
seed provides us with an initial ``acceleration''. 

We  distribute the points of $P$ into $2h$ buckets of equal size, where $P_1$
contains the rightmost points and $P_{2h}$ contains the leftmost ones.
Next, we remove every point in $P_i$ that is dominated by any of the points in
$P_1, \dots, P_{i-1}$ by a simple scan (see \cref{fig:prune}): 
Process the buckets in order from right to left,
and maintain the maximum $y$-coordinate, $y_{i-1}$, of the points in buckets $P_1, \dots, P_{i-1}$; when processing the next bucket $P_i$, remove
any point in $P_i$ whose $y$-coordinate is smaller than or equal to $y_{i-1}$.
Finally, recurse on each bucket, while increasing $h$ by the number of newly discovered output points in each recursive call. 

\begin{figure}[t]
    \centering
    \includegraphics[scale=0.5]{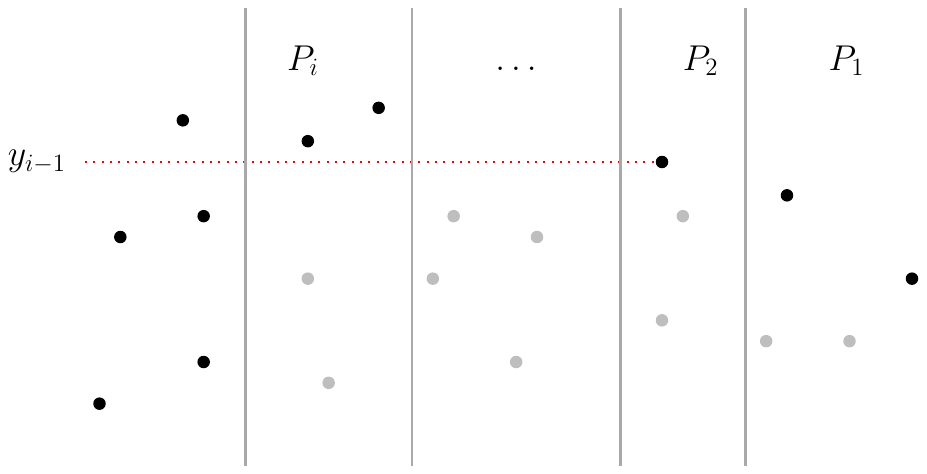}
    \caption{The pruning step: In $P_i$ the points pruned (grey) are those with height at most
    the height $y_{i-1}$ of the highest point to the of right~$P_i$, i.e., the leftmost maximal point in $\bigcup_{j=1}^{i-1} P_j$.}
    \label{fig:prune}
\end{figure}

\begin{restatable}{theorem}{overallIO}\label{thm:overallIO}
For any integer $h \ge 1$,  \textup{$\Call{Maxima}{P,h}$} finds the $H$ maxima points of the input set $P$ of $N$ points in $\OhOf{N \log H + N \log h}$ time and $\OhOf{n \log_m hH + n \cdot \alpha_h(\min\{H,m\})}$ I/Os. 
\end{restatable}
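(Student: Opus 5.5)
Correctness comes first and is easy. A point of $P_i$ is pruned only if some point in a bucket to its right has $y$-coordinate at least as large. Such a point dominates it, with ties handled by the usual convention. Conversely, a maximal point of $P_i$ that is not pruned is maximal in $P$. By induction, the recursive calls therefore output exactly the maxima, and each call returns the number of maxima of its bucket. For the complexity bounds I would unroll the recursion into a tree and charge costs per node. A node with $N_v$ points and seed $h_v$ spends $\OhOf{N_v\log h_v}$ time and $\Split(N_v,2h_v)+\Scan(N_v)$ I/Os on distribution plus pruning. The key structural fact is that after pruning, every nonempty bucket except possibly one contributes at least one maximum. More precisely, every bucket that survives pruning contains its highest point, which is a maximum of $P$. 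So the number of recursive calls with nonempty input at a node is at most the number of maxima found below it.

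\textbf{Time.} Along any root-to-leaf path the seeds are nondecreasing and bounded by $h+H$. The point counts shrink by a factor of $2h_v\ge 2$ per level. Charging $\log h_v$ per level to each point, $\sum_{\text{levels}}\log h_v$ over a path is at most $\log N$ trivially. I need better: $\OhOf{\log H+\log h}$. Since $N_v$ drops by a factor of $h_v$ at each level, a point survives at depth $d$ only inside a bucket of size $N/\prod h_{v_j}$. I would charge the work $N_v\log h_v$ at $v$ against the maxima discovered, splitting into two cases. In the first case $h_v\le 2(h+\Hbar_{\text{global}})$, which means few maxima were found before $v$ and the seed stays at $h$. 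Points then lose a factor of $h$ per level, giving geometric decay. In the second case the seed has grown, and the growth is paid for by discovered maxima. The cleaner route is to bound the total time by a sum over levels $\ell$: the number of points alive at level $\ell$ is at most $\min\{N,\ H_\ell\cdot N/\prod\}$, where $H_\ell$ counts subproblems at that level. This is at most the number of maxima times the bucket size, giving geometric decay once the product of seeds exceeds $hH$. So the levels with $\prod h_{v_j}\le hH$ cost $\OhOf{N\log(hH)}$ in total (a telescoping sum of logs), and deeper levels decay geometrically.

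\textbf{I/Os.} I would apply the same level decomposition, using Eq.~\eqref{eq:split} to merge sibling distributions. Levels where the seeds exceed $m$ cost $n\log_m(2h_v)$ each, and these telescope to $\OhOf{n\log_m(hH)}$ plus geometric decay, exactly as for time. Levels where the seed is below $m$ cost a full $\Scan$ each, so the real issue is how many such "slow" levels a point passes through before the seed reaches $\min\{H,m\}$. This is where $\alpha_h$ enters. Set up the recurrence $T(N,h)$ for the I/Os of a call with seed $h$, and note that the $j$-th child is called with seed $h+\Hbar_j$. While the seeds stay fixed, each level multiplies the reduction factor by $h$. Once a child discovers $k$ maxima, later siblings run with seed about $h+k$. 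The worst case is an adversarial pattern in which the first child of each node is expensive and discovers only a few points. This gives the recurrence in which the number of scan-levels $L(h,x)$ needed to reach $x$ maxima satisfies roughly $L(h,x)\le 1+L(h',\cdot)$ with $h'$ obtained by iterating, mirroring $A_{i+1}(h)=A_i^{(h+1)}(h)$. I would prove by induction on $i$ that after $i+O(1)$ amortized scan passes per point, the seed is at least $A_i(h)$. It follows that $\alpha_h(\min\{H,m\})$ passes suffice before either all maxima are found or the seed exceeds $m$, after which the earlier telescoping applies.

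\textbf{Main obstacle.} The main obstacle is this last inductive step: showing that the amortized scan cost per point obeys the Ackermann-type recurrence. The pruning step guarantees only one maximum per surviving bucket, and the sibling order determines which seeds each point sees. I would first prove the cleaner claim that each point participates in at most $i$ scan-levels in subtrees where the seed lies in $[A_{i-1}(h),A_i(h))$, plus an overall amortized constant. Summing over $i$ up to $\alpha_h(\min\{H,m\})$ then gives the additive term $n\cdot\alpha_h(\min\{H,m\})$.
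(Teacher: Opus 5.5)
There is a genuine gap, in the part you yourself call the main obstacle; the rest largely holds up. Your correctness argument is fine. Your time bound can be made rigorous if the shallow/deep split is done per node rather than per level, because the product of seeds depends on the path. Charge each point $\log(2h_v)$ only at nodes $v$ whose proper ancestors $u$ satisfy $\prod_u 2h_u\le 2hH$; this telescopes to $O(\log(hH))$ per point. The remaining nonempty nodes, grouped by their depth below the topmost deep node on their path, form families of pairwise disjoint nodes. Each family has at most $H$ members (each contains a distinct maximum) and sizes shrinking geometrically, so these nodes hold $O(N/h)$ points in total. This is more direct than the paper's induction via the concavity inequality \eqref{eq:split}, and it also covers the nodes with $2h_v\ge m$.

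The gap is the whole content of the $n\cdot\alpha_h(\min\{H,m\})$ term. You never bound $\sum_v \Scan(N_v)$ over the nodes with $2h_v<m$, and the invariants you propose are not the right ones. Seeds need not grow along a path: the rightmost bucket always inherits its parent's seed. So ``after $i+O(1)$ amortized passes the seed is at least $A_i(h)$'' cannot serve as an induction hypothesis. Your per-point claim is false. Take $h=1$ and let the two rightmost points of $P$ be the only maxima. The rightmost spine then stays in the regime $h_v=1<2=H_v$ for $\Theta(\log N)$ levels, and both maxima pass through all of them. Only an amortized statement can hold, and that amortization is exactly what is missing. Your worst-case picture also points the wrong way: a child finding at most as many maxima as its seed is cheap (base case $h\ge H$), and one child holds only a $1/(2h)$ fraction of the points.

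The missing idea is a potential that depends on a subproblem's output size relative to its seed. The paper shows that in the regime $h+H<m$ the cost is at most $(c_0+\gamma(H,h))\Scan(N)$ for any $\gamma$ satisfying $\gamma(H,h)\ge 1+\frac{1}{2h}\sum_{i=1}^{2h}\gamma\bigl(H_i,\,h+\sum_{j<i}H_j\bigr)$. Averaging over equal-size buckets is what absorbs long spines. It then verifies this recurrence for $\gamma(H,h)=0$ when $H\le h$ and $\gamma(H,h)=2k+2$ when $A_k(h)<H\le A_{k+1}(h)$, by a pigeonhole dichotomy over the $2h$ children.

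\begin{itemize}
\item Either at least $h$ children satisfy $H_i\le A_k\bigl(h+\sum_{j<i}H_j\bigr)$. They then have potential at most $2k$, so the average is at most $2k+1$, which pays for the $+1$.
\item Or there are $h+1$ children $i_1<\dots<i_{h+1}$ with $H_{i_j}>A_k\bigl(h+\sum_{j'<j}H_{i_{j'}}\bigr)$. Chaining these gives $H>A_{k+1}(h)$, a contradiction. For $k\ge 1$ the chain uses $H_{i_j}>A_k(H_{i_{j-1}})$; for $k=0$ the accumulated seed at least doubles each step.
\end{itemize}

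So the Ackermann iteration comes from $h+1$ expensive siblings, each boosting the seeds of the later ones, and this yields $\gamma(H,h)=2\alpha_h(H)$. Without such an argument, your proposal does not establish the $\alpha_h$ term.
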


\ifArxiv
We present the proof of \cref{thm:overallIO} in \cref{sec:app-main-thm}. 
\else
\begin{proof}
\ifArxiv
\section{Proof of Theorem~\ref{thm:overallIO}}
\label{sec:app-main-thm}

Here we prove the main upper bound theorem, restated below.

\overallIO*
\fi

Observe that other than the recursive calls, the complexity of each level of
the recursion of the algorithm is dominated by the cost of the distribution
step, which is $\OhOf{\Split(N,2h)} = \OhOf{\Split(N,h)}$.

Let $\T{N, h }$ denote the I/O complexity of $\Call{Maxima}{P, h}$, where $|P| = N$ and $H$ is the number of maxima points in $P$. 
Let $H_i$ denote the number of maxima points of $P_i$ (computed during the recursive call on $P_i$). 
Observe that the pruning step can always be performed with a simple scan and, thus, the distribution cost will 
always dominate the cost of all the other operations at every recursive level. 
\ifArxiv
This means that the I/O complexity of the algorithm can be bound by the following recurrence, where
$c$ is some fixed constant.
\begin{align}
    \T{N, h}  \le \sum_{i = 1}^{2h} \T[H_i]{\frac{N}{2h}, h+\sum_{j=1}^{i-1} H_j} + c \cdot {\Split(N,h)}.   \label{eq:main}
\end{align}

It turns out that to analyze the above recurrence, we first need to identify two base cases. 
This is done by the following lemma.

\begin{restatable}{lemma}{lemrec}\label{lem:recurrence}
\fi
  For any $h \ge 1$, we have the following recurrence:
  \begin{align}
  \T{N, h} \le \left\{
    \begin{array}{ll}
        c_0 \cdot {\Split(N,h)}  & \text{ if } h \ge H, \\
        c_1 \cdot {\Split(N,H)}  & \text{ if } H > h \ge m, \\
        \sum_{i = 1}^{2h} \T[H_i]{\frac{N}{2h}, h+\sum_{j=1}^{i-1} H_j} + c \cdot {\Split(N,h)}   & \text{ if } h < \min(H,m),
  \end{array}
  \right. \label{eq:main-detailed}
  \end{align}
  where $c_0$, $c_1$, and $c$ are some positive constants. 
\ifArxiv
\end{restatable}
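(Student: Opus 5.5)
The plan is to establish the three cases of \cref{eq:main-detailed} separately. The third case ($h<\min(H,m)$) is just \cref{eq:main} restated, so it needs no new argument. The real work is in the two base cases, where we unroll the recursion directly instead of using the generic recurrence. Throughout we use that each recursive call only increases the seed: a call made with seed $h$ passes seed $h+\sum_{j<i}H_j \ge h$ to its children. We also use that $\Split(N,k)$ is nondecreasing in $k$.

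\emph{Case $h\ge H$.} Every call in the recursion tree has seed $h'\ge h\ge H$. The subproblem on $P_i$ has at most $H$ maxima surviving the pruning, since a maximal point of a pruned bucket is a maximal point of $P$. The top-level distribution costs $\Split(N,h)$ and produces $2h$ buckets of size $N/(2h)$. At the next level the total cost is $\sum_i \Split(N/(2h),h_i)$ with $h_i\le h+H\le 2h$, which is at most $2h\cdot\Split(N/(2h),2h)=O(\Split(N,h))$. So level $\ell$ costs $O(n\max\{1,\log_m h\})$. That alone would give $\log_{2h}N$ levels, which is too many, so the argument must use pruning.

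The key observation is that after the first distribution into $2h\ge 2H$ buckets, at most $H$ buckets contain a maximal point. Every other bucket is pruned to the empty set, because a bucket with no maximum of $P$ has all its points dominated by points to its right. Hence only at most $H\le h$ buckets of size $N/(2h)$ survive, and the surviving mass is at most $N/2$. Iterating, the surviving input size halves (at least) per level. The level costs then form a geometric series $\sum_\ell \Split(N/2^\ell,\cdot)$, and this sums to $c_0\Split(N,h)$ once we handle the $\max\{1,\cdot\}$ and the fact that the seeds stay at most about $2h$. The point needing care is that seeds grow to $h+H\le 2h$, so distribution degrees stay $O(h)$ and $\log_m$ of them is $O(\log_m h)$. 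Small pieces with $N'<B$ cost $O(1)$ I/O each; we charge these to the at most $H$ surviving buckets, or argue via the standard cache-oblivious accounting that once a subproblem fits in memory its whole recursion costs $O(N'/B+1)$.

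\emph{Case $H>h\ge m$.} Here the plan is to show the seed grows quickly enough that the total cost telescopes into $\Split(N,H)$. Consider the top-level call with seed $h\ge m$. Its children have seeds $h_i=h+\sum_{j<i}H_j$, and each child is either in the first case ($h_i\ge H_i$, cost $c_0\Split(N/2h,h_i)$) or again in the second case (since $h_i\ge h\ge m$). We prove by induction on $N$ that $T_H(N,h)\le c_1\Split(N,H)$ when $H>h\ge m$. Apply the inductive hypothesis or the first case to each child, bounding the cost by $c'\Split(N/(2h),\max(h_i,H_i))\le c'\Split(N/(2h),h+H)$. Summing over $2h$ children and using \cref{eq:split} with $t=2h\ge m$ gives the children's total as $\Split(N,(h+H)/(2h)\cdot\ldots)$. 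More precisely, we want
\[
\textstyle\sum_i \Split(N/(2h),H_i') \le \Split\bigl(N,\sum_i H_i'/(2h)\bigr),
\]
where $H_i'$ is the local maxima count on the subproblems that recurse nontrivially. Since $\sum_i H_i=H$, this yields roughly $\Split(N,H)-\Split(N,2h)$, and the saved term $\Split(N,2h)$ pays for the current level's $c\Split(N,h)$ when $c_1\ge c$.

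The main obstacle is that children in the first case are charged against $h_i$ rather than $H_i$, and \cref{eq:split} requires all $k_i>m$. Buckets with $H_i=0$ cost only a scan of an empty set. Buckets with $H_i\le h_i$ must be charged as $\Split(\cdot,h_i)$, where $h_i$ could exceed $H_i$. I would first try to bound $\Split(N/2h,h_i)\le\Split(N/2h,2H)$ and note these still fit into a $\log_m(H)$ total. That gives $c_1\Split(N,H)$ with an additive $O(n)$ term, which we absorb because $H>m$ implies $\Split(N,H)\ge n$. Choosing $c_1$ large relative to $c$ and $c_0$ then closes the induction.
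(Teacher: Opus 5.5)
Your handling of the third case (it is just \cref{eq:main}) and of the case $h\ge H$ is fine and essentially the paper's argument. At most $H\le h$ of the $2h$ buckets contain a maximal point and the others are pruned away. The paper turns this into an induction giving $2c\cdot\Split(N,h+H)\le 4c\cdot\Split(N,h)$, which amounts to your halving geometric series.

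The gap is in the case $H>h\ge m$, at the step that is supposed to close the induction on $N$. You correctly identify the obstacle. A child handled by the first case is charged $c_0\Split(N/(2h),h_i)$ with $h_i$ unrelated to $H_i$, so \cref{eq:split} with $t=2h$ over all children is unavailable. It can only be applied to the set $\I$ of children still in the second case, and only when $|\I|\ge m$.

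Your patch does not work as an inductive step. It bounds the first-case children by $\Split(N/(2h),2H)$ and then absorbs a leftover additive term because $\Split(N,H)\ge n$. The term that must be paid is the current level's $c\cdot\Split(N,h)$, which is $\Theta(n\log_m h)$, not $O(n)$. With the hypothesis $T\le c_1\Split(\cdot,H)$ held fixed, you must exhibit a matching negative term. Absorbing it into $\Split(N,H)$ instead raises the constant at every level of the recursion.

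Neither source of savings suffices on its own. If only a few children are in the first case, the gap between $c_1$ and $c_0$ applies to a tiny fraction of the mass. If fewer than $m$ children are in $\I$, concavity saves nothing.

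The missing idea is the paper's dichotomy on $|\I|$:
\begin{itemize}
\item If $|\I|\le h$, at least half of the mass lies in first-case children. Using $h+H<2H$, their total cost is at most $2c_0\Split\bigl(N(2h-|\I|)/(2h),H\bigr)$. With $c_1>2c_0+2c$, the difference from $c_1$ saves at least $\frac{c_1-2c_0}{2}\Split(N,H)\ge c\cdot\Split(N,h)$.
\item If $|\I|>h\ge m$, apply \cref{eq:split} only to the children in $\I$, with $t=|\I|$. This saves $c_1\Split\bigl(N|\I|/(2h),|\I|\bigr)\ge c_1\Split(N/2,h)$, which pays for $c\cdot\Split(N,h)$.
\end{itemize}
Since one of the two always holds, $c_1$ stays fixed and the induction closes. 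Without this case split, your sketch does not establish $T_H(N,h)\le c_1\Split(N,H)$.
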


Identifying the base cases allows us to focus on the main part of the
analysis which is done in \cref{sec:rec}. 
We will shortly present the proof of \cref{lem:recurrence} 
but we note that 
the lemma can be easily used to bound the time as well.
Notice that 
\else
We show that this recurrence solves to the claimed I/O bound in the full version~\cite{full-version}. We just note that 
\fi
the recurrence that bounds the time 
is exactly the same 
as the one described by \cref{eq:main-detailed},
except the additive term $c\cdot \Split(N,h)$ 
is replaced by  $c N \cdot \log h$ -- the number of comparisons required to perform distribution into $h$ buckets.
But this is equivalent to
setting $m=2$ in the definition of $\Split(N,h)$.
Observe that when $m=2$, the third case 
in \cref{eq:main-detailed}
cannot occur, meaning, the
time can be upper bounded by simply adding up the bounds in the first two branches of \cref{eq:main-detailed}, leading to 
\ifArxiv
the following result:
\begin{theorem}\label{thm:time}
    \textup{$\Call{Maxima}{P, h}$} runs in $\OhOf{N \log h+ N\log H}$ time.
\end{theorem}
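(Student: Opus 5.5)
The plan is to prove \cref{thm:time} from the time analogue of \cref{eq:main-detailed}. In that recurrence the additive term $c\cdot\Split(N,h)$ is replaced by $cN\log h$, which is the same as $\Split$ with $m=2$ (and $B=1$). I would first re-establish the two base cases for this setting. The argument is the one behind \cref{lem:recurrence}, instantiated with $m=2$. With $m=2$ the condition $h<\min(H,m)$ forces $h<2$, so $h=1$. To avoid even that residual case, I would note that the call with $h=1$ distributes into $2$ buckets at cost $O(N)$. After the first bucket returns, every later call has seed at least $2$. So the third branch is reached only in a constant-depth prefix of size $O(N)$, and after that only the first two branches of \cref{eq:main-detailed} occur.

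\textbf{Case $h\ge H$.} Here I would show the total time is $O(N\log h)$. Recursive calls happen only on nonempty buckets that survive pruning. The key claim is that after pruning, at most $H+1$ of the $2h$ buckets contain non-maximal points worth recursing on beyond trivial cost. Fix a level of the recursion. The buckets at that level are disjoint $x$-slabs. A bucket survives pruning only if it contains a point above everything to its right, and such a bucket contains the leftmost maximal point of its suffix union, so it contains at least one global maximum. Hence the number of nonempty buckets per recursive call is at most $H_{\text{call}}$, where $H_{\text{call}}$ is the number of maxima in that call's input.

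Bucket sizes shrink by the factor $2h'\ge 2h$ per level, and seeds only increase. So the cost at recursion depth $d$ is at most $\sum_{\text{calls}} |P_{\text{call}}|\log h'$ with $|P_{\text{call}}|\le N/(2h)^d$. The number of calls at depth $d$ is at most $H$ (each contains a distinct maximum, since calls at one depth are disjoint). The seeds $h'$ are at most $h+H\le 2h$. The depth-$d$ total is therefore $O(H\cdot N(2h)^{-d}\log h)$ for $d\ge 1$, plus $O(N\log h)$ at the root. Summing the geometric series over $d$ and using $H\le h$ gives $O(N\log h)$.

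\textbf{Case $H>h$.} Here I would show the total time is $O(N\log H)$, following the second base case, whose proof I expect mirrors \cref{lem:recurrence}. The main obstacle is controlling the seeds. A bucket's seed is $h+$ (maxima found so far), which is at most $h+H\le 2H$, so every distribution costs at most $O(|P_{\text{call}}|\log H)$ per point. The depth is the issue: with small seeds, the first levels could be deep.

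I would handle depth by charging. Consider a call with seed $h'<H_{\text{call}}$. Its buckets have size $N/(2h')$. The later buckets in the loop get seed at least $h'+(\text{maxima already found})$. Once at least $h'$ maxima have been found, subsequent buckets behave with a seed at least doubled. More directly, I would bound the total work by $\sum_{\text{levels}}\sum_{\text{calls}}|P_{\text{call}}|\log h'$. Calls whose seed $h'$ is at least their own $H_{\text{call}}$ fall into the first case, costing $O(|P_{\text{call}}|\log h')$ with $h'\le 2H$. For calls with $h'<H_{\text{call}}$, the sizes of such calls along any root-to-leaf path shrink by the factor $2h'\ge 2$ per level, while the per-point costs $\log h'$ telescope. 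Each level on a path divides the size by $2h'$, so the sum along a path of $\log h'$ over levels where the subproblem still has at least $2$ maxima is at most $O(\log H)$ plus $O(\log(\text{size reduction}))$.

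To make this rigorous, I would charge each point's cost $\log h'$ at a level to the factor $2h'$ by which the number of distinct maxima-carrying slabs containing that point's region is refined. The number of maxima in a point's subproblem cannot be refined more than $H$-fold in total, except at levels where the subproblem has at most one maximum; there the call is in the first case and costs $O(\log h')=O(\log H)$ once. This gives $O(\log H)$ per point, hence $O(N\log H)$.

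Combining the two cases gives $O(N\log h+N\log H)$. I expect the charging argument in the case $H>h$, bounding the sum of $\log h'$ along a point's recursion path by $O(\log H+\log h)$, to be the main difficulty.
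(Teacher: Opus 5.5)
\textbf{Verdict.} Your $h\ge H$ case is sound, but your argument for the case $H>h$ does not work as stated: it relies on a per-point bound that is false.

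\textbf{What is fine.} In the case $h\ge H$, every surviving bucket contains a global maximum, so there are at most $H$ nonempty calls per depth. Sizes shrink by at least $2h$ per level and seeds stay at most $h+H\le 2h$, so the level sums give $O(N\log h)$. You are also right that with $m=2$ the seed $h=1$ can still fall into the third branch, which the paper says cannot happen. However, the seed-$1$ calls form a chain through the first buckets of depth up to $\log N$, not a constant-depth prefix. The chain costs $O(N)$ only because sizes halve along it.

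\textbf{The gap.} Your charging argument for $H>h$ rests on two claims. The first is that each level refines the number of maxima in a point's subproblem by a factor comparable to $2h'$. The second is that the sum of $\log h'$ over the calls containing a fixed point is $O(\log H+\log h)$. Both are false. Take $h=2$ and arrange that in every call along one path:
\begin{itemize}
\item the rightmost bucket holds exactly one maximum;
\item the second bucket holds all remaining maxima and loses nothing to pruning;
\item all other buckets are pruned entirely.
\end{itemize}
Then the maxima count drops by only one per level, and the seeds are $h_k=k+2$. With $H\approx 3\log N/\log\log N$, the path reaches depth $D=\Theta(\log N/\log\log N)$ before the subproblem size runs out. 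The points surviving to that depth pay $\sum_{k<D}\log(2k+4)=\Theta(\log N)$, whereas $\log H+\log h=\Theta(\log\log N)$. The total time on this instance is still $O(N)$, but only because subproblem sizes shrink geometrically down the path. The bound holds in aggregate, not per point, so no per-point charging of this kind can prove it.

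\textbf{What is missing.} You need the amortized induction from the second base case ($H>h\ge m$) of the paper's recurrence lemma, instantiated with $m=2$. This is exactly how the paper gets the time bound. Your opening sentence for this case points there, but you then replace it with the charging argument; invoking it directly would give the paper's proof.

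The induction proves $T_H(N,h)\le c_1N\log H$ for $H>h\ge 2$. Let $\mathcal{I}=\{i : H_i>h+\sum_{j<i}H_j\}$.
\begin{itemize}
\item Buckets in $\mathcal{I}$ are bounded inductively. Concavity of $\log$, together with $\sum_{i\in\mathcal{I}}H_i\le H$, leaves slack $c_1\frac{N|\mathcal{I}|}{2h}\log|\mathcal{I}|$.
\item Buckets outside $\mathcal{I}$ fall into the first case. They cost $O\bigl(\frac{N}{2h}\log H\bigr)$ with a constant smaller than $c_1$.
\item If $|\mathcal{I}|>h$, the concavity slack is already $\Omega(N\log h)$.
\item Otherwise at least $h$ buckets lie outside $\mathcal{I}$. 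Their constant-factor saving is $\Omega(N\log H)\ge\Omega(N\log h)$.
\end{itemize}
In either case the $cN\log h$ distribution cost is paid for. This is exactly the amortization that a per-point argument cannot capture.
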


In \cref{recurrence-basecase1,recurrence-basecase2}, we first analyze the base cases of~\cref{eq:main} by proving the first two cases in \cref{lem:recurrence}.

\subsection{\texorpdfstring
    {The Base Case $h \ge H$ of Equation~(\ref{eq:main})}
    {The Base Case h ≥ H of Equation~(\ref{eq:main})}
}
\label{recurrence-basecase1}

First, let's consider the case when $h\ge H$. 
In this case, at most $h$ of the subsets $P_{2h},
\dots, P_{1}$ can contain a maxima point, i.e.,
at least $h$ subsets contain no maxima points. 
Observe that during the pruning step, any subset that does not contain any maxima
point will be completely pruned in line~\ref{code:prune} of \cref{alg:upper} for the following simple reason:
Let $p_i$ be the point in $P_i$ with the largest $y$-coordinate, $y_i$, and 
let $Y_{i-1}$ be the largest $y$-coordinate of a point in $P_{i-1}, \dots, P_{1}$.
Observe that if $y_i \ge Y_{i-1}$ then $p_i$ is a maxima point as it has the largest $y$-coordinate
among all the points in $P_{i}, \dots, P_1$ which is a contradiction.
Thus, $p_i$ and, consequently, all points in $P_i$ will be pruned by our pruning step
since their $y$-coordinates lie below $Y_{i-1}$. 

Without loss of generality, we can assume that the last $h$ subsets have no points in them.
In this case, we have the recurrence
\[\T[H]{N,h} \le \sum_{i=1}^{h} \T[H_i]{\frac{N}{2h}, h + \sum_{j=1}^{i-1}H_j} + c \cdot \Split(N,h),
\] 
for some constant $c$. 
Since each $H_i \le H$, i.e., $h$ remains no smaller than each $H_i$ in every
subsequent recursive call, we can use a simple induction to show that
\[
    \T[H]{N,h} \le 2c \cdot \Split(N,h+H) < 4c \cdot \Split(N,h)
\]
and, thus, we can pick $c_0=4c$ in~\cref{lem:recurrence}.

\subsection{\texorpdfstring
    {The Base Case $H > h \ge m$ of Equation~(\ref{eq:main})}
    {The Base Case H > h ≥ m of Equation~(\ref{eq:main})}
}
\label{recurrence-basecase2}

    Next, consider the case when $H > h \ge m$. 
    To ease the presentation, define $\tH_i = \sum_{j=1}^{i-1}H_j$.
    Assume inductively that $\T[H]{N',h'} \le c_1\cdot \Split(N',H)$ for all $N' < N$ and $H > h' > h > m$.   
    Let $\I \subseteq \{1,\dots, 2h\}$ be a subset of indices, such that for each $i \in \I$, $H_i > h+\tH_i \ge m$
    and, thus, for the remaining indices $i \not \in \I$ we have $h + \tH_i \ge H_i$.
    Then 
    \begin{align*}
      \T[H]{N, h} &\le \sum_{i=1}^{2h} \T[H_i]{\frac{N}{2h}, h+\sum_{j=1}^{i-1} H_j} + c\cdot \Split(N,h) \\
       &\le \sum_{i\in\I} \T[H_i]{\frac{N}{2h}, h+\tH_i} + \sum_{i\not\in\I} \T[H_i]{\frac{N}{2h}, h+\tH_i} +c\cdot \Split(N,h) \\
       & \le \sum_{i\in \I} \underbrace{c_1 \cdot \Split\left(\frac{N}{2h}, H_i\right)}_{\text{by Inductive Hypothesis}} + \sum_{i\not\in\I} \underbrace{c_0 \cdot \Split\left(\frac{N}{2h}, h+\tH_i\right)}_{\text{by \cref{recurrence-basecase1}}}+ c \cdot \Split(N,h) \\
       & \le c_1 |\I| \cdot \Split\left(\frac{N}{2h}, H\right) - c_1 |\I| \cdot \Split\left(\frac{N}{2h}, |\I|\right) \tag{\text{by~\cref{eq:split}}} \\
       & \hspace{2em} + c_0 (2h-|\I|)\cdot  \Split\left(\frac{N}{2h}, h+H\right)+ c \cdot \Split(N,h)  \tag{\text{$\tH_i < H$}} \\
       & \le c_1 \cdot  \Split\left(\frac{N|\I|}{2h}, H\right) - c_1 \cdot \Split\left(\frac{N|\I|}{2h}, |\I|\right) \\
       & \hspace{2em} + 2c_0  \cdot \Split\left(\frac{N(2h-|\I|)}{2h}, H\right)+ c \cdot \Split(N,h)   \tag{\text{$h+H < 2H$}} \\
       & = c_1 \cdot  \Split\left(N, H\right) - c_1 \cdot \Split\left(\frac{N|\I|}{2h}, |\I|\right)  \\
       & \hspace{2em} - (c_1-2c_0) \cdot  \Split\left(\frac{N(2h-|\I|)}{2h}, H\right)+ c \cdot \Split(N,h).
    \end{align*}

    Set $c_1 > 2c_0 +2c$ and consider the following two cases:
    \begin{itemize}
      \item Case: $|\I| \le h$
        \begin{align*} \T[H]{N, h} & \le c_1 \cdot \Split\left(N, H\right) - c_1 \cdot \Split\left(\frac{N|\I|}{2h}, |\I|\right) \\
        & \hspace{2em} - (c_1-2c_0) \cdot  \Split\left(\frac{N}{2}, H\right)+ c \cdot \Split(N,h) \\
        & \le c_1 \cdot \Split\left(N, H\right) - c_1 \cdot \Split\left(\frac{N|\I|}{2h}, |\I|\right) \\
        & \hspace{2em} - \frac{(c_1-2c_0 -2c)}{2} \cdot  \Split\left(N, H\right) \\
        & \le c_1 \cdot \Split\left(N, H\right).
        \end{align*}
      \item Case: $|\I| > h$
        \begin{align*} \T[H]{N, h} & \le c_1 \cdot \Split\left(N, H\right) - c_1 \cdot \Split\left(\frac{N}{2}, h\right) \\
        & \hspace{2em} - (c_1-2c_0) \cdot  \Split\left(\frac{N(2h-|\I|)}{2h}, H\right)+ c \cdot \Split(N,h) \\
        & \le c_1 \cdot \Split\left(N, H\right) - \frac{c_1 -2c}{2} \cdot \Split\left(N, h\right) \\
        & \hspace{2em} - (c_1-2c_0) \cdot  \Split\left(\frac{N(2h-|\I|)}{2h}, H\right) \\
        &\le c_1 \cdot \Split\left(N, H\right).
        \end{align*}
    \end{itemize}

\subsection{\texorpdfstring
    {Solving the Main Recurrence ($h < \min\left\{H, m\right\}$) of Equation~(\ref{eq:main})}
    {Solving the Main Recurrence (h < min \{H, m\}) of Equation~(\ref{eq:main})}
}
\label{sec:rec}

In this section we will solve the main part of the recurrence in \cref{lem:recurrence}, i.e., when $h < \min\left\{ H,m \right\}$.
Again, let $\tH_i = \sum_{j=1}^{i-1} H_j$. 

Observe that $h<m$ and $h+H\ge m$ 
happens only once at every level of the recursion; in particular,
consider $T_H(N,h)$ and its recursive calls. 
Consider the first index $i$ such that at the recursive call $T_{H_i}(\frac{N}{2h},h+\tH_i)$
we have $H_i+h+ \tH_i \ge m$ and observe that for the subsequent calls, $T_{H_j}(\frac{N}{2h},h+\tH_j)$, $j>i$,
we will be in one of the base cases, i.e., we will have $h+\tH_j > m$.
As the problem size decreases geometrically by the recursion level,
the total cost of these recursive calls is $\OhOf{\Scan(N)}$.
The remaining recursive calls are when $h+H < m$ which will be the assumption in the rest of the argument. 

We will guess that the recursion in \cref{lem:recurrence} solves to the following:
\begin{align}
    T_H(N,h) \le (c_0+ \gamma(H,h))\cdot \Scan(N) \quad\quad\quad \mbox{when $h+H<m$.}\label{eq:gammaguess}
\end{align}

Note that we have already covered the case when $h \ge H$ in \cref{lem:recurrence} which means we 
can define $\gamma(H,h) = 0$ for when $h\ge H$.
So let us focus on $h > H$.  
Observe that since $h+H<m$, the condition $h+\tH_i < m$ also holds for each recursive subproblem and, therefore, $\Split(x,h+\tH_i) = \Split(x,h) = \Scan(x)$ for any $x>1$. 
Therefore by plugging \cref{eq:gammaguess} for the subproblems we get that for the function $\gamma(\cdot,\cdot)$ to be the solution to the
recurrence in \cref{lem:recurrence} we must have
\[
    \T{N, h} \le \sum_{i=1}^{2h} \left( c_0+c \gamma(H_i,h+\tH_i)\right)\cdot \Scan\left( \frac{N}{2h} \right)+ c\cdot\Scan(N) \le (c_0+ \gamma(H,h))\cdot \Scan(N).
\] 
And since $\Scan\left( \frac{N}{2h} \right) = \frac{\Scan(N)}{2h}$, we get the following recurrence for the function $\gamma$:
\begin{align}
  \gamma(H,h) \ge 1 +  \frac{\sum_{i=1}^{2h} \gamma(H_i,h+\tH_i)}{2h}.\label{eq:grec}
\end{align}

Thus, any function $\gamma(\cdot,\cdot)$  that satisfies the above recurrence can be plugged in \cref{eq:gammaguess} and, thus, it can be used
as an upper bound for $T_H(N,h)$.
Below, we show that this $\gamma$ function is an inverse Ackermann-like function.

We guess that the function is a step function defined as below:
 \begin{align}
     \gamma(H,h) = \left\{
         \begin{array}{ll}
             0 & H\le h \\
             2& h < H \le \beta_1(h) \\
             4& \beta_1(h) < H \le \beta_2(h) \\
             6& \beta_2(h) < H \le \beta_3(h) \\
             \ldots & \ldots \\
             2i & \beta_{i-1}(h) < H \le \beta_i(h) \\
             \ldots & \ldots \\
         \end{array}
       \right. \label{eq:step}
 \end{align}
where $\beta_i$s are a finite number 
of \emph{boundary} parameters that are non-decreasing functions of $h$, 
and increasing functions of $i$.
The idea is that if we can choose the 
$\beta$ functions to satisfy the recurrence of 
\cref{lem:recurrence} then we will have a solution for the recurrence which will be given by the inverses of the
$\beta$ functions.
As we shall see shortly, 
it turns out that we can pick them to be our Ackermann functions.
In particular, picking
$\beta_1(h) = 2^h$, and $\beta_i(h) = A_i(h)$ provide a solution to the recurrence.
We will show this below.

Consider the case when $\beta_k(h) < H \le \beta_{k+1}(h)$ which implies
$\gamma(H,h) = 2k + 2$, for some integer $k\ge 0$.
Let $\gamma_i = \gamma(H_i,h+\tH_i)$.
Observe that for every $i$  we have
\begin{align}
    H_i \le  H \le \beta_{k+1}(h) \le \beta_{k+1}(h+\tH_i), \label{eq:betaub}
\end{align}
which implies 
\[
    \gamma_i = \gamma(H_i,h+\tH_i) \le 2k+2.
\]
Let $\I$ be the subset of indices $\{1, \dots, 2h\}$ such that $H_i \le \beta_k\left( h + \tH_i\right)$, i.e., $\gamma_i \le 2k$. 
We consider two cases: $|\I| \ge h$ and $|\I| < h$. 
We will define $\beta_i$s, such that the second case never happens.

\subparagraph*{Case (1) $|\I| \ge h$.}

Observe that for every index $i \in \I$, we have $\gamma_i \le 2k$ by the definition and 
the other $\gamma$ functions are upper bounded by $2k+2$ by \cref{eq:betaub}.
Thus,
\[
      1 + \frac{\sum_{i=1}^{2h} \gamma_i}{2h} 
  =   1 + \frac{\sum_{i \in \I}^{2h} \gamma_i + \sum_{i \not \in \I}^{2h} \gamma_i}{2h}  
  \le 1 + \frac{ 2k|\I| + (2k+2)(2h-|\I|)}{2h} 
  \le 2k+2 
  =   \gamma(H,h),
\]
which means \cref{eq:grec} holds.
 
\subparagraph*{Case (2) $|\I| < h$.}

Our main idea here is to set up the function $\beta_{k+1}$ such that this case is not possible, i.e., leads to a contradiction
which only leaves the previous case, meaning, the recursion in \cref{eq:grec} holds.
To reach the contradiction, 
consider the first $h+1$ indices, $i_1, \dots, i_{h+1}$, that are not in $\I$ and since $|\I| < h$, they exist.
Define $\tH'_i = \sum_{j=1}^{h+1}H_{i_j}$ and observe that trivially $\tH'_i \le \tH_i$.
Consider an index $i_j$, $1 \le j \le h+1$, and observe that since $i_j \not \in \I$, it follows that 
\begin{align}
  H_{i_j} > \beta_k(h+\tH_{i_j}) \ge \beta_k(h+\tH'_{i_j}) = \beta_k\left(h+\sum_{j'=1}^{j-1} H_{i_{j'}}\right). \label{eq:brec}
\end{align}
Now, we are ready to define the $\beta$ functions. We define $\beta_0(x) = x$.
And  consider two cases: $k = 0$ and $k > 0$. 

Case (a): $k = 0$, i.e, $ h < H \le \beta_1(h)$. Then
observe that we have $\beta_0(h+\tH'_{i_j}) = h + \tH'_{i_j}$. 
\begin{align*}
    H_{i_1} & > \beta_0(h) = h \\ 
    H_{i_2} & > \beta_0(h+H_{i_1}) \ge 2h \\ 
    H_{i_3} & > \beta_0(h+H_{i_1} + H_{i_2}) \ge 4h \\ 
    & ~\vdots  \\
    H_{i_{h+1}} & > 2^{h+1}.
\end{align*}
This implies that $H \ge \tH'_{i_h} > 2^h$.
Thus, if we set $\beta_1(x) = 2^x$, $H > \beta_1(h)$, which contradicts that $h < H \le \beta_1(h)$ for this case. 

Case (b): $k > 0$, i.e.,  $\beta_k(h) < H \le \beta_{k+1}(h)$.
The remaining functions in the hierarchy can be set up in a similar fashion, except that we can be much looser in our
estimation. 
\cref{eq:brec} implies
   $H_{i_j} > \beta_k(\sum_{j'=1}^{j-1} H_{i_{j'}}) \ge \beta_k(H_{i_{j-1}})$. 
Therefore, we can set
   $\beta_{k+1}(h)= \beta^{(h+1)}_k(h)$
where $\beta^{(h)}_k(h)$ refers to applying the $\beta_k$ function $h$ times to itself, which is exactly the way the Ackermann function is defined. 
Thus, $\beta_k$ can be taken as the $k$-th function in our Ackermann hierarchy. And consequently, $\gamma(H,h) = 2\cdot\alpha_h(H)$. 

\else
the claimed time bound.
\fi

\end{proof}
\fi

\cref{thm:overallIO} shows that we get almost optimal bounds, with the trade-off between optimal time and optimal I/Os being defined by the initial seed parameter. In \cref{sec:lowerbound} we prove that no algorithm can do asymptotically better.

\begin{corollary}\label{cor:bounds}
  For any  integer $s \ge 1$, the maxima problem on a planar set of $N$ points 
  can be solved with
  (i) optimal  time of $\OhOf{N (\log H + \log s)}$ and $\OhOf{n(\log_m sH + \alpha_s(\min\{H,m\}))}$ I/O complexity
  cache-obliviously, or 
  (ii) optimal I/O complexity of $\OhOf{n(\log_m H + s)}$ and either 
  $\OhOf{N(\log H + \lambda_s(N))}$ time cache-obliviously or  
  $\OhOf{N(\log H + \lambda_s(m))}$ time cache-aware. 
\end{corollary}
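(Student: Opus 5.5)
Both parts follow from \cref{thm:overallIO} by a suitable choice of the seed $h$. Part~(ii) needs one extra wrapper, because the natural seed there depends on the unknown $H$.

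\textbf{Part (i).} I would call $\Call{Maxima}{P,s}$, i.e., set $h=s$. By \cref{thm:overallIO} this takes $\OhOf{N\log H+N\log s}$ time and $\OhOf{n\log_m sH+n\,\alpha_s(\min\{H,m\})}$ I/Os, which is exactly claim~(i). The algorithm never uses $M$ or $B$, so it is cache-oblivious. The tall-cache assumption is needed only inside the distribution primitive, as stated in \cref{sec:prelim}.

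\textbf{Part (ii).} Here the seed should make $\alpha_h(\min\{H,m\})\le s$ while keeping $\log h$ small. Since $\alpha_h(x)\le s$ whenever $A_s(h)\ge x$, the natural choice is $h=\lambda_s(x)$, where $x$ is an upper bound on $\min\{H,m\}$.
\begin{itemize}
\item \emph{Cache-aware version.} Take $h=\lambda_s(m)$. Then $A_s(h)\ge m\ge\min\{H,m\}$, so the $\alpha$ term is at most $s$. The time is $\OhOf{N(\log H+\log\lambda_s(m))}\subseteq\OhOf{N(\log H+\lambda_s(m))}$.
\item \emph{Cache-oblivious version.} Take $h=\lambda_s(N)$, which is computable without knowing $M$ or $B$. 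Since $N\ge H$, the $\alpha$ term is again at most $s$. The time is $\OhOf{N(\log H+\lambda_s(N))}$.
\end{itemize}

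The I/O bound becomes $\OhOf{n(\log_m hH+s)}=\OhOf{n(\log_m h+\log_m H+s)}$, so the remaining task is to absorb the $n\log_m h$ term, and this is the main obstacle. If $h\le H$, then $\log_m hH\le 2\log_m H$ and we are done. If $h>H$, note that $h=\lambda_s(\cdot)$ is tiny. In the cache-aware case $h\le\lambda_s(m)\le m$, so $\log_m h\le 1\le s$ and the term is absorbed. In the cache-oblivious case $h=\lambda_s(N)$ could exceed $m$ when $m$ is tiny relative to $N$. I would try to handle this through the first base case of \cref{lem:recurrence}, which gives cost $\OhOf{\Split(N,h)}$ whenever $h\ge H$. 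That is still $n\log_m h$, so it does not immediately help.

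The more robust fix I would use is to avoid paying $\log_m h$ at all, via a doubling-style guess. I would start with seed $h=\min\{\lambda_s(N),2\}$ and let the recursion itself raise $h$. Failing that, I would run the computation in the $h$-independent regime: the $\log h$ cost comes only from distributing into $2h$ buckets, and when $h\le m$ that distribution costs a single scan. So for $h\le m$ the I/O is $\OhOf{n(\log_m H+s)}$ directly. For $h>m$ we have $\lambda_s(N)>m$, i.e., $N>A_s(m)$, and I expect $\log_m\lambda_s(N)$ to be bounded by $\OhOf{\log_m n}$. That is dominated only when $H$ is comparable to $N$, so this case needs care. If it cannot be closed, the fallback is to state the cache-oblivious bound under $H\ge h$ and handle $H<h$ by noting that the $\Split(N,h)$ cost there is at most $\OhOf{n\log_m\lambda_s(N)}$, which I would argue is $\OhOf{ns}$ using the tall-cache assumption.
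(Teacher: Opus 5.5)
Your seed choices ($h=s$ for (i); $h=\lambda_s(m)$ cache-aware and $h=\lambda_s(N)$ cache-oblivious for (ii)) are exactly the paper's proof. So is the bound $\alpha_h(\min\{H,m\})\le s$, which holds because $A_s(h)\ge\min\{H,m\}$. Your treatment of (i) and of the cache-aware half of (ii) is complete. You are in fact more careful than the paper. Its proof ends at $\alpha_{\lambda_s(N)}(\min\{H,m\})\le s$ and never checks that the $n\log_m h$ part of $n\log_m hH$ is absorbed. Your observation that $\lambda_s(m)\le m$, so $\log_m h\le 1$, is what actually makes the cache-aware case work.

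The gap is the cache-oblivious half of (ii) in the regime $\lambda_s(N)>\max\{H,m\}$, and none of your proposed fixes closes it:
\begin{itemize}
\item Starting from seed $\min\{\lambda_s(N),2\}\le 2$ and letting the recursion raise $h$ is just \cref{thm:overallIO} with $h\le 2$. That leaves an additive $n\,\alpha_2(\min\{H,m\})$ instead of $ns$.
\item ``Staying in the regime $h\le m$'' cannot be enforced by an algorithm that does not know $m$.
\item The tall-cache assumption only relates $M$ to $B$, so it cannot give $\log_m\lambda_s(N)=O(s)$. For fixed $m$ and $s$, $\log_m\lambda_s(N)\to\infty$ as $N\to\infty$.
\end{itemize}
This is not an artifact of the analysis. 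With $h=\lambda_s(N)>m$, the very first distribution into $2h$ buckets already costs $n\log_m(2\lambda_s(N))$ I/Os. For instance, take $s=1$, $H=1$, $B=1$, $M=O(1)$. Distributing into $k$ buckets needs $\Omega(N\log k)$ comparisons, and each I/O enables only $O(1)$ new ones. So this step costs $\Theta(N\log\log N)$, whereas (ii) promises $O(N)$.

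The paper's proof also drops this term without comment. What the argument (yours or the paper's) actually yields cache-obliviously is $O\bigl(n(\log_m H+s+\log_m\lambda_s(N))\bigr)$ I/Os. That matches the stated bound only under an extra hypothesis such as $H\ge\lambda_s(N)$ or $m\ge\lambda_s(N)$. For example, $m\ge\lceil\log_2 N\rceil$ suffices for all $s\ge 1$, since $\lambda_s(N)\le\lambda_1(N)$. You should state the cache-oblivious claim with such a restriction, or with the extra term, instead of asserting it is $O(ns)$.
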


\begin{proof}
The first bound follows from \cref{thm:overallIO} by calling $\Call{Maxima}{P,h}$ with the initial
seed $h = s$. 
The second bound is obtained by calling it with the initial seed $h = \lambda_s(N)$ for a cache-oblivious algorithm, or $h=\lambda_s(m) \le \lambda_s(N)$ if the cache parameters $M$ and $B$ are known.
Observe that we can assume $N\ge m$ because otherwise $N < m \le M$, i.e., the whole input fits in the internal memory. 
The claimed bounds follow because $\alpha_{\lambda_s(N)}(\min\{H,m\}) \le \alpha_{\lambda_s(N)}(N) = s$.
\end{proof}

\section{Lower Bound for Planar Maxima}
\label{sec:lowerbound}

Consider a set $P$ of $N$ points in the plane. 
In this section we show a lower bound for computing the maxima 
of $P$. 
We use the classical adversarial setting based on a rooted binary tree~\cite{DBLP:journals/ipl/BorodinGLY81} for our lower bound. 
While this setting has been used in the past for proving lower bounds for  a single metric of an algorithm (e.g., either for time~\cite{DBLP:journals/ipl/BorodinGLY81} or for I/Os~\cite{em-tree-lb}), the biggest challenge for us is the need to balance the amount of I/Os performed by the
algorithm versus the running time.
 
We prove a lower bound for I/O cost of any deterministic algorithm $\A$ 
that computes the maxima of any set of $N$ points with $H$ maxima points and
uses at most $N\cdot A_{s}(H)$ comparisons, for an integer parameter $s \ge 1$.
To simplify the presentation, we assume that the algorithm uses at most $N A_s(H)$ time, for a parameter $s\ge 3$. 
If the algorithm uses at most $c N A_s(H)$ time, for a fixed constant $c$, then we can bound  $c NA_s(H)  \le N A_{s+c'}(H)$,
for a fixed constant $c'$ depending on $c$,
and this only changes the constants in our lower bound.

We work in the classical comparison-based model.
Comparisons between $X$- and $Y$-coordinates are the only way the algorithm can glean information about the relative position of the points but
the algorithm has unlimited computational resources as well as unbounded capacity to recall all the previous comparisons.
We consider two cost functions:
the total number of comparisons performed by the algorithm, 
which is a lower bound on the time, and the number of I/Os. 
We do not require the ``blocked access'' restriction of the I/O model, meaning, the algorithm
can have the power of ``random access'' by being able to 
read or write any $B$ locations on the external memory via one I/O.
At the end, the algorithm terminates the computation and announces the set of maxima points and we obviously
require the algorithm to be correct. 

\subsection{Resolving Comparisons}
\label{sec:res-comp}

The adversary maintains a perfect binary tree $T$.
Each node $v$ of $T$ is associated with a square region $R_v$ on the plane: 
if $v$ is the $i$-th node at depth $d\ge 0$ (for $i=1, \dots, 2^d$), it is
associated with the square region $R_v = \left(\frac{i-1}{2^d},
\frac{i}{2^d}\right] \times \left(1- \frac{i}{2^d}, 1-\frac{i-1}{2^d}\right]$
(see~\cref{fig:maximasq} for an example). That is, for the root $r$: $R_r = (0,1]
\times (0,1]$; for an arbitrary node $v$, its left and right
children are associated with the upper-left and lower-right quadrants of $R_v$,
respectively. 

\begin{figure}[t]
    \centering
    \includegraphics[scale=.5]{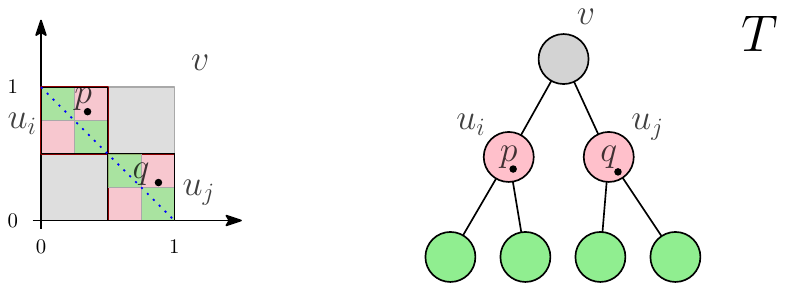}
    \caption{Conceptual regions (on the left) assigned to the nodes of $T$ (on the right). Upon the completion of a comparison between two points $p, q \in v$, their assignment to nodes in the subtrees and, equivalently, to the corresponding regions maintains the consistency with future comparisons.}
    \label{fig:maximasq}
\end{figure}

Throughout the algorithm, the adversary will maintain an assignment of  points to the nodes of $T$
while maintaining the following invariant:

\begin{inv}[Tree invariant]\label{inv:tree}
  If a point $p$ is assigned to a tree node $v$ (denoted $p \in v$), then $p$ can be placed anywhere within $R_v$ consistent with the outcomes of all prior comparisons performed by the algorithm.
\end{inv}

\begin{definition}[Ordered pairs]
A pair of points $p \in v$ and $q \in u$ is called an \emph{unordered pair} if one of $v$ or $u$ is the ancestor of the other one (including $v = u$). Otherwise $p$ and~$q$ is an \emph{ordered pair}.
\end{definition}

\begin{observation}\label{ob:ordered-strategy}
There is only one way to consistently resolve a comparison between an ordered pair of points $p \in u_i$ and $q \in u_j$, because the regions $R_{u_i}$ and $R_{u_j}$ have non-overlapping $X$- and $Y$-ranges.
\end{observation}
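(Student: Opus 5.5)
The plan is to reduce the observation to a statement about the geometry of the regions $R_v$ and then apply \cref{inv:tree}. Let $p \in u_i$ and $q \in u_j$ be an ordered pair. Then neither of $u_i, u_j$ is an ancestor of the other. Let $w$ be their lowest common ancestor. Since it is neither of them, $u_i$ and $u_j$ lie in different child subtrees of $w$. Without loss of generality, $u_i$ is in the left subtree and $u_j$ in the right subtree.

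The first step is to show that regions are nested along the tree. By the explicit formula, the $i$-th node at depth $d$ has $X$-range $\left(\frac{i-1}{2^d},\frac{i}{2^d}\right]$ and $Y$-range $\left(1-\frac{i}{2^d},1-\frac{i-1}{2^d}\right]$. Its children are the nodes $2i-1$ and $2i$ at depth $d+1$, and their $X$-ranges are the two halves of the parent's $X$-range. Their $Y$-ranges are the two halves of the parent's $Y$-range, in reverse order. So by induction on depth, $R_{u_i} \subseteq R_{w_L}$ and $R_{u_j} \subseteq R_{w_R}$, where $w_L$ and $w_R$ are the left and right children of $w$. These are the upper-left and lower-right quadrants of $R_w$.

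The second step compares the two quadrants directly. Every point of $R_{w_L}$ has $X$-coordinate at most the midpoint of $w$'s $X$-range. Every point of $R_{w_R}$ has $X$-coordinate strictly greater than that midpoint, because the intervals are half-open. Symmetrically, every point of $R_{w_L}$ has $Y$-coordinate strictly greater than every point of $R_{w_R}$. Hence the $X$-ranges of $R_{u_i}$ and $R_{u_j}$ are disjoint, and so are their $Y$-ranges. Moreover, the order is fixed: $x_p < x_q$ and $y_p > y_q$ for every placement of $p$ in $R_{u_i}$ and $q$ in $R_{u_j}$.

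Finally, combine this with \cref{inv:tree}. Any consistent placement puts $p$ in $R_{u_i}$ and $q$ in $R_{u_j}$. So every admissible configuration gives the same outcome for a comparison of the $X$-coordinates, or of the $Y$-coordinates, of $p$ and $q$. A comparison of an $X$-coordinate against a $Y$-coordinate is not allowed in this model. The adversary therefore has exactly one consistent answer. Answering it adds no constraint that could break the invariant for any other point.

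I expect no real obstacle here. The only care needed is the boundary bookkeeping with half-open intervals, to make sure the strict inequalities hold. I would handle it by writing out the child ranges once and checking both halves.
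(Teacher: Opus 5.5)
Your proposal is correct and follows the paper's own (one-clause) justification. The paper gives only the remark that $R_{u_i}$ and $R_{u_j}$ have non-overlapping $X$- and $Y$-ranges; your lowest-common-ancestor and quadrant argument, including the half-open boundary check, spells that remark out and combines it with the tree invariant. One small point of wording: the invariant says every placement inside the regions is consistent, not that every consistent placement lies inside them, but the former is exactly the direction you use when concluding that only one answer preserves it.
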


Initially, all points are assigned to the root of $T$. Whenever two points are
compared, the adversary produces the outcome of the comparison according to the
following strategy and announces the outcome to the algorithm.
A comparison between an ordered pair  $p$ and $q$ is resolved according to the only consistent way, as per \cref{ob:ordered-strategy}.
The next definition covers how unordered pairs are handled. 

\begin{definition}[Default strategy]\label{strategy:default}
  
  A comparison between a pair of unordered points $p\in v$ and $q \in u$ is resolved as follows. 
    W.l.o.g., let $v$ be an ancestor of $u$ in $T$. 
    If $v=u$, then $p$ is moved
    to the left child of $v$ and $q$ is moved to the right child.
    If $v \not =u$, then $p$ is moved to the child of $v$ that is not the ancestor of $u$.
    In both cases,  $p$ and $q$ become an ordered pair and the comparison is resolved according to~\cref{ob:ordered-strategy}.
\end{definition}

\begin{observation}
  The adversary's default strategy for resolving a comparison between points $p$ and $q$ maintains the tree invariant. 
\end{observation}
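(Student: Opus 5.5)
We need to show that the default strategy maintains Invariant~\ref{inv:tree}: after each comparison, every point assigned to a node $w$ can still be placed anywhere in $R_w$ consistently with all comparisons made so far. The argument is an induction over the sequence of comparisons.

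\textbf{Base and key geometric facts.} Initially all points are at the root, no comparisons have been made, and any placement in $R_r$ is consistent. We use the following facts about the regions.
\begin{itemize}
\item The two children of a node $v$ cover the upper-left and lower-right quadrants of $R_v$. In particular, $R_{\text{child}} \subseteq R_v$.
\item For any two nodes $a,b$ that are not in an ancestor--descendant relation, let $w$ be their lowest common ancestor, and say $a$ is in the left subtree of $w$. Then $R_a$ lies in the upper-left quadrant of $R_w$ and $R_b$ in the lower-right quadrant. So every $x$-coordinate in $R_a$ is at most every $x$-coordinate in $R_b$, and every $y$-coordinate in $R_a$ is at least every $y$-coordinate in $R_b$, with strictness following from the half-open intervals.
\end{itemize}

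\textbf{Inductive step, reduced to pairwise constraints.} Assume the invariant holds before a comparison of $p\in v$ and $q\in u$. We first strengthen the hypothesis into a more usable form: every prior comparison was between a pair that is ordered with respect to the current assignment. This holds because each comparison is resolved only once its pair is ordered, and points only ever move downward. Moving down can never turn an ordered pair into an unordered one, since the nodes stay in disjoint subtrees. So every prior comparison outcome is implied by the fixed relative position of the corresponding disjoint regions.

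Consequently, the set of consistent placements is exactly the product $\prod_{p'} R_{\text{node}(p')}$. Any choice of positions inside the assigned regions automatically satisfies all recorded outcomes, and by the second region fact this is so regardless of the exact positions.

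\textbf{Effect of the new comparison.} Under the default strategy, the comparison between $p$ and $q$ only moves $p$ (and possibly $q$) to a child. This shrinks their regions, so the product above only becomes smaller. After the move, $p$ and $q$ form an ordered pair, and the answer is given according to Observation~\ref{ob:ordered-strategy}. That answer is therefore implied by the new regions. All earlier outcomes remain implied because the pairs involving $p$ or $q$ are still ordered, as argued above.

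Hence every placement in the new product of regions is consistent with all comparisons, including the new one, and the invariant holds. The only delicate step is the claim that orderedness is preserved under downward moves, which is what makes the set of consistent placements a product of regions. It follows directly from the fact that disjoint subtrees remain disjoint.
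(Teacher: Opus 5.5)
Your argument is correct and matches the reasoning the paper leaves implicit; it states this as an observation without proof. The argument runs: points only move to children, so regions shrink; orderedness survives downward moves; and ordered pairs have separated $X$- and $Y$-ranges, so every recorded outcome holds for all placements in the product of the assigned regions. One small overstatement: the set of consistent placements generally strictly contains that product rather than equalling it, but you only use containment, which is all the invariant needs.
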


\subsection{Adversarial Strategy}

In this subsection, we describe additional definitions and concepts used for our adversarial strategy.
We will use the notation $T(v)$ to refer to the subtree of $T$ rooted at $v$.
We say a \emph{$t$-descendant} of a node $v$ is a node that lies at distance $t$ below $v$ (e.g., $v$ is the $0$-descendant of itself and the children of $v$
are its $1$-descendants). 
Each point $p$, is labeled as either a \emph{deep point} or an \emph{ordinary point}. 
We will explain how this labeling is performed during our adversarial strategy.
Conceptually, deep points represent points that get involved in way too many comparisons and, thus, the vast majority of 
points will be ordinary points. 
We call a non-empty node $v \in T$ a \emph{top node} if all ancestors of $v$ are empty (i.e., contain no points).
The tree invariant implies that the number of top nodes is a lower bound on the number of (current) maxima points.
The \emph{initial size of $v$}, denoted $N_v$, is the number of ordinary points in $v$ the first time $v$ becomes a top node.

\subparagraph*{Charges.}
We will maintain a non-negative integer \emph{charge} with each ordinary point.
Later, we show that the sum of charges across all ordinary points will be a lower bound on the total number of points accessed in the external memory during the execution of algorithm $\A$. This is done via the classical ``amortization'' technique where for every point accessed in the external memory, we transfer $\OmegaOf{1}$ charge to some ordinary point. 
The adversary will maintain the following invariant:

\begin{inv}[Equality of charges]\label{inv:charge}
    For any top node $v$, all ordinary points in $T(v)$ have equal charges.
\end{inv}

Charges will be the way the adversary controls the termination of the algorithm.
In particular, whenever ordinary points at a top node $v$ accumulate $\zeta$ charges, for a parameter $\zeta$ to be chosen later,
the adversary will use the following strategy. 

\begin{definition}[Node termination]\label{def:special} 
  The adversary \emph{terminates} a top node $v$ by
  picking an arbitrary point $p \in v$ and fixing its coordinates to those of the northeast corner of $R_v$. For the remaining points at the nodes in $T(v)$ the adversary fixes their coordinates arbitrarily within the regions of their respective nodes. 
 The coordinates of all points in $T(v)$ are then announced to the algorithm. 
\end{definition}

First, observe that the adversary can perform node termination because by the tree invariant the points can be placed anywhere within the regions of their respective nodes and the relative order within each region is unknown to the algorithm. 
Second, this effectively reduces the size of the maxima among the points in $T(v)$ to 1, essentially pruning all the other points in $T(v)$, because $p$ will dominate all of them.
Finally, terminated nodes are still top nodes, and while the algorithm can still issue comparisons that involve the points in $T(v)$, these points are now ordered (since their 
coordinates are fixed and announced to the algorithm) and, thus, the results of these comparisons are already known.

\subparagraph*{Epochs.}\label{par:epoch}
The adversary operates in \emph{epochs} where during each epoch, 
the adversary resolves comparisons issued by $\A$ via the default strategy. 
Under some conditions, the adversary decides that the current epoch has to end, gives some extra information
to the algorithm and then \emph{transitions} to the next epoch. 
Crucially, the number of top nodes only changes during the transition.
We use $h_i$ to denote the number of top nodes at the start of epoch $i$.
Initially, we start at epoch $1$ when the algorithm $\A$ has issued no comparisons, the root $r$ of $T$ is
the only top node, i.e., $h_1 = 1$, all points are ordinary, with charge zero, and they are placed in $r$.
We will now present the details behind the transition process.

Consider an arbitrary epoch $i$. 
During the epoch the adversary resolves the comparisons issued by the algorithm (via the default strategy) until 
for some top node $v$, the number of ordinary points in $v$ reduces to $N_v/2$ (recall that $v$ started with $N_v$ ordinary points).
If the ordinary points of $v$ had $\zeta-1$ charge, we increase their charge to $\zeta$ and 
terminate $v$ using \cref{def:special} and the epoch $i$ continues.
Otherwise,  epoch $i$ ends and the adversary transitions to epoch $i+1$ by performing the following.
Define the function $d(x)=2^{A_{s}(x)}$ and let $d_i = d(h_i)$.
First, the adversary labels every point that is in a $t$-descendant of $v$ for any $t > d_i$ \emph{deep}.
Next, consider every node $u$ that is a $t$-descendant of $v$ for $1 \le t \le d_i$.
All points in $u$ are moved to an arbitrary $d_i$-descendant of $v$ that is also a descendant of $u$. 
The remaining  $\frac{N_v}{4}$ points of $v$ are then  
distributed equally among all $d_i$-descendants of $v$.
These moves make a number of pairs among the moved points ordered. 
To do this, the adversary provides the information about the relative order of the newly ordered pairs (according to~\cref{ob:ordered-strategy}) to the algorithm for free.
The number of $d_i$-descendants of $v$ is $2^{d_i}$ and,
thus, each $d_i$-descendant will receive 
$\frac{N_v}{4\cdot 2^{d_i}}$ points. 
At this point neither $v$, nor any of its $d$-descendants for $d < d_i$   contain any points. Therefore, the $d_i$-descendants of $v$ become new top nodes and
we call them \emph{activated}.
The charges of all ordinary points in the newly activated top nodes are incremented by one, 
which preserves~\cref{inv:charge}.

If at any point, all nodes are terminated, 
then every top node contains exactly one point that dominates the others in its subtree. Since this information is available to the algorithm, it can announce the result and terminate. 
It is important to note that termination means that each ordinary point in $T(v)$ has received
exactly $\zeta$ charge. 
 
\begin{observation}\label{ob:topnodes}
    In each epoch $i > 1$: $h_{i} = h_{i-1} + 2^{d_{i-1}} - 1 = h_{i-1} + 2^{d(h_{i-1})}-1$.
\end{observation}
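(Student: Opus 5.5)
The plan is to count top nodes across a single epoch transition. By definition, $h_i$ is the number of top nodes at the start of epoch $i$, and the number of top nodes changes only during transitions. So it suffices to look at the transition from epoch $i-1$ to epoch $i$ and count how many top nodes are removed and how many are created. Terminations that happen inside epoch $i-1$ do not change the count: a terminated node stays a top node, since it still contains its fixed point $p$ and its ancestors remain empty. Moving points by the default strategy also leaves the count unchanged, as I argue next.

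The first step is to check that the default strategy never creates or destroys a top node in the middle of an epoch. When a point is moved from a node $v$ to a child of $v$, no new non-empty node appears outside the subtrees of the current top nodes. Could $v$ become empty while it is a top node? That would require at least $N_v/2$ ordinary points to leave $v$. But the epoch stops, or $v$ is terminated, as soon as the number of ordinary points in $v$ drops to $N_v/2$, so $v$ cannot empty out. Deep points could be a concern here, but they are created only at transitions and always lie strictly below the new top nodes, so they do not affect which nodes are top nodes.

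The second step is the transition itself, triggered by the top node $v$. Every point in a $t$-descendant of $v$ with $1\le t\le d_{i-1}$ is pushed down to a $d_{i-1}$-descendant. The remaining $N_v/4$ points of $v$ are spread evenly over all $2^{d_{i-1}}$ such descendants. We need $N_v/(4\cdot 2^{d_{i-1}})\ge 1$ so that every one of these descendants really is non-empty; I will assume the parameters are chosen to guarantee this, for instance by taking $N$ large enough. After this, $v$ and all its $t$-descendants with $t<d_{i-1}$ are empty, so each of the $2^{d_{i-1}}$ nodes at depth $d_{i-1}$ below $v$ has only empty ancestors and becomes a top node. Points deeper than $d_{i-1}$ below $v$ do not create extra top nodes, because their ancestors at depth $d_{i-1}$ are now non-empty. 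Top nodes outside $T(v)$ are untouched. Hence $v$ is lost and $2^{d_{i-1}}$ new top nodes appear, which gives $h_i=h_{i-1}+2^{d_{i-1}}-1$, and $d_{i-1}=d(h_{i-1})$ by definition.

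The main obstacle is the non-emptiness requirement $N_v\ge 4\cdot 2^{d_{i-1}}$, since $d$ grows extremely fast. I would state it as a standing assumption that the adversary runs only while this holds, or note that it is guaranteed by the parameter range used later (with $H$ capped by something like $\sqrt{N/M}$).
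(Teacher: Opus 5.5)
Your proof is correct and follows the same counting the paper relies on implicitly. Within an epoch the set of top nodes never changes, and the transition empties $v$ while making all $2^{d_{i-1}}$ of its depth-$d_{i-1}$ descendants top nodes, for a net change of $2^{d_{i-1}}-1$. The non-emptiness condition $N_v\ge 4\cdot 2^{d_{i-1}}$ you flag is indeed left implicit in the paper. It is secured, by induction over epochs and without circularity, by keeping the recursively defined $h_i$ below $\ell\le\sqrt{N/(4M)}$, as in \cref{lem:numpoints} and \cref{thm:maximalb}.
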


\begin{restatable}{lemma}{lemnumpoints}\label{lem:numpoints}
    The initial size of every node $v$ activated at the end of epoch $i-1$ is $N_v \ge \frac{N}{h_{i}^2}$.
\end{restatable}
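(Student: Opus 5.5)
The plan is to prove the bound by induction on the epoch index, using the recurrence for the number of top nodes in \cref{ob:topnodes}. The base case is the root $r$. It is the only top node in epoch $1$ and has initial size $N_r = N = N/h_1^2$, since $h_1 = 1$.

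For the inductive step, consider a node $v$ activated at the end of epoch $i-1$. By construction, $v$ is a $d_{i-1}$-descendant of the top node $v'$ whose shrinking triggered the transition, where $d_{i-1} = d(h_{i-1})$. The node $v'$ became a top node at the end of some earlier epoch $j-1 < i-1$, or it is the root. By induction, its initial size satisfies $N_{v'} \ge N/h_j^2$. \cref{ob:topnodes} shows that $h$ is non-decreasing over epochs, because $2^{d(\cdot)} - 1 \ge 1$. Hence $h_j \le h_{i-1}$, and so $N_{v'} \ge N/h_{i-1}^2$.

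During the transition, the $N_{v'}/4$ remaining points of $v'$ are distributed equally among its $2^{d_{i-1}}$ descendants at depth $d_{i-1}$. These points are ordinary. They stay ordinary because the ``deep'' relabeling affects only points strictly below depth $d_{i-1}$, and the points moved here do not reduce the count. Each newly activated node therefore receives at least $N_{v'}/(4\cdot 2^{d_{i-1}})$ ordinary points. This gives
\[
N_v \;\ge\; \frac{N}{4\,h_{i-1}^2\, 2^{d_{i-1}}}.
\]

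It remains to show that $4 h_{i-1}^2 2^{d_{i-1}} \le h_i^2$. Write $a = h_{i-1} \ge 1$ and $D = 2^{d_{i-1}} = 2^{2^{A_s(a)}}$. Then $h_i = a + D - 1 \ge D - 1$, so it suffices to show $(D-1)^2 \ge 4a^2 D$. Since $(D-1)^2 \ge D(D-2)$, this holds whenever $D \ge 4a^2 + 2$. The latter is immediate, because $A_s(a) \ge A_1(a) = 2^a \ge a$ for $s \ge 1$. Therefore $D \ge 2^{2^{a}} \ge 4a^2 + 2$ for every integer $a \ge 1$; for instance, at $a = 1$ we get $D \ge 2^{2^{A_s(1)}} \ge 16$.

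The main obstacle I expect is bookkeeping rather than the inequality itself. One has to confirm that the points counted toward $N_v$ are genuinely ordinary and are not split further or relabeled deep during the transition. One also has to handle rounding in $N_{v'}/(4\cdot 2^{d_{i-1}})$, which can be absorbed into the large slack in $D \ge 4a^2 + 2$. Finally, the induction must apply to whichever earlier-activated node $v'$ triggered the transition, which is why it is set up over all activated nodes rather than along a single chain.
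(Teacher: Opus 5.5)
Your argument follows the paper's proof essentially step for step. It uses induction over epochs and applies the hypothesis to the node $v'$ whose resolution activated $v$ ($v'$ was itself activated at the end of some epoch $j-1\le i-2$). It then uses monotonicity of the $h_i$ to replace $h_j$ by $h_{i-1}$, gives each activated node the share $N_{v'}/(4\cdot 2^{d_{i-1}})$, and closes with $4h_{i-1}^2\, 2^{d_{i-1}}\le h_i^2$.

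There is one arithmetic slip in that last step. After weakening to $A_s(a)\ge a$, the claim $2^{2^{a}}\ge 4a^2+2$ is false for $a=1$ ($4<6$) and $a=2$ ($16<18$). Either of two fixes works:
\begin{itemize}
\item Keep the bound $A_s(a)\ge 2^a$ you already wrote down. It gives $D\ge 2^{2^{2^a}}\ge 4a^2+2$ for all $a\ge 1$, and it is what your $a=1$ check actually uses.
\item Do as the paper does and use $h_i=a-1+D\ge D$. Then only $4a^2\le D$ is needed, and even $D\ge 2^{2^a}$ suffices.
\end{itemize}
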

\begin{proof}
  \ifArxiv
        By induction on $i$. Initially, the root node is the only top node ($h_1 = 1$) and at the start of epoch $i=1$ it contains all $N = \frac{N}{h_{i}^2}$ points (we say root is activated at the end of epoch 0).
    Now consider an arbitrary epoch $i>1$ and
    let $u$ be an ancestor of $v$ whose resolution resulted in the activation of $v$. At the time of $u$'s resolution, at least $N_u/4$ ordinary points of $u$ are distrbuted across $2^{d_{i-1}}$ nodes, one of which is $v$. 
    Let $j-1 \le i-2$ be the epoch at the end of which $u$ was activated with $N_u$ ordinary points. 
    Then, by the inductive hypothesis, $N_u \ge \frac{N}{h_j^2}$. Thus, $v$ received at least $\frac{N_u}{4\cdot 2^{d_{i-1}}} \ge \frac{N}{4h_j^2 \cdot 2^{d_{i-1}}}$ points. 
    Observe that $4h_{i-1}^2 \le 2^{2^{A_{s+1}(h_{i-1})}} = 2^{d(h_{i-1})} = 2^{d_{i-1}}$ for all $h_{i-1} \ge 1$ and $s \ge 1$, and by \cref{ob:topnodes}:
\begin{align}
    4 h_{i-1}^2 \cdot 2^{d_{i-1}} \le  2^{2d_{i-1}} \le (h_{i-1}-1 + 2^{d_{i-1}})^2 = h_{i}^2. \label{eq:hi-progression}
\end{align}

    Then, since $j \le i-1$ and $h_i$s are monotonically increasing, $v$ is activated with at least $\frac{N}{4h_j^2 \cdot 2^{d_{i-1}}} \ge \frac{N}{4h_{i-1}^2 \cdot 2^{d_{i-1}}} \ge \frac{N}{h_{i}^2}$ points.

    \else
    By induction on $i$ (presented in~\cite{full-version}).
    \fi
\end{proof}

\begin{restatable}{lemma}{lemprune}\label{lem:prune}
  The resolution of $v$ at the end of epoch $i$ creates at most $\frac{N_v}{2^{h_i}}$ deep points, where $N_v$ is the initial size
  of $v$ at the time of its activation. 
\end{restatable}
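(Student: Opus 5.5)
Deep points are created only at the transition ending epoch $i$: every point that at that moment lies in a $t$-descendant of $v$ with $t>d_i$ gets labeled deep. So I will bound the number of ordinary points of $v$ that have been pushed below depth $d_i$ in $T(v)$ by the time the epoch ends. The comparison budget of $\A$ caps how deep points can be pushed, and the doubly-exponential choice $d_i=2^{A_s(h_i)}$ makes this amount negligible.

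First, I use the default strategy (\cref{strategy:default}) to relate depth to comparisons. A point moves down by exactly one level per comparison that resolves it, and only unordered comparisons move points. Hence a point at depth $t$ below $v$ has taken part in at least $t$ comparisons since $v$ became a top node. Any point now below depth $d_i$ in $T(v)$ has therefore been charged more than $d_i$ comparisons.

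Second, I bound the total number of comparisons. The algorithm uses at most $N\cdot A_s(H)$ comparisons, and I need $H$ expressed in terms of the current state. The cleanest route is to count only comparisons in which some point moves, since ordered comparisons carry no information. I then compare this count to the budget against which the adversary measures the algorithm. I expect this to be the main subtlety, because the bound is stated in terms of $h_i$ and $N_v$ rather than $H$ and $N$. The plan is as follows. If $\A$ is to stay within $N A_s(H)$ overall, the adversary may assume the relevant budget at epoch $i$ is $N\cdot A_s(h_i)$, or treat $H$ as comparable to the number of top nodes. I will use \cref{lem:numpoints}, $N\le h_i^2 N_v$, to rewrite $N A_s(h_i)\le h_i^2 N_v A_s(h_i)$. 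The number of deep points is then at most
\[
\frac{h_i^2 N_v A_s(h_i)}{d_i}=\frac{h_i^2 A_s(h_i)}{2^{A_s(h_i)}}\,N_v .
\]

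Third, I finish by arithmetic. For $s\ge 3$ we have $A_s(x)\ge A_1(x)=2^x$ with large room, so $2^{A_s(h)}\ge 2^{h}\cdot h^2 A_s(h)$ for all $h\ge1$. Setting $y=A_s(h)$, this follows from $2^{y}\ge 2^{h} h^2 y$, which holds since $y\ge 2^{2^h}$ is huge compared to $h$. This yields at most $N_v/2^{h_i}$ deep points.

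If the global budget cannot be localized to $h_i$, the fallback is to charge the depth-pushing comparisons against points of $v$ only. I would then argue, using that $v$ still holds at least $N_v/2$ points when the epoch ends, that the adversary can be set up so that exceeding $d_i$ depth on a large fraction of $v$ would already force $\A$ past its comparison allowance.
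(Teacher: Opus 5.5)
Your argument follows the paper's proof. You localize the comparison budget to $N\cdot A_s(h_i)$, use \cref{lem:numpoints} to get $N\le h_i^2N_v$, and observe that a point more than $d_i$ levels below $v$ was moved more than $d_i$ times, each time by a comparison. You then finish with an inequality of the form $2^{A_s(x)}>2x^2A_s(x)2^x$. There is one justification missing and one small counting slip.

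\textbf{The missing justification.} The step you flag as the main subtlety is assumed rather than proved. The algorithm's allowance $N\cdot A_s(H)$ refers to the final output size, which is not yet determined during epoch $i$. On its own it therefore says nothing about how many comparisons may occur by then. The paper closes this with node termination. Suppose that by the end of epoch $i$ the algorithm has issued more than $N\cdot A_s(h_i)$ comparisons. The adversary then terminates every top node. Each terminated top node gets one point at the northeast corner of its region, dominating its whole subtree, and every point lies in the subtree of some top node. This fixes the instance to have exactly $h_i$ maxima, so the algorithm has already exceeded its allowance on that input. Hence we may assume that the cumulative number of comparisons through epoch $i$ is at most $N\cdot A_s(h_i)$. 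Because the bound is cumulative, it also covers comparisons that pushed $v$'s points down in earlier epochs after $v$ was activated. Your fallback paragraph does not avoid this issue: it still needs this mechanism to relate comparisons to $h_i$ rather than to the unknown $H$.

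\textbf{The counting slip.} A comparison between two points in the same node moves both of them, since one goes to each child. So the total number of one-level moves is at most twice the number of comparisons, and your bound should be $2h_i^2N_vA_s(h_i)/d_i$ rather than $h_i^2N_vA_s(h_i)/d_i$. The arithmetic has plenty of room to absorb the extra factor, so the conclusion is unaffected.
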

\begin{proof}
    Recall that $\A$ is limited to a budget of $N\cdot A_s(H)$ comparisons. Moreover,
   during each epoch $i$, if algorithm $\A$ performs more than $N \cdot A_s(h_i)$ comparisons, the adversary can use node termination on all top nodes and force $H = h_i$.
   Consequently, $\A$ is limited to a budget of $N\cdot A_s(h_i)$ comparisons in every epoch $i$.
    By \cref{lem:numpoints} and monotonicity of $h_i$s, we have $N \le N_v h_i^2$. Since each comparison moves at most 2 points one level lower,  after $N\cdot A_s(h_i)$ comparisons in epoch $i$, the total number of points that can be deeper than $d_i$ is at most
    $\frac{2 N\cdot A_s(h_i) }{d(h_i)} \le \frac{2 N_v h_i^2 A_s(h_i) }{d(h_i)} = \frac{2 N_v h_i^2 A_s(h_i) }{2^{A_{s}(h_i)}} < \frac{N_v}{2^{h_i}}$,
      where the last inequality follows from the fact that for all integers $s\ge 3$ and $x\ge 1$:
        $2^{A_{s}(x)} > 2x^2 A_s(x) 2^{x}$. 
\end{proof}

Our main technical lemma here is the following.
\begin{lemma}\label{lem:mainlb}
    The number of top nodes is upper bounded by $A_{s+2\zeta-1}(1)$.
\end{lemma}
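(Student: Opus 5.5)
We bound the growth of the number of top nodes through the epochs. By \cref{ob:topnodes}, every epoch transition turns one top node into $2^{d(h)}=2^{2^{A_s(h)}}$ new top nodes. Consequently $h_{i+1}\le A_{s+1}(h_i)$: the map $x\mapsto x-1+2^{2^{A_s(x)}}$ is dominated by $A_s^{(3)}(x)\le A_{s+1}(x)$ for $x\ge 2$, and the small cases can be handled directly. The task is therefore to show that not too many transitions can be "nested" before the charges run out. We organise this with a potential argument on the tree of top nodes, indexed by charge level.

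For a top node $v$, define its \emph{level} $\ell(v)$ as the common charge of its ordinary points, which is well defined by \cref{inv:charge}. A node activated when its parent top node (the resolved node $u$) had level $\ell$ gets level $\ell+1$. A node of level $\zeta-1$ cannot trigger a transition; when its point count halves it is terminated instead. So transitions are caused only by top nodes of level at most $\zeta-2$, and every top node has level at most $\zeta$. We prove by downward induction on $\ell$ the following claim. If a top node $v$ of level $\ell$ is activated while the global number of top nodes is $h$, then from that moment on, the total number of top nodes can grow, due to transitions inside $T(v)$, only to a bound $g_\ell(h)$. Here $g_{\zeta-1}(h)=h$, and
\[g_\ell(h)\le A_{s+2(\zeta-1-\ell)}(h)\quad\text{roughly}.\]
For the inductive step, a level-$\ell$ node $v$ resolves once, creating $2^{d(h')}$ children of level $\ell+1$, where $h'$ is the current count. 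The children are then processed in some order chosen by the algorithm. Each child, when expanded, can raise the count from its current value $x$ to at most $g_{\ell+1}(x)$. The number of children is at most $A_{s+1}(h')$, so the count after all of them is at most $g_{\ell+1}$ iterated about $A_{s+1}(h')$ times. Iterating $A_{s+2k}$ a number of times that is itself an Ackermann-level quantity gives $A_{s+2k+1}$ applied to an argument of size $A_{s+1}$. This is absorbed into $A_{s+2k+2}$, which explains the increment of $2$ per charge level. Applying the claim at the root, which has level $0$ and starts with $h_1=1$, gives the bound $A_{s+2\zeta-1}(1)$ after tuning the base case.

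Interleaving is a subtlety: transitions in different subtrees happen in an arbitrary order, and each expansion raises the global $h$ used by all later ones. We handle this by making $g_\ell$ monotone and reasoning in terms of "the current global count". An expansion in a sibling subtree only increases the argument seen later, and the composed bound is unchanged by reordering because each expansion is a monotone map $x\mapsto g(x)$ applied to the running count. We also use that, within an epoch, only one transition happens. \cref{lem:numpoints} and \cref{lem:prune} ensure every activated node has enough ordinary points, so that halving, and hence the charge bookkeeping, is meaningful.

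The main obstacle is the Ackermann bookkeeping in the inductive step. We need to show that iterating $g_{\ell+1}$ a total of $2^{d(h)}$ times, starting from $h$, stays below $A_{s+2(\zeta-1-\ell)}(h)$. This requires the identity $A_{i+1}(x)=A_i^{(x+1)}(x)$ together with the fact that the iteration count $2^{2^{A_s(h)}}$ is at most $A_{s+1}(h)$. Our first step will be to bound $2^{2^{A_s(h)}}\le A_s^{(3)}(h)$. We then show $A_{j}^{(A_{s+1}(x))}(x)\le A_{j+1}^{(2)}(x)\le A_{j+2}(x)$ for $j\ge s+1$, and choose the offsets so that the final index is $s+2\zeta-1$.
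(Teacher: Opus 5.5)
Your Ackermann bookkeeping is essentially the paper's \cref{lem:zeta}: your $g_\ell$ plays the role of $\Phi(\cdot;(1,\zeta-\ell))$, and the depth-first recursion you describe is exactly what $\Phi$ computes. The genuine gap is the step you set aside as a subtlety, namely interleaving.

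Your inductive hypothesis says that once a level-$\ell$ node $v$ is activated at count $h$, transitions inside $T(v)$ raise the count only to $g_\ell(h)$. This is not well posed. The number of children created by each resolution inside $T(v)$ is fixed by the global count at that moment, and transitions outside $T(v)$ in between inflate that count. Your own recursion already evaluates $g_{\ell+1}$ at the count when a child's expansion begins, not when it was activated.

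Concretely, take two level-$\ell$ top nodes $v,w$. The algorithm may resolve $v$, then $w$, and only then start on $v$'s children. The expansion of $T(v)$ is then split by $w$'s transition, so it is not a single map $x\mapsto g(x)$ applied to the running count. Even for atomic blocks, $g_a$ and $g_b$ for different levels do not commute, so the claim that ``the composed bound is unchanged by reordering'' is false. Monotonicity alone cannot decide which effect dominates: interleaving lets $v$'s descendants see a larger count, while $w$ itself is resolved at a smaller one.

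What is missing is a proof that, among all resolution orders, always resolving a top node of highest charge (lowest potential $\kappa=\zeta-\ell$, i.e.\ depth-first) maximizes the final number of top nodes. This is the real crux of the paper's argument. The paper defines $\Phi$ on an entire status vector $(h;(t_1,\kappa_1),\dots)$ rather than per subtree, and uses \cref{lem:dec} to split it. It observes that $\Phi(V)=\Phi(V_1)$, where $V_1$ resolves a lowest-potential node. In \cref{lem:max-nodes} it then proves, by induction over status vectors, the exchange inequality $\Phi(V_i)\ge\Phi(V_{i+1})$. That inequality uses more than monotonicity: it needs the unfolding $\Phi(\chi;(1,\kappa))=\Phi(A(\chi);(A(\chi),\kappa-1))$ together with $A(\chi)\ge A(h)+1$.

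Only after this exchange step does the lemma reduce to the computation $\Phi(1;(1,\zeta))\le A_{s+2\zeta-1}(1)$. So the part you call the main obstacle is the routine part. To close the gap, replace your per-subtree induction with an invariant on the full configuration of top nodes, and prove an exchange lemma of this kind.
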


The proof is postponed to the next subsection but below we quickly show that this is sufficient to give us a lower bound. 

\begin{theorem}\label{thm:maximalb}
  Let $P$ be a planar point set of size $N$ with at most $H$ maxima points, and $s \ge 1$ be an integer parameter. Then any algorithm for computing the maxima of $P$ that uses at most $N\cdot A_s(H)$ comparisons requires
  $\OmegaOf{\frac{N}{B}(\alpha_1(\min\left\{ H, \sqrt{\frac{N}{M}}\right\})-s)}$ I/Os. 
\end{theorem}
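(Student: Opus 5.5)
We will prove \cref{thm:maximalb} by running the adversary of this section against $\A$ with the parameter $\zeta$ chosen as large as possible while the conclusion of \cref{lem:mainlb} still forces few top nodes. By the discussion after \cref{thm:maximalb}'s hypotheses, we may assume $s \ge 3$; a constant-factor slack in comparisons shifts $s$ by a constant and is absorbed.

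First we tie the output size to top nodes and terminations. When $\A$ stops, every top node must be terminated. If some top node $v$ is not terminated, its ordinary points still sit in $T(v)$ without fixed coordinates. The adversary can then place two points of $v$ differently so that the maxima sets differ, contradicting the correctness of $\A$. Hence at termination the number of maxima equals the number of top nodes, which by \cref{lem:mainlb} is at most $A_{s+2\zeta-1}(1)$. Conversely, if the input is to have $H$ maxima, set $\zeta$ to be the largest integer with $A_{s+2\zeta-1}(1) < \min\{H,\sqrt{N/M}\}$. This gives $\zeta = \Omega(\alpha_1(\min\{H,\sqrt{N/M}\}) - s)$, since $A_{j}(1)$ grows with $j$ exactly as $\alpha_1$ inverts. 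With this choice the adversary's instance has fewer than $H$ maxima, so it is a legal input; we can pad with extra dominating points if we need exactly $H$. Termination of all top nodes then also occurs while $h_i \le \sqrt{N/M}$. By \cref{lem:numpoints} every top node therefore starts with $N_v \ge N/h_i^2 \ge M$ ordinary points, which is the reason for the $\sqrt{N/M}$ cap.

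Next we show that the total charge is $\Omega(N\zeta)$ and that it lower-bounds $B\cdot(\#\text{I/Os})$ up to constants. For the first claim: each terminated top node $v$ has its ordinary points carrying charge exactly $\zeta$. Deep points are few by \cref{lem:prune}, at most $N_v/2^{h_i}$ per resolution and summing geometrically. A constant fraction of points are never pruned into non-ordinary status except at termination, so $\Omega(N)$ ordinary points each carry charge $\zeta$.

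For the second claim, charge is incremented only at epoch transitions or at termination, and always after some top node $v$ has lost half of its $N_v \ge M$ ordinary points to descendants. Each such loss requires a comparison involving that point. Since $N_v/2 \ge M/2$ points of $v$ were compared during its activity, and at most $M$ can reside in memory, $\Omega(N_v)$ of these points had to be read into memory by I/Os after $v$ was activated. A single I/O brings at most $B$ points, so we can charge $\Omega(N_v/B)$ distinct I/Os to $v$'s increment of $N_v$ units of charge. We must make sure that different top nodes do not share I/Os. We will argue this by attributing each loaded point to the unique top node containing it; a block loads $B$ points, each attributed once.

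Dividing, the number of I/Os is $\Omega(N\zeta/B) = \Omega\bigl(n(\alpha_1(\min\{H,\sqrt{N/M}\})-s)\bigr)$. The main obstacle we expect is this amortized charging step. The difficulty is showing rigorously that each unit increment of charge on a node's points corresponds to $\Omega(1/B)$ fresh I/Os, even though points may remain in memory across epoch boundaries. We would handle it by using $N_v \ge M$, so that at most $M$ (a constant fraction of $N_v/2$) could have been resident at activation, and by attributing only reads that occur after activation.
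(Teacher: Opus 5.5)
Your outline follows the paper's proof step for step. You make the same choice $\zeta\approx(\alpha_1(\ell)-s)/2$, so that \cref{lem:mainlb} keeps the number of top nodes below $\ell\le H$. You use \cref{lem:numpoints} to make every top node large and \cref{lem:prune} to get total charge $Z=\Omega(N\zeta)$. You also give the same per-node argument: the $N_v/2$ points pushed out of $v$ between its activation and its resolution cannot all have been resident in memory. Your attribution of each read to the unique top node containing the point is correct, and slightly more explicit than the paper's summation: concurrent top nodes have disjoint point sets, and nested ones are active in disjoint time intervals.

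The step that fails is exactly the one you flagged, and the cause is the constant in the cap. With $\ell=\min\{H,\sqrt{N/M}\}$, \cref{lem:numpoints} only yields $N_v\ge N/h_i^2\ge M$, so the forced-read count $N_v/2-M$ can be $\le 0$. Your claim that $M$ is ``a constant fraction of $N_v/2$'' is false, since $M$ may equal $2\cdot(N_v/2)$. Concretely, a top node with $N_v=M$ whose points are all resident at activation can be resolved with no I/Os, so the $N_v$ units of charge it creates go unpaid.

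The paper instead takes $\ell=\min\{H,\sqrt{N/(4M)}\}$. This gives $N_v\ge 4M$ and hence at least $N_v/2-M\ge N_v/4\ge Z_v/4$ reads per resolved node. The change costs nothing asymptotically. Since $A_{i+1}(1)\ge 2A_i(1)$, one has $\alpha_1(x/2)\ge\alpha_1(x)-1$, and the resulting additive constant is absorbed in the $\Omega$. When $\alpha_1(\cdot)-s=O(1)$, the claim reduces to the trivial $\Omega(n)$ bound for reading the input.

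With that change your argument is complete. The padding remark is unnecessary, since the statement only asks for at most $H$ maxima.
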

\begin{proof}
    We first claim that the adversary creates an instance with at most $H$ maxima points. 
    We choose $\zeta = \frac{\alpha_1(\ell) - s}{2}$ and let $\ell = \min\left\{ H, \sqrt{\frac{N}{4M}}\right\}$. Then by the definition of the $\alpha$ 
    function we have $A_{s+2\zeta-1}(1) < \ell$, i.e., by \cref{lem:mainlb}, the number of top nodes will always
    be smaller than $\ell \le H$ and, thus, the claim holds. 

    Let $V'$ be the set of nodes of $T$ that have been resolved during the execution of $\A$ and let $Z = \sum\limits_{v \in V'} Z_v$ be the total charge across all ordinary points by the end of $\A$, where  $Z_v$ is the increase in charges due to resolution of each $v\in V'$. 
    Observe that by summing the bound given in \cref{lem:prune}, we get that at most a constant fraction of the points can be labeled deep, meaning,
    most of the points will be ordinary. Then, 
    since the adversarial strategy ensures that each ordinary point receives a charge of $\zeta$, we get $Z=\OmegaOf{N\zeta}$.
    We now show that at least $Z/4$ points must have been accessed in the external memory to resolve comparisons performed by $\A$, implying $\frac{Z}{4B} = \OmegaOf{\frac{N \zeta}{B}} = \OmegaOf{\frac{N}{B}(\alpha_1(\ell) - s)}$ I/O lower bound. 

    Observe that in the final epoch $j$, the number of top nodes $h_j \le \ell \le \sqrt{\frac{N}{4M}}$. Since $j$ is the final epoch, every top node $v$ must have been activated prior to the end of some epoch $j'-1 \le j$ and, by \cref{lem:numpoints} and monotonicity of $h_i$s, starts with $N_v \ge \frac{N}{h_{j'}^2} \ge \frac{N}{h_j^2} \ge 4M$ points. 
    By the time $v$ is resolved, at least $N_v/2 \ge 2M$ points of $v$ have been moved to the lower nodes due to comparisons. Thus, all these points could not be kept in the internal memory from activation till resolution of $v$, i.e., at least $N_v/2 - M \ge N_v/4$ of them must have been accessed in the external memory during this period. 
    But we also know that at the end of the resolution of $v$, each ordinary point in $T(v)$ receives an additional charge, i.e., the overall charge in $T$ is increased by $Z_v \le N_v$. 
    Summing over all nodes that have been resolved during the execution of $\A$, we get that at least $\sum_{v \in V'} \frac{N_v}{4} \ge \sum_{v\in V'} \frac{Z_v}{4} = \frac{Z}{4}$ vertices must have been accessed in the external memory during the execution of $\A$.   
\end{proof}

\subsection{Proof of Lemma~\ref{lem:mainlb}}
\label{sec:mainlb}

Thus, it remains to prove \cref{lem:mainlb}.
The number of top nodes explodes very fast and
moreover, this number also depends on the order in which the top nodes are resolved. 
To bound it, we use a potential function argument, for which we  need to introduce two crucial notions.
A \emph{potential sequence (PS)} is a finite sequence of pairs of integers, e.g., $(t_1,\kappa_1),(t_2,\kappa_2), \dots$,
where $t_i \ge 0$  and 
$0 < \kappa_1 \le \kappa_2 \le \dots$. We call $\kappa_i$ the \emph{potential} and 
a PS is allowed to be empty, denoted by $()$.
A \emph{status vector} is a pair $(h; S)$, where $h > 0$ is an integer and $S$ is a PS.
The status vectors will help us bound the number of top nodes we will ever get in the adversary argument.
In particular, a status vector $W=(h;(t_1,\kappa_1), (t_2, \kappa_2),\dots)$ represents the end of an epoch
where we have at most $h$ top nodes and we have at most $t_i$ top nodes whose points have charge $\zeta-\kappa_i$. 
We make the following observations to simplify our mathematical manipulations of the status vectors.
\begin{observation}\label{ob:decomposition}
Let $S_1 = (t_1, \kappa_1), \dots (t_n, \kappa_n)$ and $S_2 = (t'_1, \kappa'_1), \dots, (t'_m, \kappa'_m)$ be two potential sequences, such that $\kappa_n \le \kappa'_1$. Then for any integer $\kappa$, $\kappa_n \le \kappa \le \kappa'_1$: 
\begin{itemize}
\item $(h; S_1, (0, \kappa), S_2) = (h; S_1, S_2)$, and 
\item $(h; S_1, (t, \kappa), S_2) = (h; (t', \kappa), (t-t', \kappa))$ for any integer $0 < t' < t)$.
\end{itemize}
\end{observation}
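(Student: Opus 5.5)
The plan is to prove both items directly from what a status vector means, since the observation only says that two syntactically different status vectors encode the same information. I will read $W=(h;S)$ with $S=(t_1,\kappa_1),(t_2,\kappa_2),\dots$ as follows. There are at most $h$ top nodes. For each potential value $\kappa$, the number of top nodes whose ordinary points have charge $\zeta-\kappa$ is at most the sum of all $t_i$ with $\kappa_i=\kappa$. So two status vectors are equal when they have the same $h$ and the same total $t$ for every potential value $\kappa$.

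The first step is to check that both sides of each identity are legitimate status vectors. On the right-hand side of the first item and the left-hand side of the second, the potentials are non-decreasing because $S_1$ and $S_2$ are potential sequences and $\kappa_n \le \kappa'_1$. Inserting a pair with potential $\kappa$ where $\kappa_n\le\kappa\le\kappa'_1$, or replacing one pair by two pairs with the same potential, keeps the sequence non-decreasing. The integrality and non-negativity conditions also hold: $t'$ and $t-t'$ are positive integers because $0<t'<t$.

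For the first item, inserting $(0,\kappa)$ adds $0$ to the total for potential $\kappa$ and changes no other total. It therefore encodes exactly the same constraints as $(h;S_1,S_2)$. For the second item, I read the right-hand side as $(h;S_1,(t',\kappa),(t-t',\kappa),S_2)$; the displayed version omits $S_1$ and $S_2$, which appears to be a typo. Splitting $(t,\kappa)$ into $(t',\kappa),(t-t',\kappa)$ leaves the total for potential $\kappa$ equal to $t'+(t-t')=t$ and leaves all other totals and $h$ unchanged. Hence the two vectors encode the same bounds.

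There is no real obstacle in this argument. The only point needing care is to fix the additive interpretation of repeated potentials explicitly. That convention is what makes the equalities hold, and it is also what lets the later potential-function argument split or merge groups of top nodes freely.
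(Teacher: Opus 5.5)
Your proposal is correct and matches the paper's own justification, which is this semantic reading: $(0,\kappa)$ stands for zero top nodes and can be dropped, and a group of $t$ top nodes of equal charge can be viewed as groups of $t'$ and $t-t'$. Your reading of the second item as $(h;S_1,(t',\kappa),(t-t',\kappa),S_2)$ is the intended one, as its use in the proof of \cref{lem:max-nodes} confirms; one small caveat on your closing remark is that when the identity is later applied inside $\Phi$ (\cref{lem:zeta,lem:max-nodes}), what is actually needed is that $\Phi$ takes the same value on both forms, and this comes from the recursion for $\Phi$ together with \cref{lem:dec} (both forms amount to applying $h\mapsto\Phi(h;(1,\kappa))$ exactly $t$ times), not from the counting convention alone.
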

The first one states that since $(0, \kappa)$ represents 0 top nodes of charge $\zeta-\kappa$, they can be safely omitted from the status vector. The second one states that we can view a collection of $t$ top nodes as two collections of $t'$ and $t-t'$ nodes (with the same charges).

We claim that there is actually an explicit way to maximize the number of top nodes via
function  $\Phi$ defined as follows. Let $A(x) = A_{s+1}(x)$ and $S = (t_1, \kappa_1), (t_2, \kappa_2), \dots$ be an arbitrary PS. Then for all integers $t\ge 0$ and  $0 < \kappa \le \kappa_1$: 
\begin{align}
  \Phi(h; ()) & = h & \text{and} \label{eq:phi-base}\\
    \Phi(h;(t,\kappa), S) &= \left\{ 
      \begin{array}{ll}
        \Phi(h;S) & \text{if $t = 0$}, \\ 
        \Phi(A(h) ; (t-1,\kappa), S)  & \text{if $t > 0$ and $\kappa=1$}, \\ 
        \Phi(A(h) ; (A(h),\kappa-1), (t-1,\kappa), S)  & \text{if $t > 1$ and $\kappa>1$}. \\ 
      \end{array}\right. \label{eq:phi}
\end{align}
The $\Phi$ function captures (the upper bound on) the number of top nodes of various potential in the algorithm if the top node with the least potential are resolved first. In \cref{lem:max-nodes}, we will show that this resolution order produces the maximum number of top nodes.
So consider an epoch $i$ with at most $h_i$ top nodes. 
By \cref{ob:topnodes}, we can bound
$h_{i+1} < A_{s+1}(h_i) = A(h_i)$. 
\cref{eq:phi-base} captures the base case scenario where there are at most $h$ top nodes
and all the top nodes are terminated (represented by the empty status vector $()$).
\cref{eq:phi} captures three possible values of the first term $(t, \kappa)$ in the potential sequence of $\Phi$: 
\begin{itemize}
\item If $t = 0$, there are no nodes with potential $\kappa$, so this term can be ignored.
\item If there are $t$ nodes with potential $\kappa = 1$, i.e., the node contains items with charges $\zeta-1$, one of those nodes is terminated, resulting in the activation of nodes with $\zeta$ charge which, in turn, can only reach the termination condition.
Of course this still increases the number of top nodes to at most $A(h)$, of which $t-1$ top nodes have potential $\kappa = 1$.
\item In general, if all the top nodes have points with potential $\kappa > 1$, application of the resolution to one of the $t$ top nodes with the
smallest potential 
creates at most $A(h)$ nodes of potential $\kappa-1$ and leaves at most $t-1$ nodes of potential $\kappa$.
\end{itemize}

To prove that the $\Phi$ function is an upper bound on the number of top nodes, 
we need the following structural properties, which follow from the definitions of $\Phi$ and of the PS.
\begin{lemma}\label{lem:dec}
Let $S_1 = (t_1, \kappa_1), \dots, (t_n, \kappa_n)$ and $S_2 = (t'_1, \kappa'_1), \dots, (t'_m, \kappa'_m)$ be two potential sequences, such that $\kappa_n \le \kappa'_1$.
  Then, for any $h > 0$ and $\kappa_n \le  \kappa \le \kappa'_1$: $\Phi(h; S_1, S_2) = \Phi( \Phi(h;S_1); S_2)$ and $\Phi(h; S_1, (0, \kappa), S_2) = \Phi(h; S_1, S_2)$.
\end{lemma}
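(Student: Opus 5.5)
Both identities follow by well-founded induction on the recursion defining $\Phi$ in \cref{eq:phi-base,eq:phi}. I will prove the decomposition identity $\Phi(h;S_1,S_2)=\Phi(\Phi(h;S_1);S_2)$ first and then deduce the second identity from it. The recursion terminates because each step either removes the leading pair or lowers the sequence in a well-founded order. Concretely, read the sequence as a multiset of pairs and order it by the multiset ordering on potentials. The case $t=0$ deletes a pair. The case $\kappa=1$ decreases a count at the smallest potential. The case $\kappa>1$ replaces one node of potential $\kappa$ by finitely many nodes of the strictly smaller potential $\kappa-1$. The multiset ordering of Dershowitz--Manna is well-founded, so induction along this order is legitimate. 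Every recursive call leaves the suffix $S_2$ untouched, since the recursion only ever inspects and rewrites the first pair of the sequence.

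\textbf{Decomposition.} The proof is by induction on the order above applied to $S_1$.
\begin{itemize}
\item \emph{$S_1=()$.} Both sides equal $\Phi(h;S_2)$, because $\Phi(h;())=h$ by \cref{eq:phi-base}.
\item \emph{$S_1=(t,\kappa),S_1'$ with $t=0$.} Then $\Phi(h;S_1,S_2)=\Phi(h;S_1',S_2)$ and $\Phi(h;S_1)=\Phi(h;S_1')$. The induction hypothesis applied to $S_1'$ closes this case.
\item \emph{$t>0$, $\kappa=1$.} The rule gives $\Phi(h;S_1,S_2)=\Phi(A(h);(t-1,1),S_1',S_2)$. Likewise $\Phi(h;S_1)=\Phi(A(h);(t-1,1),S_1')$. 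The sequence $(t-1,1),S_1'$ is smaller than $S_1$, so the induction hypothesis gives equality.
\item \emph{$\kappa>1$.} The new prefix $(A(h),\kappa-1),(t-1,\kappa),S_1'$ is again a potential sequence whose last potential is at most $\kappa'_1$, so the hypothesis of the lemma still holds. It is also smaller in the order, so the induction hypothesis applies exactly as in the previous case.
\end{itemize}

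\textbf{Zero insertion.} Apply the decomposition with the concatenated prefix $S_1,(0,\kappa)$, whose last potential $\kappa$ satisfies $\kappa\le\kappa'_1$. This gives $\Phi(h;S_1,(0,\kappa),S_2)=\Phi(\Phi(h;S_1,(0,\kappa));S_2)$. Applying the decomposition again to the inner term gives $\Phi(h;S_1,(0,\kappa))=\Phi(\Phi(h;S_1);(0,\kappa))$. By the $t=0$ rule, $\Phi(\Phi(h;S_1);(0,\kappa))=\Phi(\Phi(h;S_1);())$, and by \cref{eq:phi-base} this equals $\Phi(h;S_1)$. One more use of the decomposition then yields $\Phi(h;S_1,S_2)$.

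The main obstacle is the base rule mismatch in \cref{eq:phi}. The third case is stated for $t>1$, so the case $t=1$, $\kappa>1$ must be read as applying the same rule and producing $(0,\kappa)$, which is then removed by the $t=0$ rule. Under this reading the case analysis above is exhaustive. The only substantive point is the well-foundedness of the order, which is why I use the multiset ordering.
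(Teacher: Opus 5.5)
Your proof is correct and takes the approach the paper intends: the paper only remarks that the lemma follows from the definitions of $\Phi$ and of a potential sequence, and your induction along the recursion (reading the third case of \cref{eq:phi} as applying for $t\ge 1$, as the paper itself does in the proof of \cref{lem:zeta}) makes this precise. One thing to tighten is the termination order, because the $t=0$ step leaves the multiset of node potentials unchanged; use instead the Dershowitz--Manna extension of the lexicographic order on pairs by $(\kappa,t)$, under which all three rules strictly decrease the prefix.
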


\begin{restatable}{lemma}{lemphi}\label{lem:zeta}
  For any integer $\kappa \ge 1$: $\Phi(h; (1,\kappa)) \le A_{s+2\kappa-1}(h)$.
\end{restatable}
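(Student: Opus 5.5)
The plan is to prove the bound by induction on $\kappa$, writing $f_\kappa(h) = \Phi(h;(1,\kappa))$. Along the way I will use some elementary monotonicity facts about the Ackermann-like hierarchy: each $A_i$ is non-decreasing with $A_i(x)\ge x$, and $A_i(x)\le A_{i'}(x)$ for $i\le i'$. These follow by a routine induction from $A_0(N)=N$, $A_1(N)=2^N$ and $A_{i+1}(N)=A_i^{(N+1)}(N)$. I also need that $\Phi(h;S)$ is non-decreasing in $h$. This follows by induction on the recursion in \cref{eq:phi}, since each branch either keeps $h$ or replaces it by $A(h)$, and $A=A_{s+1}$ is monotone.

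\emph{Base case $\kappa=1$.} By the second branch of \cref{eq:phi}, $\Phi(h;(1,1)) = \Phi(A(h);(0,1))$. By the first branch and \cref{eq:phi-base}, this equals $\Phi(A(h);()) = A(h) = A_{s+1}(h) = A_{s+2\cdot1-1}(h)$.

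\emph{Iteration identity.} The key structural step is to show that $\Phi(h;(t,\kappa)) = f_\kappa^{(t)}(h)$ for every $t\ge 0$. To get this, I split $(t,\kappa)$ into $t$ copies of $(1,\kappa)$ using the second item of \cref{ob:decomposition}. I then peel them off one at a time with the first identity of \cref{lem:dec}: $\Phi(h;S_1,S_2)=\Phi(\Phi(h;S_1);S_2)$ with $S_1=(1,\kappa)$. Combining this with monotonicity and the inductive hypothesis $f_{\kappa-1}\le A_{s+2\kappa-3}$ gives $\Phi(x;(x,\kappa-1)) \le A_{s+2\kappa-3}^{(x)}(x)$. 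I apply the inductive bound iteratively here, using at each step that the argument only grows.

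\emph{Inductive step $\kappa\ge 2$.} The third branch of \cref{eq:phi} (the $t=1$ situation is handled the same way, reading the branch condition as $t\ge1$) gives $f_\kappa(h) = \Phi(A(h);(A(h),\kappa-1),(0,\kappa))$. By \cref{lem:dec}, the trailing $(0,\kappa)$ can be dropped. Setting $x=A(h)=A_{s+1}(h)$ and $g=A_{s+2\kappa-3}$, we get
\[
f_\kappa(h)\le g^{(x)}(x)\le g^{(x+1)}(x) = A_{s+2\kappa-2}(x).
\]
Since $\kappa\ge 2$, we have $s+1\le s+2\kappa-2$, so $x\le A_{s+2\kappa-2}(h)$. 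Hence
\[
f_\kappa(h)\le A_{s+2\kappa-2}^{(2)}(h)\le A_{s+2\kappa-2}^{(h+1)}(h)=A_{s+2\kappa-1}(h),
\]
where the middle inequality uses $h\ge 1$ and $A_i(y)\ge y$.

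The main obstacle is the bookkeeping in the iteration identity. Specifically, one has to justify that the nested status vectors can be evaluated sequentially, which is what \cref{lem:dec} and \cref{ob:decomposition} provide. One also has to check that the inductive hypothesis can be applied inside an iterate, which needs the monotonicity of $\Phi$ in $h$. The Ackermann inequalities themselves are routine.
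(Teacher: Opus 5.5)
Your proof is correct and follows the paper's own argument. It has the same base case and uses the third branch of \cref{eq:phi} with $t=1$ (the paper's proof applies it there too). It then splits $(A(h),\kappa-1)$ into copies of $(1,\kappa-1)$, peels them off with \cref{lem:dec}, and ends with the same chain $A^{(A(h))}_{s+2\kappa-3}(A(h))\le A_{s+2\kappa-2}(A_{s+1}(h))\le A^{(2)}_{s+2\kappa-2}(h)\le A_{s+2\kappa-1}(h)$. Your identity $\Phi(h;(t,\kappa))=f_\kappa^{(t)}(h)$ just makes explicit the paper's ``simple inductive argument'' (*).

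One small remark: your one-line justification that $\Phi$ is non-decreasing in $h$ overlooks that the third branch also inserts $A(h)$ as a count, so that induction would need monotonicity in the counts as well. You can bypass this entirely, because $f_{\kappa-1}\le g$ pointwise together with monotonicity of $g=A_{s+2\kappa-3}$ already gives $f_{\kappa-1}^{(x)}(x)\le g^{(x)}(x)$.
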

\begin{proof} \emph{(By induction)}
  When $\kappa = 1$, for any $t > 0$: $\Phi(h; (t, 1)) = \Phi(A(h), (t-1, 1)) = 
  A^{(t)}_{s+1}(h)$ and so $\Phi(h; (1,1)) = A_{s+1}(h)$.
  Now assume that $\kappa \ge 2$ and that the claim is true for all positive integers $\kappa'< \kappa$.

  \begin{align*}
\Phi(h; (1,\kappa)) 
  & = \rlap{$\Phi( A(h); (A(h),\kappa-1), (0, \kappa)) = \Phi( A(h); (A(h),\kappa-1))$}
    & \text{by~\cref{eq:phi}}\\
  & = \Phi(A(h); \underbrace{(1, \kappa-1), \dots, (1, \kappa-1)}_{\text{$A(h)$ times}}) 
    & \text{by \cref{ob:decomposition}} \\
  & = \Phi(\Phi(A(h); (1,\kappa-1)); \underbrace{(1, \kappa-1), \dots, (1, \kappa-1)}_{\text{$A(h)-1$ times}}) 
    & \text{by \cref{lem:dec}} \\
  & \le \Phi(A_{s+2\kappa-3}(A(h));  \underbrace{(1, \kappa-1), \dots, (1, \kappa-1)}_{\text{$A(h)-1$ times}}) 
    & \text{by Inductive Hypothesis} \\
  & \le A^{(A(h))}_{s+2\kappa-3}(A(h)) 
    & \text{(*)} \\
  & < A_{s+2\kappa-2}(A_{s+1}(h)) < A_{s+2\kappa-2}(A_{s+2\kappa-2}(h))  
    & A(h) = A_{s+1}(h), \kappa \ge 2 \\
  &\le A_{s+2\kappa-1}(h),
    & A_{i}(A_i(x)) < A_{i+1}(x)
  \end{align*}
where (*) follows from a simple inductive argument.
\end{proof}

\begin{lemma}\label{lem:max-nodes}
    Let $V=(h ; (t_1,\kappa_1), (t_2,\kappa_2), \dots, (t_m,\kappa_m)$ 
    be a status vector that represents a state of the top nodes in the algorithm.
    Regardless of the order in which top nodes are resolved, the resulting
    number of top nodes is at most $\Phi(V)$.
\end{lemma}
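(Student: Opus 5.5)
We prove the lemma by induction on a well-founded measure of the status vector: the multiset of potentials $\kappa$ ordered first by $\kappa$ and then by multiplicity, together with the number of epochs that remain. The number of epochs is finite because every ordinary point carries charge at most $\zeta$. A status vector $V=(h;S)$ records at most $h$ top nodes and at most $t_j$ of them with potential $\kappa_j$. A single resolution (an epoch transition) picks some top node of potential $\kappa_j$ and produces the new vector
\[
V'=(A(h);\,(t_1,\kappa_1),\dots,(A(h),\kappa_j-1),(t_j-1,\kappa_j),\dots).
\]
For $\kappa_j=1$ the entry $(A(h),0)$ is dropped, because those nodes can only terminate. Termination of a top node inside an epoch never increases the count, so only transitions matter. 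By the induction hypothesis the final number of top nodes is at most $\Phi(V')$. It therefore suffices to show the monotonicity/exchange inequality $\Phi(V')\le\Phi(V)$ for every choice of $j$. The base case is a vector with an empty PS; then no further resolution can occur and the bound is $h=\Phi(h;())$.

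The proof of $\Phi(V')\le\Phi(V)$ uses two auxiliary facts, each proved by induction on the recursion defining $\Phi$.
\begin{enumerate}
\item Monotonicity: $\Phi(h;S)$ is nondecreasing in $h$ and in each $t_i$. Both $A$ and the recursion in \cref{eq:phi} are monotone, and increasing $t_i$ only adds extra recursive steps that increase the first argument.
\item Exchange: resolving the node of least potential first is never worse. Concretely, for $\kappa\le\kappa'$,
\[
\Phi\bigl(A(h);(t,\kappa),(A(h),\kappa'-1),(t'-1,\kappa')\bigr)\le \Phi\bigl(A(h);(A(h),\kappa-1),(t-1,\kappa),(t',\kappa')\bigr).
\]
\end{enumerate}
With \cref{lem:dec} and \cref{ob:decomposition}, we split $V$ as $(h;S_1,(t_j,\kappa_j),S_2)$ and compare. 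If $j=1$, then $V'$ is exactly the vector obtained by unfolding \cref{eq:phi} once, so $\Phi(V')=\Phi(V)$. If $j>1$, we move the resolution of the lower-potential prefix $S_1$ ahead of the resolution at $\kappa_j$. The new entries $(A(h),\kappa_j-1)$ created by resolving a $\kappa_j$ node come after $S_1$ in the PS order, or are merged with entries of equal potential.

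The main obstacle is the exchange inequality. Resolving a higher-potential node first does two things: it inflates $h$ early, and it inserts a block of $A(h)$ nodes of potential $\kappa_j-1$ that may precede some entries of $S_1$. I would prove the inequality by induction on $\kappa'$ (and then on $t$), reducing to the key comparison $\Phi(A(h);(1,\kappa))\ge A(\Phi(h;(1,\kappa)))$-type growth bounds. These should follow from the fact that $\Phi(h;(1,\kappa))$ is an iterate of $A$ at least as large as $A$ itself, and from $A_i(A_i(x))<A_{i+1}(x)$. If a clean exchange fails, the fallback is to show directly that $\Phi(V)\ge\Phi(V')$ by bounding both via \cref{lem:zeta}-style closed forms $A_{s+2\kappa-1}$ applied in sequence. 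This would lose only constants in the Ackermann index, which is harmless for \cref{lem:mainlb}.
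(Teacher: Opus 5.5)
Your framework matches the paper's: induct over the remaining process, reduce to showing $\Phi(V')\le\Phi(V)$ for every choice of resolved potential, and note that resolving a least-potential node gives equality by one unfolding of \cref{eq:phi}. Your multiset-of-potentials measure is, if anything, a cleaner well-founded order than the paper's induction over ``lexicographically greater'' vectors.

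The gap is that the step carrying all the content, your exchange inequality, is not proved. You only say you \emph{would} prove it by induction on $\kappa'$ via growth bounds of the form $\Phi(A(h);(1,\kappa))\ge A(\Phi(h;(1,\kappa)))$, without showing how such bounds give the inequality. Your fallback, comparing closed forms like $A_{s+2\kappa-1}$ and losing constants in the Ackermann index, proves something weaker than the lemma, which asserts the exact bound $\Phi(V)$.

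There are two further problems with the exchange as you state it:
\begin{itemize}
\item It has the starting value equal to the inserted block size $A(h)$ and no lower-potential prefix, so it covers only $j=2$. For $j\ge 3$, or when chaining swaps $V_{i+1}\to V_i$ with $i\ge 2$, both vectors begin with the common prefix $X=(t_1,\kappa_1),\dots,(t_{i-1},\kappa_{i-1})$. After collapsing $X$ with \cref{lem:dec}, the starting value exceeds $A(h)$, while the newly created blocks still have size $A(h)$. Your inequality does not apply to that situation.
\item You also need monotonicity of $\Phi$ in the potentials $\kappa$, not only in $h$ and the $t_i$.
\end{itemize}

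The missing idea is a short padding argument that needs no Ackermann growth estimates. Let $V_i$ be the vector after resolving a node of potential $\kappa_i$. Show $\Phi(V_i)\ge\Phi(V_{i+1})$ for each $i$ (trivial if $\kappa_i=\kappa_{i+1}$); this telescopes from $\Phi(V)=\Phi(V_1)$.

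Take $X$ as above, $Y=(A(h),\kappa_i-1),(t_i-1,\kappa_i)$ (dropping the first entry if $\kappa_i=1$), and $Z=(t_{i+1}-1,\kappa_{i+1}),(t_{i+2},\kappa_{i+2}),\dots$. Then $V_i=(A(h);X,Y,(1,\kappa_{i+1}),Z)$. Splitting $(t_i,\kappa_i)$ gives $V_{i+1}=(A(h);X,(t_i-1,\kappa_i),(1,\kappa_i),(A(h),\kappa_{i+1}-1),Z)$.

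Inserting the phantom block $(A(h),\kappa_i-1)$ into $V_{i+1}$ can only increase $\Phi$, and it makes both vectors share the prefix $X,Y$. Collapsing that prefix gives $\chi=\Phi(A(h);X,Y)\ge A(h)$, so
\[
\Phi(V_{i+1})<\Phi\bigl(\chi;(1,\kappa_i),(A(h),\kappa_{i+1}-1),Z\bigr),\qquad \Phi(V_i)=\Phi\bigl(A(\chi);(A(\chi),\kappa_{i+1}-1),Z\bigr).
\]
Since $A(\chi)\ge A(h)+1$, monotonicity in $h$ and $t$ bounds $\Phi(V_i)$ below by $\Phi\bigl(\chi;(1,\kappa_{i+1}-1),(A(h),\kappa_{i+1}-1),Z\bigr)$. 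Monotonicity in $\kappa$, using $\kappa_{i+1}-1\ge\kappa_i$, bounds that in turn below by the right-hand side of the first bound.

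The prefix problem disappears because $\chi$ is allowed to exceed the block size $A(h)$.
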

\begin{proof}
Assume inductively that the claim is true for any $V'$ that is lexicographically greater than $V$. In the base case, the status vector is $(h'; ())$, i.e., the statement is vacuously true.
  Assume that the next node to get resolved has potential $\kappa_i$. 
  Let $V_i$ be the resulting status vector, i.e., 
  \[
      V_i = (A(h) ; (t_1,\kappa_1), \dots,  (t_{i-1},\kappa_{i-1}), (A(h),\kappa_{i}-1),(t_i-1,\kappa_i), (t_{i+1},\kappa_{i+1})\dots, (t_m,\kappa_m)).
\]
  Observe that no matter which top node the algorithm chooses to resolve first, the number of 
  top nodes at the next epoch will be upper bounded by $A(h) > h$, i.e, 
    $V_i > V$ lexicographically. Thus, for every $1\le i\le m$, inductively, $\Phi(V_i)$ is an upper bound for the number of top nodes.
    Observe that $\Phi(V) = \Phi(V_1)$, so to prove the claim it is sufficient to prove
    that $\Phi(V_i) \ge \Phi(V_{i+1})$ for every $1 \le i < m$. 

  Consider an arbitrary $1 \le i < m$. 
  Observe that if $\kappa_i = \kappa_{i+1}$ then $V_i = V_{i+1}$ and we have nothing to prove.
  Thus, assume $\kappa_{i+1} > \kappa_i$.
  Define the following three potential sequences:
  \begin{align*}
      \quad X =(t_1, \kappa_1), &\dots, (t_{i-1},\kappa_{i-1}), 
      \qquad\qquad\qquad\qquad Y =(A(h),\kappa_{i}-1),(t_i-1,\kappa_i), \\
      &\quad Z =(t_{i+1}-1, \kappa_{i+1}), (t_{i+2},\kappa_{i+2}), \dots, (t_{m},\kappa_{m}). & 
  \end{align*}
  Then, using \cref{ob:decomposition} we can rewrite $V_i$ and $V_{i+1}$ as follows:
\begin{align*}
    V_i     & = (A(h) ; X,Y,(1,\kappa_{i+1}),Z) \\
    V_{i+1} & = (A(h) ; X,(t_i,\kappa_{i}),(A(h),\kappa_{i+1}-1),Z) \\
            & = (A(h) ; X,(t_i-1,\kappa_{i}),(1,\kappa_{i}),(A(h),\kappa_{i+1}-1),Z).
\end{align*}
Observe that $\Phi$ is a strictly increasing function of all of its parameters. Then
\begin{align*}
    \Phi(V_{i+1})&< \Phi(A(h) ; X,(A(h),\kappa_{i}-1),(t_i-1,\kappa_{i}),(1,\kappa_{i}),(A(h),\kappa_{i+1}-1),Z) \\
        & = \Phi(A(h) ; X,Y,(1,\kappa_{i}),(A(h),\kappa_{i+1}-1),Z) \\
        & = \Phi( \Phi(A(h) ; X,Y); (1,\kappa_{i}),(A(h),\kappa_{i+1}-1),Z) \\
        & = \Phi(\chi; (1,\kappa_{i}),(A(h),\kappa_{i+1}-1),Z),
\end{align*}
where we set $\chi=\Phi(A(h) ; X,Y)$.  On the other hand, we have
\begin{align*}
    \Phi(V_i) &=   \Phi(A(h) ; X,Y,(1,\kappa_{i+1}),Z)  =  \Phi( \Phi(A(h) ; X,Y);(1,\kappa_{i+1}),Z)= \Phi(\chi; (1,\kappa_{i+1}),Z) \\
    &= \Phi(A(\chi); (A(\chi),\kappa_{i+1}-1),Z). 
\end{align*}
Observe that $\chi\ge A(h) > h$ which implies 
$A(\chi) > A(h)$ and thus $A(\chi) \ge A(h)+1$.
Thus,
\begin{align*}
    \Phi(V_i) &= \Phi(A(\chi); (A(\chi),\kappa_{i+1}-1),Z) \ge  \Phi(\chi; (A(\chi),\kappa_{i+1}-1),Z)  \\ 
     &\ge \Phi(\chi; (A(h)+1,\kappa_{i+1}-1),Z) =\Phi(\chi;(1,\kappa_{i+1}-1), (A(h),\kappa_{i+1}-1),Z) \\
     &\ge \Phi(\chi; (1,\kappa_{i}),(A(h),\kappa_{i+1}-1),Z) > \Phi(V_{i+1}), \tag{*} 
\end{align*}
where (*) follows from $\kappa_{i+1}-1 \ge \kappa_i$, and $\Phi$ being an increasing function.
\end{proof}

\section{Upper Bound for Planar Convex Hull}
\label{sec:cg}

In this section, we show that an algorithm similar to \Call{Maxima}{} can be used to compute the
output-sensitive convex hull of a planar point set.  
Recall that the algorithm starts with a (seed) parameter $h$. 
Observe that if $h\ge \sqrt{N}$, then we can simply switch to a worst-case $\OhOf{N\log N} = \OhOf{N\log h}$-time convex hull algorithm, e.g., Graham's scan, which can compute
the convex hull by sorting and scanning. 
Thus, in the rest of this section, we assume that $h < \sqrt{N}$.
The algorithm distributes the points into $2h$ buckets/subsets,
$P_1, \dots, P_{2h}$, of size roughly $\frac{N}{2h}$ and each pair of neighboring buckets $P_i$ and $P_{i+1}$ separated by a vertical line.
In the maxima algorithm, the next step (line~\ref{code:prune}) is to eliminate all points in each bucket that are dominated by the points in the buckets to its right.
In the context of the upper hull, this is slightly complicated and it involves pruning the points below the ``bridges'', which straddle the boundary between pairs of neighboring buckets $P_i$ and $P_{i+1}$.
We outline this process below.

To perform the pruning step, we adapt the strategy of
Kirkpatrick and Seidel's algorithm for the convex hull~\cite{kirkpatrick-seidel-CH}.
Their main technique is to partition the input points into two equal sets with a vertical line $\ell$ and then
find the ``bridge'', i.e., the edge of the upper hull intersected by $\ell$ in linear time. 
Here, we would like to generalize their technique to a setting that involves $2h$ subsets of points.

In our case, we have multiple buckets. 
Let $\ell_i$ be the vertical line separating (lies between) $P_i$ and $P_{i+1}$ and 
a bridge $b_i$ is the edge of the upper hull that intersects $\ell_i$ (it separates $P_i$ and $P_{i+1}$).
Let $Q$ be the set of all end points of all the bridges, $b_1, \dots, b_{2h-1}$.
Observe that pruning $P_i$ is trivial if all the \emph{bridges} 
intersecting $\ell_{i-1}$ and $\ell_i$ are already computed:
simply remove the points that lie below the bridges and
recurse on the remaining points in each subset $P_i$.
Thus, the pruning step can be reduced to a \emph{multi-bridge finding step} where the goal is to find all the bridges.

\begin{figure}[t]
  \centering
  \includegraphics[scale=0.5]{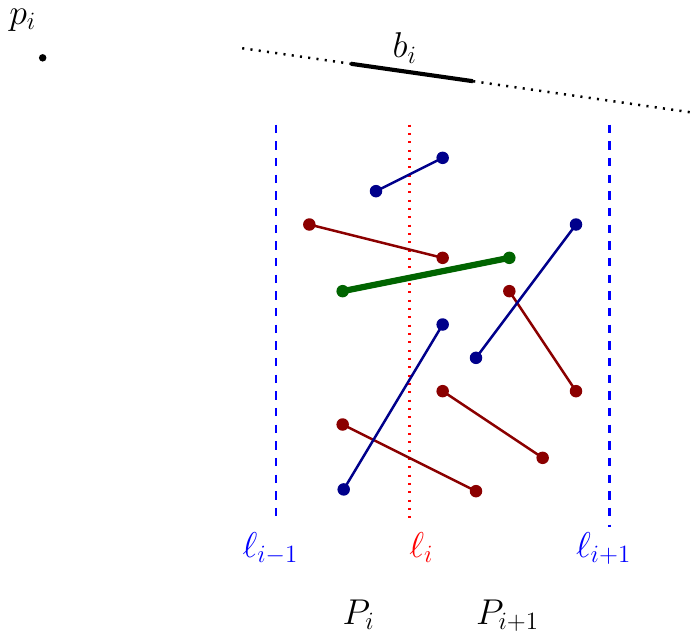}
  \caption{The collection $C_i$ initially contains all the points between $\ell_{i-1}$ and $\ell_{i+1}$. 
    The bridge $b_i$ intersects $\ell_i$ but could potentially extend beyond the collection in either directions
    which means the end points of $b_i$ are not guaranteed to be in the collection $C_i$.
    The green (thicker) pair is the pair with the median slope, $m_i$, among the pairs in the collection. 
    Red pairs have smaller slope than $m_i$, whereas blue pairs have larger slopes.}
  \label{fig:chprune}
\end{figure}

To find the bridges, we maintain $2h-1$ \emph{collections}:
The $i$-th collection, $C_i$, is initialized as $P_i \cup P_{i+1}$ and observe that each point belongs to at most
two collections. 
Then, we show that we can run the original Kirkpatrick and Seidel's algorithm on each collection.
However, there are some details to take care of.
During the algorithm we will be pruning a constant fraction of each $C_i$ until their total size becomes $\OhOf{N/\log N}$.
We will maintain the invariant that at all times 
\begin{align}
    Q \subset \bigcup_{i=1}^{2h-1} C_i. \label{eq:pruneinv}
\end{align}

The pruning strategy is as follows.
Let $C=\cup_{i=1}^{2h-1}C_i$.
First, we pair the points of $C_i$ arbitrarily and compute the median slope, $m_i$, of the pairs using the linear time and I/O median finding algorithm~\cite{selection}.
Next, we show that for each $m_i$, we can find an extreme point~$p_i$ of $C$ (not necessarily in $C_i$) in the direction orthogonal to $m_i$ in $\Split(N,h)$ time. 

\begin{lemma}\label{lem:ex}
  Given a set $P$ of $N$ points  and a set of $k$ slopes, for $k<\sqrt{N}$,
  we can compute the extreme points along each slope in $\OhOf{\Split(N,k)}$ time. 
\end{lemma}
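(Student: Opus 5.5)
The plan is to reduce extreme-point queries in $k$ directions to a distribution of $P$ into $\Theta(k)$ buckets, each handled by a linear scan. First sort the $k$ slopes, $\sigma_1 \le \dots \le \sigma_k$. Since $k<\sqrt{N}$, this costs $\OhOf{k\log k}=\OhOf{N}$ time, which is dominated by the budget. The extreme point of $P$ in the direction orthogonal to slope $\sigma$ (on the upper side) is the point maximizing $y-\sigma x$. It is the vertex of the upper hull of $P$ whose two incident edges have slopes bracketing $\sigma$. So the queries amount to locating $k$ sorted slopes among the edges of the upper hull, and consecutive queries partition the hull into $k+1$ chains ordered by $x$.

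\textbf{Main idea.} Use duality. The point $p=(p_x,p_y)$ becomes the line $f_p(\sigma)=p_y-\sigma p_x$, and the extreme point for $\sigma$ is the line on the upper envelope at abscissa $\sigma$. Splitting by $x$-coordinate alone does not directly localize the answers, so instead I would proceed as follows. Distribute $P$ by $x$-coordinate into $k$ buckets $P_1,\dots,P_k$ using the cache-oblivious distribution, at cost $\Split(N,k)$ in I/Os and $\OhOf{N\log k}$ comparisons. Then, in one scan, compute for every bucket $P_j$ and every slope $\sigma_i$ the local maximizer of $y-\sigma_i x$. Done naively this is $k^2$ work per bucket, which is too much.

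\textbf{Controlling the per-bucket work.} I would use a monotonicity (total-monotonicity/SMAWK-like) argument. The global maximizer for $\sigma_i$ has $x$-coordinate non-increasing in $\sigma_i$. Therefore, after computing the global answers for a sparse subset of slopes (every $(N/k)$-th one, or a median slope), each bucket only needs to be checked against the slopes whose answers could lie in it.

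Concretely, recurse in the style of distribution. Take the median slope $\sigma_{k/2}$ and find its extreme point $p^*$ by a linear scan, in $\OhOf{N}$ time and $\Scan(N)$ I/Os. Points to the left of $p^*$ can only be extreme for slopes $\ge\sigma_{k/2}$, and points to the right only for slopes $\le \sigma_{k/2}$. Splitting $P$ at $p^*$ costs a scan, and we recurse on the two halves with the two halves of the slope set.

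In the recursion tree the slope count halves at every level, giving depth $\log k$. However, the point sets need not be balanced, so the total work is $\OhOf{N\log k}$ time, since each level scans every point once. That matches the time claim. For I/Os, this binary recursion gives $n\log k$ rather than $n\log_m k$. To fix this, I would process $\Theta(m)$ slopes per level instead of one: choose $m$ evenly spaced slopes, compute their $m$ extreme points in one scan (keeping $m$ running maxima in memory), and then split $P$ into $m+1$ parts by the $x$-coordinates of these extreme points in a single scan with $m$ output buffers. This yields recursion depth $\log_m k$ and $\OhOf{\Split(N,k)}$ I/Os. Cache-obliviously, I would instead mimic the funnel/distribution sweeping structure of the cited distribution algorithm, which is the standard way to obtain $\Split(N,k)$ without knowing $m$.

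\textbf{Main obstacle.} The hardest step is making this cache-oblivious with exactly the $\Split(N,k)$ bound. The splitters here are determined adaptively: they are extreme points, not precomputed pivots. Moreover, the subproblems are unbalanced in point count, so the standard analysis, which assumes equal sizes, must be replaced by a per-point charging argument. Each point participates in at most $\log_m k$ levels, because its slope range shrinks by a factor of $m$ per level. The condition $k<\sqrt{N}$ ensures that the $\OhOf{k\log k}$ slope sorting and the bookkeeping for the slopes are negligible relative to $\Split(N,k)$.
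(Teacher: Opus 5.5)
There is a genuine gap, and it sits in the step that carries the whole lemma. The lemma is used inside a cache-oblivious algorithm, so it needs one cache-oblivious procedure that achieves both $O(N\log k)$ comparisons and $O(\Split(N,k))$ I/Os.

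\begin{itemize}
\item Your binary median-slope recursion is correct and cache-oblivious. It gives $O(N\log k)$ time, but only $O(n\log k)$ I/Os.
\item Your $m$-way variant is cache-aware. As described, it also loses the time bound: keeping $m$ running maxima of $y-\sigma x$ during a scan costs $\Theta(m)$ per point per level, so $\Theta(Nm\log_m k)$ time overall.
\item The cache-oblivious version is only asserted (``mimic the funnel/distribution sweeping structure''). You yourself point out why the standard distribution analysis does not transfer: the splitters are adaptive outputs and the subproblems are unbalanced.
\end{itemize}

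So no actual algorithm in the proposal attains both bounds, even cache-awarely, and certainly not cache-obliviously.

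The idea you are missing is that the points need no spatial localization at all.
\begin{itemize}
\item Sort the $k$ slopes once; this is cheap because $k<\sqrt N$.
\item Group $P$ arbitrarily into $N/k^2$ batches of $k^2$ points each.
\item Compute the upper hull of each batch with Graham's scan, using a cache-oblivious sort. This costs $O(k^2\log k)$ time per batch, hence $O(N\log k)$ in total, and $O(\Split(N,k))$ I/Os in total.
\item Each batch's hull edges come out sorted by slope, and so does the slope list. One merge-like forward scan of the two therefore finds the batch's extreme vertex for every slope, while updating a running best point per slope.
\end{itemize}
Because a batch has $k^2\ge k$ points, this merge costs $O(k^2)$ time per batch. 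For I/Os, when $k^2\ge B$ it costs $O(\Scan(k^2))$ per batch. When $k^2<B$, $\lfloor B/k^2\rfloor$ consecutive batches fit in one block, and so does the slope list. Either way the scanning phase costs $O(N)$ time and $O(\Scan(N))$ I/Os. Since everything is built from cache-oblivious sorting and scanning, the obstacles you flag (adaptive splitters, unbalanced subproblems, per-point $\Theta(m)$ work) never arise.
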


\begin{proof}
  First, we sort (with an optimal CO algorithm) the slopes in decreasing order, $s_1, \dots, s_k$. Since $k$ is small this can be done
  in $\OhOf{\Scan(N)}$ time. 
  Next, we group the points arbitrarily into $\frac{N}{k^2}$ batches of size $k^2$ and compute the upper hull of each batch using Graham Scan. 
  This can be done in $\OhOf{\Split(N,k)}$ time by simple sorting and scanning each batch.

  Let $u_i$ be the upper hull of the $i$-th batch.
  Then, for each $u_i$, we do the following:
  we scan all $s_1, \dots, s_k$ at the same time as edges of $u_i$.
  Observe that the  upper hull edges of $u_i$ are also in the order of
  decreasing slopes. 
  This means that the vertices of $u_i$ that are extreme with respect to $s_1, \dots, s_k$ 
  can be found by just  a forward scan of $u_i$.
  While scanning the slopes we update each slope with the most extreme point seen so far.

  To analyze the I/O complexity, we need to consider two cases: 
  First, consider the case when  $k^2 \ge B$.
  In this case, scanning each $u_i$ forward requires $\OhOf{\Scan(k^2)}$ I/Os.
  As there are $k$ slopes, the total cost per batch is $\OhOf{\Scan(k^2)+1+\Scan(k)} = \OhOf{\Scan(k^2)}$.
  Over all $\frac{N}{k^2}$ batches this sums up to $\OhOf{\Scan(N)}$.
  Now assume $k^2<B$ and let $t = \left\lfloor \frac{B}{k^2}\right\rfloor$.
  In this case, $t$ consecutive batches are loaded with one I/O, and also all $k$ slopes are loaded with one I/O.
  Thus, we can find all  extreme points of $t$ consecutive batches at the same time, resulting in overall $\OhOf{\Scan(N)}$ I/O complexity.
\end{proof}

Thus, in the rest of the proof we assume that 
$p_i$, an extreme point of $C$ in the direction orthogonal to $m_i$ has already been computed. 

\begin{lemma}
    Given $p_i$ in each collection $C_i \subseteq C$, we can prune a fraction of the points in $C_i$ in $\OhOf{\scan{|C|}}$ I/Os.
    This process maintains \cref{eq:pruneinv}, meaning, the invariant still holds after pruning. 
\end{lemma}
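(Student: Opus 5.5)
This is the Kirkpatrick--Seidel pruning step, performed for all collections in one pass. Fix a collection $C_i$ with its arbitrary pairing, median slope $m_i$, and the extreme point $p_i$ of $C$ in the direction orthogonal to $m_i$ (computed via \cref{lem:ex}). The bridge $b_i$ is the upper hull edge of $P$ crossing $\ell_i$. Following the original argument, compare $p_i$ with $\ell_i$. If $p_i$ lies on $\ell_i$, then $p_i$ is an endpoint of $b_i$, or $b_i$ is degenerate, and we record it. Otherwise assume without loss of generality that $p_i$ lies to the left of $\ell_i$; then the slope of $b_i$ is smaller than $m_i$. The key geometric claim I would verify first concerns a pair $(a,b)$ of $C_i$ with $x(a)<x(b)$ and slope at least $m_i$, so at least the slope of $b_i$. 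In this case $a$ cannot be the right endpoint of $b_i$, since the hull to the right of $b_i$ has slopes at most that of $b_i$. Nor can $a$ be the left endpoint, because then $b$ would lie strictly above the supporting line of $b_i$. Hence $a$ can be discarded from $C_i$ without losing any endpoint of $b_i$. Symmetrically, if $p_i$ lies to the right of $\ell_i$, we discard the right point of each pair with slope at most $m_i$. For pairs with vertical separation, or with equal $x$-coordinate, we keep only the higher point, exactly as in Kirkpatrick and Seidel.

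The subtle point is invariant \cref{eq:pruneinv}. The claim must hold for the bridge $b_i$ relative to the whole set $P$, not merely relative to $C_i$. Because $p_i$ is computed as extreme over all of $C$, and $C$ contains $Q$ by the invariant, $p_i$ is extreme over $P$ as well. So the comparison of the slope of $b_i$ with $m_i$ is correct globally. However, a point of $C_i$ discarded as irrelevant for $b_i$ may still be an endpoint of $b_{i-1}$ or $b_{i+1}$, since each point lies in two collections. I would therefore define pruning as removal from $C_i$ only. A point then disappears from $C$ only once every collection containing it has discarded it, and in each such collection it was shown not to be an endpoint of that collection's bridge. The remaining issue is a point that is an endpoint of some $b_j$ but belongs to neither $C_j$ nor $C_{j\pm1}$, because bridges may extend beyond the slab. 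I would handle this by the global extreme point: such an endpoint survives in whatever collections contain it unless it was proven irrelevant for their bridges. Any point of $Q$ lying outside $P_i\cup P_{i+1}$ is itself an endpoint of a bridge of its own slab, so it is preserved in its own collections. Checking this carefully is the step I expect to be the main obstacle.

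For the fraction and the cost: half of the pairs of $C_i$ have slope on the relevant side of $m_i$, so at least $|C_i|/4$ points are removed from $C_i$. All collections are handled by scanning $C$ together with the array of the $m_i$ and $p_i$. Each collection is stored contiguously, so a single simultaneous scan costs $O(\scan{|C|})$ I/Os and linear time. Marking a point that has been removed from all of its collections, and compacting, takes one further scan.
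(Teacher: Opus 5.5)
Your route is the paper's: Kirkpatrick--Seidel pruning inside each $C_i$, a point deleted only from the collection where it was shown irrelevant, and an argument that no point of $Q$ ever leaves $C$. You are more careful than the paper about points in two collections and bridges that extend beyond $P_i\cup P_{i+1}$. The paper's proof tacitly treats an endpoint of $b_i$ as a member of $C_i$, although its own figure caption warns this can fail.

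There is one false step. The claim ``$p_i$ is extreme over $P$ as well'' does not hold. Pruning only protects $Q$, so an upper-hull vertex $v$ of $P$ lying strictly inside a slab (not a bridge endpoint) may already have been deleted. If $m_i$ lies between the slopes of the two hull edges at $v$, the extreme point of $P$ is $v\notin C$, so $p_i\neq v$. After a few rounds $p_i$ need not even be a hull vertex of $P$.

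Your conclusion is still true, for a different reason. By \cref{eq:pruneinv} both endpoints of $b_i$ lie in $C\subseteq P$, and all of $C$ lies on or below the line through $b_i$. So $b_i$ is also the edge of the upper hull of $C$ crossing $\ell_i$. The Kirkpatrick--Seidel criterion applied to the upper hull of $C$ then shows: if $p_i$ is left of $\ell_i$, the slope of $b_i$ is \emph{at most} $m_i$.

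Only ``at most'' is available, not ``smaller''. Suppose the median pair is exactly the two endpoints of $b_i$, which the arbitrary pairing allows. Then $m_i$ equals the slope of $b_i$, and $p_i$ may be its left endpoint. Your rule ``slope at least $m_i$'' then deletes that endpoint from $C_i$. Either prune only pairs of slope strictly larger than $m_i$, as the paper does, or first run the usual Kirkpatrick--Seidel test of whether the maximizers of $y-m_ix$ straddle $\ell_i$.

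The step you flag as the main obstacle is a one-liner, and the extreme point plays no role in it. Let $q\in P_k$ be an endpoint of $b_j$ with $q\notin C_j$, so $\ell_j$ is not a boundary line of the slab containing $P_k$. The segment $b_j$ joins $q$ to a point on the far side of $\ell_j$. It therefore leaves that slab through the boundary line facing $\ell_j$, which is $\ell_{k-1}$ or $\ell_k$. The upper-hull edge above a vertical line is unique, so the bridge over that line is $b_j$ itself. Hence $q$ lies in the corresponding collection ($C_{k-1}$ or $C_k$, both of which contain $P_k$).

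So every $q\in Q$ is an endpoint of the bridge of some collection that initially contains it. Pruning that collection never deletes an endpoint of its own bridge, so $q$ stays in $C$ in every round. With these repairs your cost analysis goes through, once you note that scanning the collections costs $\scan{\sum_i|C_i|}$ with $\sum_i|C_i|\le 2|C|$. This sum, rather than $|C|$ itself, is the quantity that shrinks by a constant factor per round.
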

\begin{proof}
    W.l.o.g., assume $p_i$ is to the left of $\ell_i$ (see \cref{fig:chprune}). 
    Observe that slopes of the bridges $b_1, \dots, b_{2h-1}$ is a decreasing sequence and
    since $C$ contains $Q$, it follows that the slope of  $b_i$ is at most $m_i$.
    We now apply the argument of Kirkpatrick and Seidel~\cite{kirkpatrick-seidel-CH}.
    Consider the pairs in $C_i$ that have slopes larger than $m_i$ (blue pairs in \cref{fig:chprune}).
    Each such pair is guaranteed to have a slope larger than or equal to the slope of $b_i$.
    We prune the left end points of each such pair as the left point cannot be an end point
    of $b_i$.

    However, we also need to verify that we have not pruned any point in $Q$.
    To do that, consider a point~$q \in Q$.
    By definition, there exists a collection $C_i$ such that $q$ is a vertex of the bridge $b_i$ over $\ell_i$.
    If $p_i$ is to the left (resp. right) of $\ell_i$, then it follows that $b_i$ does not have larger (resp. smaller) slope than $m_i$.
    Then, we consider the matched pairs in $C_i$ that have larger (resp. smaller) slope than $m_i$.
    For each matched pair of points $(p,p')$, where the point $p$ has smaller $X$-coordinate than $p'$, we pruned the point $p$ (resp. $p'$).
    However, observe that $q$ cannot be the pruned point. 
    This shows that our invariant is maintained. 
\end{proof}

After a pruning step, we go back to the beginning and pair the points in each $C_i$ arbitrarily again
and iterate. 
At each iteration, we will be pruning a  fraction of the points in each collection.
Thus, the size of the union, $C$,  will be decreasing geometrically.
Once the size decreases by a $\log N$ factor, we can simply compute the convex hull of the resulting
subset and find the bridges explicitly in $\OhOf{\sort{N/\log N}} = \OhOf{\scan{N}}$ I/Os.

\section{Lower Bound for Planar Convex Hull}
\label{sec:ch-LB}

In this section, we show how to adapt the adversarial argument from 
the maxima lower bound to the planar convex hull problem. 

Let  $\Delta$, $\sigma$ be some constants. We will use the notations $\dOmega{\cdot}$ and $\dOh{\cdot}$ to hide
factors that depend on $\Delta$ and $\sigma$; to avoid circular dependencies in the constants, we will use the notations to hide expressions
that depend only on $\Delta$ and $\sigma$ and fixed constants, i.e., constants that do not depend on other parameters in the construction. 
We call a polynomial $F$ \emph{suitable} if it has degree at most $\Delta-1$, and it is defined over $2\sigma$ indeterminates with real-valued coefficients.

\subparagraph*{Predicates.} 
At any moment, the algorithm can choose a suitable polynomial $F$ and a sequence $W$ of $\sigma$ input points, i.e., $W=p_1, \ldots, p_\sigma$,
and query the sign of the evaluation of $F$ on the $X$- and $Y$-coordinates of points $p_1, \ldots, p_\sigma$, i.e., query
$\sign(F(W))$.
We call $\sign(F(W))$ a \emph{geometric predicate}.

As with the previous lower bound, the adversary will decide the result of the predicate and then will inform the algorithm of the result.
Based on the result of the predicate, the algorithm can choose another predicate to be evaluated on a potentially different set of points. 
This process will continue until the algorithm can correctly declare the points on the convex hull.
Observe that a comparison is a simple predicate of degree 1 where $\Delta=1$ and  $\sigma=2$ and it involves only either $X$- or $Y$-coordinates of the two points.  
A standard orientation test is the sign of a 3$\times 3$ matrix that is obtained by placing the $X$- and $Y$-coordinates of three points
as the first two column and then adding a column of ones. Thus, in this case we have $\Delta=2$ and $\sigma=3$.
In this section, we show a lower bound that applies to any algorithm that uses such geometric predicates. 

\subparagraph*{A rough sketch of the proof.}
We show that the same set up that was used in \cref{sec:lowerbound} can be adapted to the convex hull problem as well.
In particular, we define a tree $T$ and the points will be placed on the nodes of $T$.
However, predicates make this process much more complicated because they reveal much more geometric information about the
position of the points than the relative order of the coordinates.
Nonetheless, for the argument in \cref{sec:lowerbound} to go through, we only need to replace two main tools used by the adversary:
one is the \emph{default strategy} used to resolve comparisons (predicates here) and two, \emph{node termination} 
which enables the adversary to essentially prune all but one point in a subtree of $T$.
To do these, we borrow an idea due to Afshani and Cheng~\cite{AC23,AC24} of embedding points very close
to the curve $Y=X^\Delta$. This enables us to approximate any monomial $X^iY^j$ as $X^{i+j\Delta}$ which in turn will convert any
bivariate polynomial of degree less than $\Delta$ into a univariate polynomial of degree less than $\Delta^2$ where distinct monomials
in the bivariate polynomial are mapped to distinct monomials in the univariate polynomials.
We follow this up by 
mapping every node~$v$ in $T$ to small enough geometric regions (squares), $Q(v)$, close to the curve $Y=X^\Delta$ such that 
from the point of view of the algorithm any point $p \in v$ can be anywhere inside~$Q(v)$, i.e.,  without violating any of the predicates chosen
by the algorithm.
Finally, we show that the resolution of predicates can be done by moving points only a constant number of levels down the tree $T$.
We now present the details.

\subparagraph*{The tree $T$.} Let $f=\Delta^2\cdot \Delta^{2\sigma} = \Delta^{2\sigma+2}= \dOh{1}$.
The adversary maintains a tree $T$ with fan out $f$, similar to the lower bound in \cref{sec:lowerbound}
but the adversary will also associate a square with every node of $T$.
Let $Q_0$ be a unit square with its center on the point $(1,1)$ which lies on the curve $Y=X^\Delta$.
$Q_0$ will be associated with the root of $T$.
We will shortly describe how the adversary assigns 
progressively smaller squares that are inside $Q_0$ to every node in~$T$.
We will use the notation $p \in v$ to denote that a point $p$ is placed in a node $v$ of $T$,
the square associated with $v$ is denoted by $Q(v)$ and the depth of $v$ (in $T$)  is denoted by $d(v)$.

\subparagraph*{Assigning squares.} Consider a non-root node $v$ at depth $d(v)$ in the tree $T$. The square~$Q(v)$ will have side length
$2^{-c^{d(v)}}$  where $c$ is a large enough value that will depend on $\Delta$ and~$\sigma$ (as mentioned, we will not ``hide'' $c$ in our asymptotic notations).
Consequently, the squares associated with the children of $v$ will have side length $2^{-c^{d(v)+1}}$ and they will be placed inside~$Q(v)$, spaced equally across the $X$-axis and 
centered on the curve $Y=X^\Delta$ (see \cref{fig:sq} for an example).
We claim that by picking $c$ large enough, we can guarantee that the squares assigned to the children of~$v$ will be fully inside $Q(v)$.
Let $A$ and $B$ be the intersection points of the boundary of $Q(v)$ with the curve~$Y=X^\Delta$.
Observe that the tangent to every point on the curve~$Y=X^\Delta$, which is inside $Q_0$, has a slope between $2^{-\Delta}$ and $1.5^\Delta$
which means that the difference between $X$- or the $Y$-coordinates of the points $A$ and $B$ is at least 
$\OmegaOf{2^{-\Delta} 2^{-c^{d(v)}}}$.
\begin{figure}[t]
    \centering
    \includegraphics[scale=0.7]{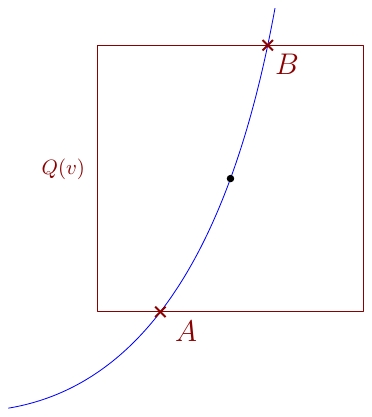}
    \caption{Square $Q(v)$ for node $v$.}
    \label{fig:sq}
\end{figure}
By setting~$c$ large enough we can easily ensure that $f \cdot 2^{-c^{d(v)}} \ge 2^\Delta \cdot f \cdot 2^{-c^{{d(v)}+1}}$. This means that there is enough space inside $Q(v)$ to place the squares of the children of $v$.

As before, any non-empty node $v$ whose ancestors do not contain any points is called a \emph{top node}.
The equivalent of \cref{inv:tree} in this case is the following invariant.
\begin{inv}[The tree invariant]\label{inv:treeinv2}
    For every point $p \in v$, every placement of the point $p$ inside the square $Q(v)$ is consistent with the results of all 
    previous predicates returned by the adversary, meaning, if the algorithm in the past has queried the sign of
    a suitable polynomial $F$ on $\sigma$ points $p_1 \in v_1,, \dots, p_\sigma \in v_\sigma$  and the adversary has returned a sign value $z$, then
    for every point $W \in Q(v_1) \times \dots \times Q(v_\sigma)$,  we have \textup{$\sign(F(W)) = z$}.
\end{inv}

Node termination is captured with the following definition. 

\begin{definition}[Node termination]
  The adversary can \emph{terminate} a top node $v$ by
  picking four points $p_1,p_2,p_3,p_4 \in u$, and declaring that they lie on the corners of 
  $Q(v)$. The adversary also picks some arbitrary coordinates for the points in $T(v)$ within
  their corresponding squares and declares these to the algorithm.
\end{definition}

Note that similar to the previous proof, after the termination of a node $v$, the positions
of all the points in the subtree of $v$ will be known to the algorithm and thus 
we can assume that no predicates involves the coordinates of such points. 
We call these the \emph{pruned points} and any other remaining point an \emph{unpruned points}.

Thus, the heart of the construction is that the adversary can replace the \emph{default strategy} with a similar strategy, captured in the following lemma. 

\begin{lemma}\label{lem:chresolve}
    Consider a predicate $F$ chosen by the algorithm on a set of $\sigma$ points $p_1, \dots, p_\sigma$ such that
    $p_i \in v_i$.
    The adversary can maintain \cref{inv:treeinv2} by moving each  point $p_i$ to a node $u_i$ such that
    $u_i$ is a $t_i$-descendant of $v_1$ for some  $t_i \le \sigma+1$.

    In fact, we show something stronger. The adversary can create an infinite sequence of positive real values 
    $\gamma_1, \gamma_2, \dots$ such that the following holds:
    For every point $W \in Q(u_1)\times \dots \times Q(u_\sigma)$, $|F(W)| \ge \gamma_d$ where
    $d= \max\left\{ d(u_1), \dots, d(u_\sigma) \right\}$. 
\end{lemma}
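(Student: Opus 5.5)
The plan is to process the $\sigma$ points one at a time, in a fixed order, and refine the tree position of each. The guiding idea is the approximation $Y\approx X^\Delta$ near the curve, which reduces the bivariate polynomial to a univariate one with few roots. At each step we fix which child square a point descends to, choosing a child far from the roots of the current restricted polynomial. The doubly exponential shrinking of squares then guarantees that the polynomial keeps a definite sign on the chosen product of squares.

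\textbf{Step 1: substitution and reduction.} Write each point as $p_i=(x_i, x_i^\Delta+\epsilon_i)$ with $|\epsilon_i|$ bounded by the side length of the current square. Since $F$ has degree less than $\Delta$, every monomial $\prod x_i^{a_i}y_i^{b_i}$ becomes $\prod x_i^{a_i+\Delta b_i}$ plus an error term linear in the $\epsilon_i$. Distinct monomials map to distinct exponent vectors, because $a_i<\Delta$. So $F(W)=G(x_1,\dots,x_\sigma)+E$, where $G$ is a nonzero polynomial (if $F\neq 0$) of degree below $\Delta^2$ in each variable. The error satisfies $|E|\le \dOh{1}\cdot\|F\|\cdot\max_i 2^{-c^{d(u_i)}}$.

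\textbf{Step 2: sequential descent.} Normalize $F$ so that its largest coefficient is $1$. Process the points in order, handling $p_1$ first. Regard $G$ as a univariate polynomial in $x_1$ whose coefficients are polynomials in the remaining $x_j$. Some coefficient $g$ is nonzero; by induction on $\sigma$, the remaining points can later be placed so that $|g|\ge\gamma'$.

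A univariate polynomial in $x_1$ of degree below $\Delta^2$ has fewer than $\Delta^2$ roots. Among the $f=\Delta^{2\sigma+2}$ children of the current node, we apply a pigeonhole argument over the $\sigma$ points. This yields, for each point, a descendant at most $\sigma+1$ levels down whose $x$-interval has separation from all roots at least about the child spacing. The spacing is $2^{-c^{d+1}}$ times $\dOmega{1}$.

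On such an interval, a standard lower bound (a polynomial of bounded degree with leading-size coefficient $\gamma'$ whose roots are all $\delta$-far) gives $|G|\ge \gamma'\delta^{\Delta^2}$. This is $2^{-\dOh{1}c^{d+\sigma+1}}$ times lower-level quantities. Defining $\gamma_d$ recursively as this product is legitimate because it depends only on the depth, not on the points.

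Points that do not need to move are still treated as descending, which is why the bound is $t_i\le\sigma+1$. When several $p_i$ share an ancestor relation, we move the deeper ones as well, as in the default strategy.

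\textbf{Step 3: absorbing the error, which is the main obstacle.} We must show $|E|<\gamma_d/2$. Then $F$ has constant sign on $Q(u_1)\times\dots\times Q(u_\sigma)$, so the adversary announces this sign and \cref{inv:treeinv2} holds.

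$\gamma_d$ is roughly $2^{-K c^{d}}$ with $K=\dOh{1}$ independent of $c$, while $|E|\le 2^{-c^{d'}}$ for the deepest square. The difficulty is the depth bookkeeping: the root-separation scale is set at depth $d$, but the error is set by squares that may be up to $\sigma+1$ levels deeper. To make this work, we choose the final descent one extra level below the level used for root separation. Then $|E|\lesssim 2^{-c^{d+1}}$ against $\gamma\approx 2^{-Kc^{d}}$, and choosing $c>K$ finishes the argument. I expect carefully tracking which depth controls which quantity, including the induction over the $\sigma$ coefficient polynomials, to be the delicate part.
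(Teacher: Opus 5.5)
There is a genuine gap: the proposal never handles nodes $v_i$ that lie at very different depths, and that is the heart of the lemma.

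\textbf{The error bookkeeping fails across depths.} Your Step~1 bound $|E|\le\dOh{1}\,\|F\|\max_i 2^{-c^{d(u_i)}}$ is dominated by the \emph{shallowest} square. The main-term lower bound from Step~2, however, is a product of root-separation factors that includes the \emph{deepest} level. Step~3 then says the error is governed by the deepest square, which contradicts Step~1. The ``one extra level'' comparison of $2^{-c^{d+1}}$ against $2^{-Kc^{d}}$ is valid only when all points sit at essentially one depth.

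Here is a concrete failure. Take $\Delta\ge 3$, $\sigma=2$, $v_2$ the root, and $v_1$ at depth $D\gg\sigma$. Let $F=(X_1-a)X_2+\eta(Y_2-X_2)$, where $a$ is the center of the $x$-range of $Q(v_1)$ and $\eta\approx 2^{-c^{D}}$, so $\|F\|=\Theta(1)$.
\begin{itemize}
\item For every refinement, $|G|=\dOh{2^{-c^{D}}}$.
\item Your Step~1 bound is at least of order $2^{-c^{d(u_2)}}\ge 2^{-c^{\sigma+1}}$, so it can never be beaten by $|G|$.
\item The true error $E=\eta\,(y_2-x_2^{\Delta})$ can be of order $2^{-c^{D}-c^{\sigma+1}}$. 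This is far larger than $2^{-c^{D+1}}$, the side of any square strictly below $v_1$, contrary to your Step~3 claim.
\end{itemize}
The lemma still holds here only because the error from $Y_2\approx X_2^{\Delta}$ is multiplied by the same coefficient $\eta$ that controls the main term. An absolute error bound in terms of $\|F\|$ cannot see this.

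\textbf{Step~2 has a matching ordering problem.} You want to choose a child of $v_1$ whose $x$-range avoids the zeros of $G$ viewed as a polynomial in $x_1$. That only makes sense if the coefficients, which are polynomials in the other $x_j$, are already pinned down far more finely than the child spacing at $v_1$. So the deeper points must be fixed first, and the shallowest point must be the outermost variable. Placing the remaining points ``later'' is circular. With an arbitrary labeling it also fails: for $F=X_1-X_2$ with $v_2$ the root, the zero $x_1=x_2$ sweeps across all of $Q(v_1)$.

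\textbf{The missing idea is the paper's nested, relative-error induction.}
\begin{itemize}
\item Relabel so that $d(v_1)\ge\dots\ge d(v_\sigma)$, and push each $v_i$ down to a $(\sigma-i)$-descendant $v'_i$.
\item Decompose $F$ into coefficient sets $G_\sigma=\{F\},G_{\sigma-1},\dots,G_0$, where $G_{i-1}$ consists of the coefficients of the members of $G_i$ written as polynomials in $(X_i,Y_i)$.
\item Choose $u_1,\dots,u_\sigma$ in this order, deepest first. Maintain for each $H\in G_i$ a central value $M_H$ and a deviation bound $\varphi_H$, and track the ratio $\bR_i=\min_H M_H/\varphi_H$.
\item At step $i$, on a good child the main term is at least $\mu M(\delta'_i)^{\Delta^4}$ by \cref{lem:uni}. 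The error is $\dOh{M/\bR_{i-1}+M\delta_i}$. Both scale with the same $M=\max|M_{\ell,k}|$.
\item This gives $\bR_i\ge\min\{\mu\bR_{i-1}(\delta'_i)^{\Delta^4},\ \mu(\delta'_i)^{\Delta^4}/\delta_i\}$. The monotone depths then yield $\bR_\sigma\ge 10$ for large $c$.
\end{itemize}
Your ``one extra level'' observation is exactly the second term of this minimum. The first term, which controls the interaction between levels, is what your proposal lacks. Finally, the absolute bound $|F(W)|\ge\gamma_d$ comes only afterwards, from the separate recursion $M_i\ge\dOmega{M_{i-1}(\delta'_i)^{\Delta^4}}$. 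It is not obtained by comparing an absolute error against $\gamma_d$.
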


\ifArxiv
We prove the above lemma in \cref{sec:chresolve}.
\else
\begin{proof}
The proof is an adaptation of an idea due to Afshani and Cheng~\cite{AC23,AC24} of embedding the points very close
to the curve $Y=X^\Delta$. 
The details are presented in~\cite{full-version}.
\end{proof}
\fi

Observe that as the function $F$ is continuous, the latter claim in the above lemma not only implies that $F$ is non-zero
over $Q(u_1)\times \dots \times Q(u_\sigma)$ but also that its magnitude is lower bounded by some fixed parameter that only
depends on the depth of the deepest node among $u_1, \cdots, u_\sigma$.

We now show that these modifications are sufficient to obtain a lower bound for the convex hull problem.

\begin{theorem}\label{thm:chlb}
    Consider a planar point set $P$ of size $N$ with $H$ convex hull points and let  $\Delta > 0$ and $\sigma > 0$ be integer constants.  
    Consider an algorithm that computes the convex hull of $P$ using predicates which are polynomials of degree less than $\Delta$
    with $2\sigma$ indeterminates, where each predicate is applied to $\sigma$ input points.
    If the algorithm uses $\OhOf{N A_s(H)}$ predicates, then it requires
  $\OmegaOf{\frac{N}{B}(\alpha_1(\min\left\{ H, \sqrt{\frac{N}{M}}\right\})-s)}$ I/Os. 
\end{theorem}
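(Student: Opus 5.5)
The plan is to rerun the adversary argument of \cref{sec:lowerbound} verbatim, changing only its two geometric ingredients. The binary tree is replaced by the fan-out-$f$ tree $T$ with the squares $Q(v)$. The default strategy is replaced by \cref{lem:chresolve}, and node termination by its four-corner version. Everything combinatorial stays as it is: epochs, top nodes, charges with \cref{inv:charge}, the threshold at which a top node has lost half of its ordinary points, activation of the $d_i$-descendants, deep points, the potential function $\Phi$, and \cref{lem:max-nodes,lem:zeta,lem:mainlb}. These never used the geometry beyond ``points move down the tree'' and ``a terminated top node contributes one output point''. As in \cref{sec:lowerbound}, I absorb the constant in $O(NA_s(H))$ by replacing $s$ with $s+c'$.

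The steps, in order, are as follows.

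First, check that \cref{lem:chresolve} plays the role of \cref{strategy:default}. Each predicate moves each of the $\sigma$ involved points at most $\sigma+1$ levels down, and it preserves \cref{inv:treeinv2}. Therefore the count of \cref{lem:prune} goes through with the factor $2$ replaced by $\sigma(\sigma+1)=\dOh{1}$. After at most $NA_s(h_i)$ predicates, the number of points below depth $d_i$ is at most $\dOh{1}\cdot NA_s(h_i)/d(h_i)<N_v/2^{h_i}$, using the slack in $2^{A_s(x)}$ (enlarging $s$ by a constant if needed).

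Second, redo \cref{ob:topnodes} and \cref{lem:numpoints} with $f^{d_i}$ descendants instead of $2^{d_i}$. This only changes $d(\cdot)$ by a constant factor in the exponent, so the growth bound $h_{i+1}<A_{s+1}(h_i)$ still holds after another constant shift of $s$.

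Third, verify that the top nodes are in convex position, so that the number of hull points is at least the number of top nodes. Top-node squares are disjoint, tiny, and centred on the strictly convex curve $Y=X^\Delta$. Square sizes shrink doubly exponentially while the spacing between sibling squares is comparable to the parent's size, so for small enough squares any choice of one point per square is in convex position. For terminated nodes, the four corners of $Q(v)$ are hull vertices of $T(v)$ and contain all other points of $T(v)$. A termination therefore contributes $O(1)$ hull points and the final hull size is $\Theta(\#\text{top nodes})$. I must also ensure the points declared at termination do not violate earlier predicates. This is guaranteed because \cref{inv:treeinv2} lets every point sit anywhere in its square, and the quantitative bound $|F|\ge\gamma_d$ gives robustness.

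Fourth, the I/O charging argument of \cref{thm:maximalb} is purely about points leaving the top node. It transfers unchanged and gives $\Omega(\frac{N}{B}\zeta)$ with $\zeta=(\alpha_1(\ell)-s)/2$.

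I expect the main obstacle to be the third step, the convex-position and termination bookkeeping. I need the hull of the final instance to be bounded by $O(\ell)\le H$, while distinct top nodes lying at very different depths could in principle create a configuration where a small deep square falls inside the hull of a large shallow one. I would handle this by noting that top nodes have no ancestor–descendant relation, so their squares are disjoint. I would then use the curvature estimate, with the tangent slope lying in $[2^{-\Delta},1.5^\Delta]$ together with the spacing choice, to show that each top square sticks out beyond the chord through its neighbours.
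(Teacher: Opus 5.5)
Your proposal matches the paper's proof: rerun the maxima adversary with \cref{lem:chresolve} in place of the default strategy and with four-corner termination, so that each top node contributes at most four hull vertices. (The paper handles the fan-out $f$ by simulating $T$ with a binary tree that has $\log f$ levels per level of $T$, which is equivalent to your direct recount.) The convex-position lower bound you flag as the main obstacle is never needed: both the ``instance has at most $H$ hull points'' claim and the budget-forcing step in \cref{lem:prune} use only the upper bound that $H$ is at most $4$ times the number of top nodes, which you already state, so you can drop that step.
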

\begin{proof}

    The proof is very similar to the proof of \cref{thm:maximalb}. 
    The observation is that in the proof of \cref{thm:maximalb}, only
    two operations, the default strategy and node terminations, give information to the algorithm about the positions of the points and the rest of the proof
    does not really use the fact that we are computing the maxima other than that the adversary maintains a binary tree and each comparison
    moves the points involved in the comparison at most 1 level down the tree. 
    In our case, we are maintaining a tree $T$ of fanout $f = \Delta^{2\sigma+2}$ but it can be simulated with a binary tree $T'$ where 
    each level of $T$ corresponds to $\log(f) = (2\sigma+2)\log(\Delta) = \Theta(\sigma\log(\Delta))$ levels of $T'$.
    \cref{lem:chresolve} directly replaces the \emph{the default} strategy and it can be used by the adversary to resolve
    the predicates.
    However, the resolution of the predicates moves $\sigma$ points, $\sigma+1$ levels down the tree which in the corresponding binary tree $T'$
    corresponds to moving $\sigma$ points $\OhOf{\sigma\log(\Delta)}$ levels down. 

    Next, consider node termination.
    Terminating a node $u$ ensures that the points in $T(u)$ will contribute at most four vertices to the convex  hull. 
    Consequently, if four times the number of top nodes will be an upper bound on the size of the convex hull which again changes only
    the argument in \cref{thm:maximalb} by a constant factor. 

    The rest of the proof is, therefore, identical to the proof of \cref{thm:maximalb}, except any one predicate translates to $\OhOf{\sigma^2\log\Delta} = \dOh{1}$ comparisons, 
    which only changes the internal memory work of the algorithm by a constant factor, and  the same lower bound argument applies. 
\end{proof}

\ifArxiv
\subsection{Proof of Lemma~\ref{lem:chresolve}}
\label{sec:chresolve}

The proof is quite technical and we adapt an idea due to Afshani and Cheng~\cite{AC23,AC24} of embedding the points very close
to the curve $Y=X^\Delta$. 
The details are as follows.
We will use the following theorem proven in~\cite{AC24}.

\begin{restatable}{lemma}{genbase}\label{lem:gen1d}
Consider two univariate polynomials on $x$, 
$P(x)=\sum_{i=0}^\beta a_ix^i$, and 
$Q(x)=\sum_{i=0}^\beta b_ix^i$ of (constant) degree $\beta$.
Assume 
    there is some $i$, $0\le i \le \beta$ such that $|a_i-b_i|\ge\eta$ and
  an interval $\I \subset \R$  such that for every
    $x_0 \in I$ we have $|P(x_0)-Q(x_0)| \le w$.
    Then, length of  $\I$ 
is upper bounded by $\OhOf{(w/\eta)^{1/\U}}$, 
where $\U=\binom{\beta+1}{2}$ and the $\OhOf{\cdot}$-notation hide constant
factors that depend on $\beta$.

\end{restatable}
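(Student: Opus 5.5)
We will prove \cref{lem:gen1d} by reducing to a statement about a single polynomial. Set $D(x)=P(x)-Q(x)=\sum_{i=0}^\beta c_i x^i$ with $c_i=a_i-b_i$. The hypotheses become: some coefficient satisfies $|c_i|\ge\eta$, and $|D(x_0)|\le w$ for all $x_0\in\I$. We must show $|\I|=\OhOf{(w/\eta)^{1/\U}}$.

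The natural tool is Lagrange interpolation on equally spaced nodes. Let $L=|\I|$ and let $\I=[x_L,x_L+L]$. Pick $\beta+1$ nodes $z_j=x_L+jL/\beta$, for $j=0,\dots,\beta$. Since $D$ has degree at most $\beta$, it is determined by its values $D(z_j)$, each bounded by $w$ in absolute value. We write each coefficient $c_i$ as $\sum_j D(z_j)\,\ell_{j,i}$, where $\ell_{j,i}$ is the $x^i$-coefficient of the Lagrange basis polynomial $\ell_j(x)=\prod_{k\ne j}(x-z_k)/(z_j-z_k)$. The denominators are products of $\beta$ gaps, each a multiple of $L/\beta$, so they are at least $(L/\beta)^\beta$ in absolute value up to constants depending on $\beta$.

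The numerator coefficients are elementary symmetric functions of the $z_k$, so they depend on the location of $\I$. This is the main obstacle: if $\I$ is far from the origin, the numerators can be large, and a naive bound fails. The intended exponent $1/\U$, with $\U=\binom{\beta+1}{2}=1+2+\dots+\beta$, suggests an inductive argument over degrees rather than a one-shot interpolation. First I would check whether the lemma is applied only for $x$ in a bounded range, e.g.\ $\I\subset Q_0$'s projection, so that $|z_k|=\OhOf{1}$. If that holds, all numerators are $\OhOf{1}$ and $|c_i|\le \OhOf{w L^{-\beta}}$. This gives $\eta\le \OhOf{w/L^\beta}$, hence $L=\OhOf{(w/\eta)^{1/\beta}}$, which is at least as strong as the claimed bound whenever $w/\eta\le1$. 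When $w/\eta>1$ the claim is immediate because lengths are bounded.

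For the fully general statement I would proceed by induction on $\beta$, peeling the top coefficient. If $|c_\beta|\ge\eta'$ for a threshold $\eta'$ to be chosen, the finite-difference bound gives $|\Delta^\beta_h D|=\beta!\,|c_\beta| h^\beta\le 2^\beta w$ with $h=L/\beta$. This yields $L\lesssim (w/\eta')^{1/\beta}$. Otherwise, the top term contributes at most $\eta' \cdot|x|^\beta$ on $\I$. The degree-$(\beta-1)$ polynomial $D-c_\beta x^\beta$ is then bounded by roughly $w+\eta'$ on $\I$, and the induction hypothesis applies with exponent $1/\binom{\beta}{2}$. Balancing $\eta'$ against $\eta$ costs one extra unit in the exponent denominator per degree, producing $\binom{\beta}{2}+\beta=\binom{\beta+1}{2}=\U$. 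The delicate point is again controlling $|x|^\beta$ on $\I$. I would handle it either by the boundedness of the relevant domain or by translating $\I$ to the origin, which mixes coefficients only by $\beta$-dependent binomial factors. Under the boundedness assumption those factors are harmless.
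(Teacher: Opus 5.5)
Your bounded-domain argument is correct. The boundedness you ask about is not a convenience, though: it is a necessary hypothesis, because as literally stated the lemma is false. Take $P(x)-Q(x)=D(x)=\frac{\eta}{a}(x-a)$. Then $|c_0|=\eta$, and $|D|\le w$ on $[a-aw/\eta,\,a+aw/\eta]$. That interval has length $2aw/\eta\to\infty$ as $a\to\infty$, even when $w<\eta$.

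So the implicit assumption $\I\subseteq[-R,R]$ with $R=\OhOf{1}$ has to be added. It does hold where the paper applies the lemma, since the square centers satisfy $x_j\le 2$. Two consequences for your proposal:
\begin{itemize}
\item Your ``fully general'' induction cannot succeed, since the general statement is false.
\item Your remark that translating $\I$ to the origin mixes coefficients ``only by $\beta$-dependent binomial factors'' is wrong. A shift by $a$ mixes the coefficients with powers of $a$, and it does not preserve the hypothesis that some coefficient has magnitude at least $\eta$.
\end{itemize}
You should drop that part. The balancing induction does correctly reproduce $\U=\binom{\beta}{2}+\beta$, but once boundedness is assumed it is superfluous.

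As for the comparison, the paper gives no proof of its own here. It defers to Afshani and Cheng and remarks only that the argument needs $\beta+1$ well-separated evaluation points. The exponent $\U=\binom{\beta+1}{2}$ is exactly the number of factors in the Vandermonde determinant $\prod_{j<k}(x_k-x_j)\ge\mu^{\U}$. That strongly suggests the cited argument is Cramer's rule, $c_i=\det V_i/\det V$, with $|\det V_i|=\OhOf{w}$ because the nodes are bounded.

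Your Lagrange computation uses the same interpolation idea but bounds the entries of $V^{-1}$ directly. Each basis denominator $\prod_{k\neq j}(z_j-z_k)$ has only $\beta$ factors, so you get $|c_i|=\OhOf{w\mu^{-\beta}}$ and hence $|\I|=\OhOf{(w/\eta)^{1/\beta}}$. For $\beta\ge 2$ and $w\le\eta$ this is stronger than the stated bound. You only use that the nodes are $\mu$-separated, not that they are equally spaced. So the same computation also proves the paper's reformulation, \cref{lem:uni}, with $\mu^{\beta}$ in place of $\mu^{\U}$. The weaker Vandermonde exponent costs the paper nothing only because its constant $c$ can be taken arbitrarily large.
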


We note that the proof of the above lemma does not in fact use the fact that the two polynomials $P$ and $Q$
have the property that $|P(x_0)-Q(x_0)| \le w$ for all the points $x$ in the interval $\I$. 
In particular, the only requirement is that one can find $\beta+1$ values $x_0$ that are sufficiently far apart 
such that $|P(x_0)-Q(x_0)| \le w$.
Due to this, we will in fact use the following formulation of the above lemma. 

\begin{restatable}{lemma}{genbase2}\label{lem:uni}
Consider two univariate polynomials on $x$, 
$P(x)=\sum_{i=0}^\beta a_ix^i$, and 
    $Q(x)=\sum_{i=0}^\beta a_ix^i$ of (constant) degree $\beta$
    such that for  some $i$, $0\le i \le \beta$ we have $|a_i-b_i|\ge\eta$.
    Consider a sequence of values $x_0, \dots, x_\Delta$ such that
    for every $i\ge 1$, we have $x_i \ge x_{i-1} + \mu$ for some parameter $\mu$.
    Then, there exists a value $i$ such that 
    $|P(x_i)-Q(x_i)| = \OmegaOf{\eta \mu^{\U}}$,
    where $\U=\binom{\beta+1}{2}$ and the $\OmegaOf{\cdot}$-notation hides constant
    factors that depend on $\beta$.
\end{restatable}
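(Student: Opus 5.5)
We will prove \cref{lem:uni} by a Vandermonde/interpolation argument applied to the difference polynomial $D(x)=P(x)-Q(x)=\sum_{i=0}^\beta (a_i-b_i)x^i$. We read the statement with $Q$ having coefficients $b_i$ and with $\beta+1$ sample points $x_0,\dots,x_\beta$ (the ``$x_\Delta$'' in the statement we take as $x_\beta$; if more points are given, we simply use the first $\beta+1$). The idea is that the coefficient vector of $D$ is recovered linearly from the values $D(x_0),\dots,D(x_\beta)$. Hence if all these values were tiny, all coefficients would be tiny, contradicting $|a_i-b_i|\ge\eta$.

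Concretely, let $V$ be the $(\beta+1)\times(\beta+1)$ Vandermonde matrix with rows $(1,x_j,\dots,x_j^\beta)$, and let $\mathbf{c}=(a_i-b_i)_i$. Then $V\mathbf{c}=\mathbf{d}$ with $\mathbf{d}=(D(x_j))_j$, so $\mathbf{c}=V^{-1}\mathbf{d}$.

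By Cramer's rule, each entry of $V^{-1}$ is a cofactor divided by $\det V=\prod_{j<k}(x_k-x_j)$. Since consecutive points are separated by at least $\mu$, we have $x_k-x_j\ge (k-j)\mu\ge\mu$, so $|\det V|\ge \mu^{\binom{\beta+1}{2}}=\mu^{\U}$.

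For the cofactors we need an absolute upper bound. In our application all $x_j$ lie in a bounded range (inside $Q_0$, so $|x_j|=O(1)$). Therefore each cofactor, being a determinant of a $\beta\times\beta$ matrix of bounded entries, is $O(1)$ with constants depending on $\beta$. We make this boundedness assumption explicit. Alternatively, we use the Lagrange form, where $V^{-1}$ has entries given by coefficients of $\prod_{k\ne j}(x-x_k)/(x_j-x_k)$. Their numerators are bounded and their denominators are at least $\mu^{\beta}\ge\mu^{\U}$ when $\mu\le 1$, which is the only interesting regime.

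Then $\eta\le|c_i|\le \sum_j |(V^{-1})_{ij}|\,|d_j|\le O(1)\cdot\mu^{-\U}\max_j|D(x_j)|$. Rearranging gives $\max_j|P(x_j)-Q(x_j)|=\Omega(\eta\mu^{\U})$, which is the claim.

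The main obstacle is purely bookkeeping. We must justify that the hidden constant depends only on $\beta$, which requires the points to lie in a bounded interval and $\mu\le1$. If unbounded $x_j$ were allowed, the entries of $V^{-1}$ would behave differently. So we state that the lemma is applied with $x_j\in[1/2,3/2]$, as in the construction around $Q_0$, and note that this suffices for \cref{lem:chresolve}.
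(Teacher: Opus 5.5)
Your Vandermonde/Cramer's-rule argument is correct and is essentially the proof the paper relies on: the paper simply defers to the cited Afshani--Cheng lemma, noting that its proof only needs $\beta+1$ well-separated sample points, and the exponent $\U=\binom{\beta+1}{2}$ there comes exactly from the bound $\bigl|\prod_{j<k}(x_k-x_j)\bigr|\ge\mu^{\U}$ together with $O(1)$ cofactors. Your explicit boundedness hypothesis on the $x_j$ is genuinely necessary, and it holds in the application because the $x_j$ are centers of squares inside $Q_0$. Without it the statement fails: for example, $\epsilon(x-X)$ evaluated at $x_0=X$ and $x_1=X+\mu$ has a coefficient of size $\epsilon X$ but values only $0$ and $\epsilon\mu$. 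Your repairs of the typos ($b_i$ for the coefficients of $Q$, and $\beta+1$ sample points instead of $x_0,\dots,x_\Delta$) are the right reading.
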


    W.l.o.g., we can assume that the sequence of nodes $v_1, \dots, v_\sigma$ have non-increasing depth values (otherwise, we can just relabel them
    decreasingly based on depth).
    Let $v'_i$ be some $(\sigma-i)$-descendant of $v_i$.
    Observe that the sequence $v'_1, \dots, v'_\sigma$ has strictly decreasing depth.
    See \cref{fig:vprime}.
    For each $i, 1 \le i \le \sigma$, the node $u_i$ claimed in the lemma will be a carefully chosen child of $v'_i$.
\begin{figure}[ht]
    \centering
    \includegraphics[scale=0.75]{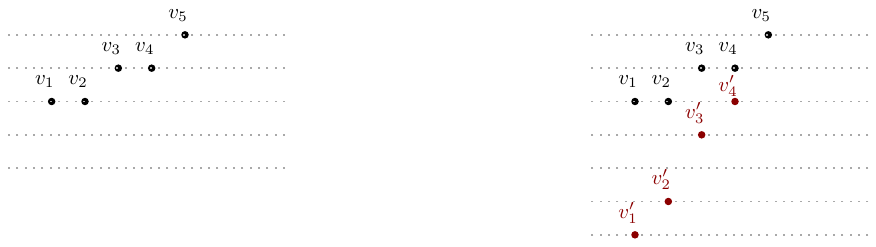}
    \caption{(left) The sequence $v_1, \dots, v_5$ has non-increasing depth. The depth of a node is visualized with a
    gray dashed line. (right) $v'_1$ is some 4-descendant of $v_1$, $v'_2$ is some 3-descendant of $v_2$ and so on. The
resulting sequence $v'_1, \dots, v'_5$ has strictly decreasing depth.}
    \label{fig:vprime}
\end{figure}

    In the remainder of the proof, we will explain how this choice is made.

    Decompose the polynomial $F$ recursively as follows.
    Define $G_\sigma = \left\{ F \right\}$, i.e., a set of polynomials that only contains $F$.
    We define $G_{i-1}$ using $G_i$, assuming $G_i$  is a set of polynomials in $X_1, Y_1, X_2, Y_2, \dots, X_i, Y_i$.
    For every polynomial $P \in G_i$, write $P$ as a polynomial in $X_i$ and $Y_i$ with coefficients
    which are polynomials in $X_1, Y_1, \dots, X_{i-1},Y_{i-1}$; add the coefficients to $G_{i-1}$
    and repeat this for every $P \in G_i$.
    Observe that $G_1$ a set of bivariate polynomials in $X_1$ and $Y_1$ and
    $G_0$  is defined as the set of all the coefficients of the polynomials in $G_1$. 
    Finally, note that as $G_\sigma$ is initialized to contain only $F$, $G_0$ contains the coefficients of $F$ and 
    since $F$ has degree less than $\Delta$ on $2\sigma$ indeterminates, $F$ has fewer than
    $\Delta^{2\sigma}$ monomials and thus for every $j, 0 \le j \le \sigma$, we have
    $|G_j| < \Delta^{2\sigma}$.

    Our proof strategy is to show that we can pick the nodes $u_1, \dots, u_\sigma$, step-by-step 
    to satisfy the requirements in the lemma. 
    We explain the details below.

    \subparagraph*{Step $i$ hypothesis.}
    Let $\Q = Q(u_1)\times \dots \times Q(u_{i})$.
    We assume that we have selected the nodes $u_1, u_{2}, \dots, u_{i}$ such that 
    the following holds. 
    For every polynomial $H$ in $G_i$ (which is a polynomial on $W=(X_1,Y_1, \dots, X_{i},Y_{i})$)
    there exists (fixed) real values $M_H$ and $\varphi_H$ such that the function $\lambda_H(W)$ 
    defined as $\lambda_H(W)  =H(W) - M_H $ has the property that
    for all $W \in \Q$ we have
    $|\lambda_H(W)| \le \varphi_{H}$.
    Define $R_H = \frac{M_H}{\varphi_H}$ 
    and $\bR_i = \min_{H \in G_i}R_H$.

    \subparagraph*{The base case (step $0$ hypothesis).}
    Step $0$ is our base case since we are considering the set $G_0$ which is simply a set of
    of constants and thus each $H \in G_0$ is already a fixed value. 
    Thus, we can take $M_H = H$ and $\varphi_H = 0$.
    Observe that $\bR_{0}$ is unbounded. 

\begin{figure}[h]
    \centering
    \includegraphics[scale=0.7]{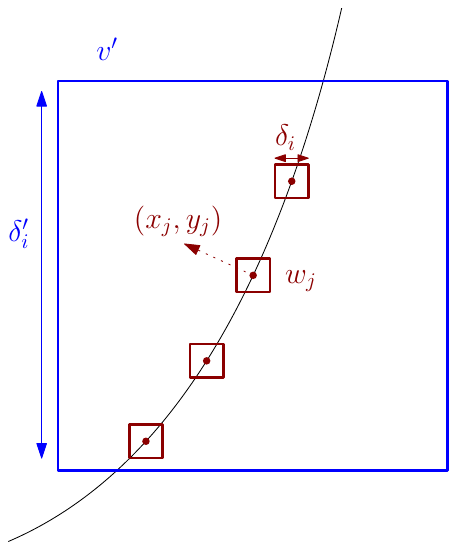}
    \caption{}
    \label{fig:chcurve}
\end{figure}

    \subparagraph*{The inductive step.}
    In this case, we assume step $i-1$ hypothesis holds and we show that we can select
    $u_i$ such that step $i$ hypothesis also holds and based on the value of $\bR_{i-1}$ we obtain a
    lower bound for the value of $\bR_i$.
    To do that, we need to choose $u_i$ as a child of $v'_i$ in a careful way. 
    To reduce the amount of indices, let $v'=v'_i$ and let $w_1, \dots, w_f$ be the children of
    $v'$.
    Consider a polynomial $H \in G_{i}$. 
    Consider $H$ as a polynomial in $(X_i,Y_i)$ with coefficients which are
    polynomials in $G_{i-1}$. 
    Let $\delta'_i$ be the side length of the square $Q(v'_i)$ and let $\delta_i$ be the side length of squares 
    $Q(w_j)$, $1 \le j \le f$. 
    Consider a child $w_j$ and let $(x_j,y_j)$ be the center of $Q(w_j)$ which lies on the curve $Y=X^\Delta$.
    Observe that for every point $(X,Y) \in Q(w_j)$ we have $|X-x_j|,|Y-y_j| \le \delta_i$
    and also $y_j = x_j^\Delta$.
    This in turn means that for every monomial $X^\ell Y^k$, we have 
    \begin{align}
        X^\ell Y^k = x_j^{k\Delta+\ell} + \varepsilon_{\ell,k}(X,Y) \quad \quad \mbox{for all } (X,Y) \in Q(w_j), \label{eq:approx}
    \end{align}
    where $\varepsilon_{\ell,k}(X,Y)$ is a function and observe that $|\varepsilon_{\ell,k}(X,Y)| = \dOh{\delta_i}$
    for every point $(X,Y) \in Q(w_j)$.
    We plug in this approximation in the polynomial $H$.
    First, we write $H$ as a polynomial in $X_i,Y_i$ as 
    \begin{align*}
      H(X_i,Y_i) =      \sum_{\ell,k, \ell+k<\Delta}\alpha_{\ell,k} X_i^\ell Y_i^k,
    \end{align*}
    where each $\alpha_{\ell,k}$ is (a polynomial) in $G_{i-1}$.
    We plug in the approximation from \cref{eq:approx} to obtain
    \begin{align*}
        H(X_i,Y_i) =      \sum_{\ell,k, \ell+k<\Delta}\alpha_{\ell,k}  (x_j^{k\Delta+\ell} + \varepsilon_{\ell,k}(X_i,Y_i)) \quad\quad  \mbox{for all }(X_i,Y_i) \in Q(w_j).
    \end{align*}
    We now use the step $i-1$ hypothesis. 
    To avoid adding extra notation, define $M_{\ell,k} = M_{\alpha_{\ell,k}}$ 
    $\lambda_{\ell,k} = \lambda_{\alpha_{\ell,k}}$  and
    $\varphi_{\ell,k} = \varphi_{\alpha_{\ell,k}}$ 
    which we know to exist from the step $i-1$.
    Note that both $\alpha_{\ell,k}$ and $\lambda_{\ell,k}$ are functions of $X_1, Y_1, \dots, X_{i-1}, Y_{i-1}$ but
    this dependency is not reflected in our notation for brevity. 
    Thus, we can write  $\alpha_{\ell,k} = M_{\ell,k} + \lambda_{\ell,k}$  
    and by the step $i+1$ hypothesis  $|\lambda_{\ell,k}| \le \varphi_{i+1}$.
    Thus, for every $(X_i,Y_i) \in Q(w_j)$ we get that
    \begin{align}
        H(X_i,Y_i) &=      \sum_{\ell,k, \ell+k<\Delta}(M_{\ell,k}+\lambda_{\ell,k})  (x_j^{k\Delta+\ell} + \varepsilon_{\ell,k}) \nonumber \\
        &=     \left(\sum_{\ell,k, \ell+k<\Delta}M_{\ell,k}x_j^{k\Delta+\ell} \right)  \label{eq:poly} \\
        &+ \sum_{\ell,k, \ell+k<\Delta}\left(\lambda_{\ell,k} (x_j^{k\Delta+\ell} + \varepsilon_{\ell,k}) + M_{\ell,k}\varepsilon_{\ell,k}\right).\label{eq:err}
    \end{align}
    Observe that  if we fix $j$, then \cref{eq:poly} is a fixed value (i.e., does not depend on 
    $X$ or $Y$) and thus we can set $M_{H}$ to be the value given by \cref{eq:poly}.
    We consider \cref{eq:err} our error term which in turn defines the function $\lambda_H$.
    The maximum magnitude of $\lambda_H$ will be the value of $\varphi_H$.
    Let $M = \max_{\ell,k}\left\{ M_{\ell,k} \right\}$.
    Observe that in \cref{eq:err}, the value $x_j$ is upper bounded by $2$ and the magnitude of the $\varepsilon_{\ell,k}$ 
    is upper bounded by $\dOh{\delta_i}$.
    In addition, via the definition of $\bR_{i-1}$ we have
    \begin{align*}
        \bR_{i-1} \le \frac{|M_{\ell,k}|}{|\lambda_{\ell,k}|} \le \frac{M}{|\lambda_{\ell,k}|}, 
    \end{align*}
    which implies 
    \begin{align}
        |\lambda_{\ell,k}| \le \frac{M}{\bR_{i-1}}. \label{eq:lammag}
    \end{align}
    Since this holds for every $\lambda_{\ell,k}$ we have
    \begin{align}
        \varphi_H \le \frac{M}{\bR_{i-1}} \label{eq:errphi}.
    \end{align}
    We use these approximations in \cref{eq:err}.
    First observe that 
    \begin{align}
        \Bigl|\sum_{\ell,k, \ell+k<\Delta}  M_{\ell,k}\cdot \varepsilon_{\ell,k}\Bigr| 
        \le  \sum_{\ell,k, \ell+k<\Delta} M\cdot |\varepsilon_{\ell,k}| 
        \le \dOh{M\delta_i}.\label{eq:part1}
    \end{align}
    Next, observe that $|x_j^{k\Delta+\ell}+\epsilon_{\ell,k}| = \dOh{1}$ since $\delta_i < 1$.
    Thus, using \cref{eq:lammag} we get that
    \begin{align}
        \Bigl|\sum_{\ell,k, \ell+k<\Delta}\lambda_{\ell,k} (x_j^{k\Delta+\ell} + \varepsilon_{\ell,k}) \Bigr|
        \le \sum_{\ell,k, \ell+k<\Delta}\dOh{|\lambda_{\ell,k}|}
        = \dOh{\frac{M}{\bR_{i-1}}}.\label{eq:part2}
    \end{align}
    Recall that \cref{eq:err} defines the function $\lambda_{H}$ and thus
    the $|\lambda_H|$ can be upper bounded by summing up \cref{eq:part1} and \cref{eq:part2} which in turn implies that
    for every $(X_i,Y_i) \in Q(w_j)$ we have
    \begin{align}
        |\lambda_H| \le \dOh{\frac{M}{\bR_{i-1}}+ M \delta_i} \label{eq:errmag}.
    \end{align}

    The above approximations are valid with respect to every choice of $j$ but we have not discussed how do we make the
    choice of $j$. 
    The idea is that we can choose an index $j$ such that 
    the value $|M_H|$ is much larger than the value given by \cref{eq:errmag} and then we argue that such a choice will give us
    what we want.
    To choose $j$, 
    we apply the \cref{lem:uni} to \cref{eq:poly}. Observe that
    $v$ has $f$ children whose squares are centered on $x_1, \dots, x_f$ and that the distance between $x_j$ and $x_{j+1}$ is at least $\delta'_i/(2f)$.
    Also note that  if $(\ell,k) \not = (\ell',k')$ then we have $\ell+k\Delta \not= \ell' + k'\Delta$ since
    both $\ell$ and $k$ are strictly smaller than $\Delta$.
    Thus, we can consider \cref{eq:poly} as the evaluation of a univariate polynomial of degree less than $\Delta^2$ with at least one 
    coefficient that has magnitude $M$.
    By \cref{lem:uni},  there can be fewer than $\Delta^2$ points $x_j$ such that the evaluation of the polynomial has 
    magnitude at most $\dOh{M(\delta'_i/f)^{\Delta^4}} = \dOh{M \delta'^{\Delta^4}_i}$. 
    We call any such child $w_j$ \emph{bad (wrt $H$)}, otherwise it is \emph{good}.
    Observe that if $w_j$ is good, then we have
    \begin{align}
        \frac{M_H}{\varphi_H} 
        \ge \dOmega{\frac{ M \delta'^{\Delta^4}_i}{\frac{M}{\bR_{i-1}}+M\delta_i}}  
        \ge \dOmega{\min\left\{ \bR_{i-1}\delta'^{\Delta^4}_i,\frac{\delta'^{\Delta^4}_i}{\delta_i} \right\}} \label{eq:rrec}.
    \end{align}
        
    Observe that for every $H \in G_{i}$, we get fewer than $\Delta^2$ bad children. 
    Since $|G_i| < \Delta^{2s}$ and $f \ge \Delta^2 \Delta^{2s}$, there exists a child $w_j$ which is good with respect to all the polynomials $H \in G_i$. 
    This is the child that we choose and thus we set $u_i = w_j$.
    This implies that the \cref{eq:rrec} holds with respect to every $H\in G_i$ and thus by definition of $\bR_i$ we get that
    \begin{align}
        \bR_{i} \ge    \min\left\{\mu \bR_{i-1}\delta'^{\Delta^4}_i,\mu\frac{\delta'^{\Delta^4}_i}{\delta_i} \right\} \label{eq:polyrec}.
    \end{align}
    Observe that this defines a recursion where
    $\bR_0$ is unbounded (meaning, can be set to any arbitrary large value) and $\mu$ is the hidden constant in the $\dOmega{\cdot}$ notation
    (which depends on $\sigma$ and $\Delta$).
    We would like to obtain a lower bound for $\bR_\sigma$.
    Assume $j$ is the largest index in the sequence $\bR_\sigma, \bR_{\sigma-1}, \dots, \bR_1$ where 
    $\bR_{j-1}\delta'^{\Delta^4}_j > \frac{\delta'^{\Delta^4}_j}{\delta_j}$.
    Since $\bR_0$ is unbounded, such an index always exists.
    This implies that $\bR_{j} \ge \frac{\delta'^{\Delta^4}_j}{\delta_j}$ and in the remaining terms $\bR_{j+1}, \dots, \bR_\sigma$, the first term in the minimization 
    in \cref{eq:polyrec} achieves the minimum.
    Thus, we get that 
    \begin{align}
        \bR_{\sigma} \ge  \min_{j=1, \cdots, \sigma}\left\{    \mu^{\sigma-j} \frac{\left( \delta'_j \delta'_{j+1} \ldots \delta'_\sigma \right)^{\Delta^4}}{\delta_j} \right\}.
    \end{align}
    We now use the property that the nodes $u_j, \dots, u_\sigma$ have strictly decreasing depth. 
    Note that this also implies that sequence $\delta'_j, \dots,\delta'_\sigma$ is strictly increasing
    and thus $\delta'_j$ is the smallest among them. 
    Because of this, we can lower bound each of the terms $\delta'_{j+1}, \cdots, \delta'_{\sigma}$ with $\delta'_j$.
    Thus,
    \begin{align}
        \bR_{\sigma} \ge  \min_{j=1, \cdots, \sigma}\left\{    \mu^{\sigma-j} \frac{\left( \delta'^{\sigma-j+1}_j  \right)^{\Delta^4}}{\delta_j} \right\}.
    \end{align}
    Let $d_j$ be the depth of node $u_j$.
    We thus get that $\delta'_j = 2^{-c^{d_j}}$ but also that
    $\delta_j =  2^{-c^{d_j+1}}$ since $\delta_j$ is the side length of the squares assigned to the children of
    $u_j$.
    We thus get that
    \begin{align*}
           \mu^{\sigma-j} \frac{\left(\delta'^{\sigma-j+1}_j\right)^{\Delta^4}}{\delta_j} \ge \mu^{\sigma-j} \frac{\left( 2^{- c^{d_j}(\sigma-j+1)}\right)^{\Delta^4}}{2^{-c^{d_j+1}}} &=
        \mu^{\sigma-j} \frac{ 2^{- c^{d_j}(\sigma-j+1)\Delta^4}}{2^{-c^{d_j+1}}} \\
            &=\mu^{\sigma-j} \frac{ 2^{- c^{d_j}(\sigma-j+1)\Delta^4}}{2^{-c^{d_j} \cdot c}}.
    \end{align*}
    Now observe that by setting $c$ a large enough constant (depending on $\Delta$ and $\sigma$), we can guarantee that $\bR_\sigma \ge 10$ for every possible choice of $j$.
    Note that this implies that for the function $F$, there exists fixed values $M_F$ and $\varphi_F$ 
    such that for every $W \in Q(u_1)\times \dots \times Q(u_\sigma)$, for the function $\lambda_F = F(W)- M_F$ we have $\frac{M_F}{\lambda_F} \ge 10$.
    This implies that $F$ has the same sign for every such point $W$ and thus it is resolved.

    Finally, it remains to show the stronger property claimed in the lemma.
    Let $M_i = \min_{H \in G_i} |M_H|$.
    Observe that $M_1 \ge 1$ since $F$ is suitable.
    Since we have selected a good node $w_j$, for every $H \in G_i$ we have
    $M_H \ge \dOmega{M_{i-1} \delta'^{\Delta^4}_i}$ and thus
    $M_i \ge \dOmega{M_{i-1} \delta'^{\Delta^4}_i}$.
    This in turns gives another recursion for $M_i$ and thus $M_\sigma$ will solved to a value that depends on~$\sigma$.
    This in turn defines the sequence $\gamma_i$ claimed in the lemma.

\fi

\section{Randomized Algorithm}
\label{sec:rand}

In this section we present a randomized algorithm (\cref{alg:random}) for computing the maxima.
The algorithm is extremely similar to the deterministic algorithm of \cref{alg:upper} with only two differences.
In the original algorithm, first $2h$ subproblems are created and then they are pruned. 
Then, the algorithm does a recursion on the subproblems, in the order $P_1, \ldots, P_{2h}$.
The first difference is that in the modified algorithm, 
with 50\% probability, instead of that we will recurse in the order $P_{2h}, \ldots, P_1$. 
The second difference is that the algorithm is called with the seed $s=2$ at the top level. 
Below, we show that these simple changes lead to an algorithm that has optimal (worst-case) time complexity and optimal expected I/O
complexity, i.e., an optimal randomized cache-oblivious algorithm. 

\begin{algorithm*}[t]
\begin{algorithmic}[1]
\Procedure{Random-Maxima}{$P, h$} 
 \If {$|P| \le 1$}  \Comment{Base case} 
  \State Output $P$ and \Return $|P|$
\EndIf
\State $\Ell =(P_{2h}, \dots, P_{1}) = \Call{Dist}{P,2h}$\Comment{{Distribute $P$ into $2h$ buckets w.r.t.\ $X$-coor.}}\label{code:random-distribute} 
\State $\Call{Prune}{\Ell}$ \Comment{For each $P_i$ prune points dominated by points in $\bigcup_{j=1}^{i-1} P_j$}\label{code:random-prune}
\State Reverse $\Ell$ with probability $\frac 12$ \Comment{Backwards for-loop with prob. $\frac 12$}
\State $\Hbar=0$
\For {each $P_j \in \Ell$}
    \State $H_i = \Call{Random-Maxima}{P_j,h+\Hbar}$
    \State $\Hbar = \Hbar + H_i$
\EndFor
\State \Return $\Hbar$
\EndProcedure
\end{algorithmic}
\caption{Randomized algorithm for computing the maxima of a planar point set $P$ with an integer seed $h\ge 2$.} 
\label{alg:random}
\end{algorithm*}

First observe that the time analysis of \cref{alg:upper} is still valid for the randomized algorithm because 
the order of recursive calls on the subproblems is not relevant to the internal computation time. 
This implies that the randomized algorithm has an optimal worst-case running time. 

As before, let $H$ be the total number of maxima points in the entire input set $P$, 
$N = |P|$, $s$ be the initial seed given to the 
algorithm (i.e., at the top level, the algorithm is called with $h=s = 2$), 
and $H_i$ denote the number of maxima points of $P_i$ (computed during the recursive call on $P_i$).
Define $H^\oplus = H + s$, and
$G_i = h + H_1 + \cdots + H_{i-1}$ and $G'_i = h + H_{i+1} + \cdots + H_{2h}$.
Observe that a forward \textbf{for} loop calls $\Call{Random-Maxima}{P_i,G_i}$, while the
backward \textbf{for} loop calls $\Call{Random-Maxima}{P_i,G'_i}$. Each case happens with 50\% probability, therefore, 
the expected I/O complexity of \Call{Random-Maxima}{$P,h$} is defined by the following recurrence relation
(with the base cases being the same as in \cref{eq:main-detailed}): 
\begin{align}
  \Trand_H(N, h) = \left\{
    \begin{array}{ll}
      \OhOf{\Split(N,h)}  & \text{if } h \ge H, \\
      \OhOf{\Split(N,H)}  & \text{if } H > h \ge m, \\
      \frac{1}{2} \cdot \sum_{i = 1}^{2h} \Trand_{H_i}\left(\frac{N}{2h}, G_i\right) 
         +\frac{1}{2} \cdot \sum_{i = 1}^{2h} \Trand_{H_i}\left(\frac{N}{2h}, G'_i\right)
        \rlap{$\mathrel{+}\OhOf {\Split(N,h)}$} \\
        & \text{if } h < \min(H,m).
    \end{array}
  \right.\label{eq:mainr} 
\end{align}

Let $\imax = \argmax\{H_1, \dots, H_{2h}\}$, i.e., the index of the largest  $H_1, \ldots, H_{2h}$.
The main observation here is that for any index $j \not = \imax$, either $G_j \ge H_j$, or
$G'_j \ge H_j$ (both can be true, and this might also hold for $H_\imax$), i.e., 
the I/O complexity of at least one of the recursive calls 
$\Call{Random-Maxima}{P_j,G_j}$ or 
    $\Call{Random-Maxima}{P_j,G'_j}$ is covered by the base cases. 
For instance, if $G_j \ge H_j$, 
$\Trand_{H_j}(N/(2h),G_j) = \OhOf{\Split(N/(2h),G_j)} = \OhOf{\Split(N/(2h), \Ho)}$ (the last equality follows from $G_j \le \Ho$).
Similarly, when $G'_j \ge H_j$, 
$\Trand_{H_j}(N/(2h),G'_j) = \OhOf{\Split(N/(2h),\Ho)}$.
Then we can rewrite $\Trand_H(N,h)$ as follows by combining the recursive calls defined by the base cases.
\begin{align*}
  \Trand_H(N, h) 
    & = \frac{1}{2} \cdot \left(\sum_{i = 1}^{\imax-1} \Trand_{H_i}\left(\frac{N}{2h}, G_i\right) + \Trand_{H_\imax}\left(\frac{N}{2h}, G_\imax\right)+ \sum_{i = \imax+1}^{2h} \Trand_{H_i}\left(\frac{N}{2h}, G_i\right)\right)\\
    & \mbox{}\hspace{-1.cm} + \frac{1}{2} \cdot \left(\sum_{i = 1}^{\imax-1} \Trand_{H_i}\left(\frac{N}{2h}, G'_i\right) + \Trand_{H_\imax}\left(\frac{N}{2h}, G'_\imax\right)+ \sum_{i = \imax+1}^{2h} \Trand_{H_i}(\frac{N}{2h}, G'_i) \right)
      + \OhOf{\Split(N,h)} \\
    & \mbox{}\hspace{-1.1cm} \le \frac{1}{2} \cdot\left(\Trand_{H_\imax}\left(\frac{N}{2h},G_\imax\right) +\Trand_{H_\imax}\left(\frac{N}{2h},G'_\imax\right) \right) +    \frac{1}{2} \cdot\sum_{i = 1}^{2h} \Trand_{H_i}\left(\frac{N}{2h}, G^{\star}_i\right)   + c \cdot {\Split(N,\Ho)},
\end{align*}
where $G^{\star}_i$ is either $G_i$ or $G'_i$, $c > 0$ is some constant,  and the last inequality follows from $h \le \Ho$.
The main insight here is that we get geometrically decreasing series.
In particular, we claim that $\Trand_H(N,h) \le 4c \cdot \Split(N,\Ho)$, which
can easily be proven by induction:
\begin{align*}
    \Trand_H(N,h) 
     & \le \frac{1}{2} \cdot \left(4c\cdot \Split(\frac{N}{2h},\Ho)  +4c\cdot \Split(\frac{N}{2h},\Ho)\right) \\
     & +   \frac{1}{2} \cdot \left(\sum_{i = 1}^{2h}4c\cdot\Split(\frac{N}{2h},\Ho)\right)   + c \cdot {\Split(N,\Ho)}   \\
     & =   4c(h+1) \cdot \Split(\frac{N}{2h},\Ho)+ c \cdot \Split(N,\Ho) \\
     & \le \frac{4c(h+1)}{2h} \cdot \Split(N,\Ho)+ c \cdot \Split(N,\Ho)  \\
     & =   \left(\frac{4(h+1)}{2h} +1\right) \cdot c \cdot \Split(N,\Ho) \\
     & \le 4c \cdot \Split(N,\Ho) \text{\hspace{6cm} for any } h \ge 2.
\end{align*}
Observe that the same applies to the convex hull algorithm. 
Thus, we have the following theorem.

\begin{restatable}{theorem}{rand-overallIO}\label{thm:rand}
    For a set $P$ of $N$ points in the plane, there exists a 
    randomized cache-oblivious algorithm that finds the maxima of $P$ or
    the convex hull of $P$ in $\OhOf{N \log H}$ worst-case time
    and $\OhOfExp{n \log_m H}$ expected I/Os, 
    where $H$ is the size of the output and $m = M/B$ and $n = N/B$.
\end{restatable}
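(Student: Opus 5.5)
The plan is to assemble \cref{thm:rand} from three ingredients already in place: the worst-case time analysis of \cref{alg:upper}, the expected I/O recurrence \cref{eq:mainr}, and the convex hull adaptation of \cref{sec:cg}.

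\textbf{Time.} Reversing the order of the recursive calls does not change the seed each call receives in a way that affects the total comparison count beyond what \cref{thm:time} already allows. Indeed, every recursive call still receives a seed between $h$ and $h+H$, so the time recurrence is dominated by the same two base branches of \cref{eq:main-detailed} with $m=2$. Hence \cref{thm:time}, applied with initial seed $h=2$, gives $\OhOf{N\log H + N\log 2} = \OhOf{N\log H}$ in the worst case, independent of the coin flips.

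\textbf{Expected I/Os.} I would verify the key combinatorial claim carefully: for every $j\neq\imax$, either $G_j\ge H_j$ or $G'_j\ge H_j$. This holds because $H_\imax\ge H_j$, and $H_\imax$ appears in exactly one of the sums defining $G_j$ and $G'_j$. So the call on $P_j$ falls into the first base case of \cref{lem:recurrence} in at least one of the two orders.

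Next I would restrict the induction hypothesis $\Trand_H(N,h)\le 4c\cdot\Split(N,\Ho)$ to the regime $h<\min(H,m)$. The remaining regimes are covered directly by the base cases, using $h\le\Ho$ and $H\le\Ho$. The one delicate point is that the hypothesis for a subproblem is stated in terms of its own $H_i+G_i$, which is at most $H+h$. I therefore need $\Split(\cdot,\cdot)$ to be monotone in its second argument and $\Ho$ for subproblems to be at most the parent's $\Ho$; I would check that the subproblem's $H_i+G_i$ is bounded by $H+h\le H+s$ along the recursion, possibly redefining $\Ho$ as $H+h$ at each level.

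With that in place, the displayed induction works as follows. The terms of the at most $h+1$ ``expensive'' calls, each weighted $1/2$, together with the $2h$ covered calls, also weighted $1/2$, total at most $(h+1)\cdot 4c\,\Split(N/(2h),\Ho)$. Since $\Split(N/(2h),k)\le \Split(N,k)/(2h)$, this gives a coefficient $2(h+1)/h+1\le 4$ for $h\ge 2$. Finally, $\Split(N,H+2)=\OhOf{n\max\{1,\log_m H\}}$, which is the claimed expected bound $\OhOfExp{n\log_m H}$.

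\textbf{Convex hull.} I would replace the pruning step by the multi-bridge finding procedure of \cref{sec:cg}, including \cref{lem:ex}. Its cost per level is $\OhOf{\Split(N,h)}$ time and I/Os, and it has the same property that buckets containing no hull vertices are fully pruned. The same recurrence \cref{eq:mainr}, with identical base cases, therefore applies. The switch to Graham's scan when $h\ge\sqrt N$ costs $\OhOf{N\log h}$ time and $\OhOf{\Split(N,h)}$ I/Os, which fits the first base case.

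\textbf{Main obstacle.} I expect the hardest step to be justifying that the second base case still yields $\OhOf{\Split(N,\Ho)}$ under the randomized recursion. The argument of \cref{recurrence-basecase2} used the deterministic order only through \cref{eq:split} and the first base case, so I would re-run it verbatim, noting that the bound holds for any order of the recursive calls.
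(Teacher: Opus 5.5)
Your proposal follows the paper's proof essentially step for step: the time analysis is order-independent with seed $2$; for $j\neq\imax$, one of the two orders sends the call on $P_j$ into the first base case; and the induction $\Trand_H(N,h)\le 4c\cdot\Split(N,\Ho)$ closes because $\frac{2(h+1)}{h}+1\le 4$ for $h\ge 2$. Your extra care about a local $\Ho=H+h$ and about the second base case being order-independent fills in points the paper leaves implicit. One bookkeeping slip: there can be up to $2h+1$ uncovered calls (one per $j\neq\imax$ plus both calls on $P_{\imax}$), so it is their total weight, not their number, that is at most $h+1$; and the covered calls must go, together with the distribution cost, into the additive $c\cdot\Split(N,\Ho)$ term rather than into the $(h+1)\cdot 4c\cdot\Split(N/(2h),\Ho)$ term, exactly as the paper does.
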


\bibliographystyle{plainurl}
\bibliography{refs}

\clearpage

\appendix

\crefalias{section}{appendix} 

\ifArxiv
  
\fi

\end{document}